\newcommand{\up}{\mathit{up}}
\newcommand{\low}{\mathit{low}}
\newcommand{\reg}{\mathit{reg}}
\newcommand{\slab}{\mathit{slab}}
\newcommand{\gaplist}{\mathit{gaplist}}
\newcommand{\wt}{\mathit{wt}}
\newcommand{\eps} {\varepsilon}
\newcommand{\real}	{\mathbb{R}}
\newcommand{\cancel}[1] {}
\newenvironment{romani}
{
	
	\begin{enumerate}}
	{\end{enumerate}}
\newtheorem{theorem}{Theorem}[section]
\newtheorem{lemma}{Lemma}[section]
\newtheorem{corollary}{Corollary}[section]
\newbox\ProofSym
\begin{document}

\title{Dynamic Distribution-Sensitive Point Location\thanks{Supported by Research Grants Council, Hong Kong, China (project no.~16201116).  An extended abstract appears in Proceedings of the 36th International Symposium on Computational Geometry, 2020.}}

\author{Siu-Wing Cheng\footnote{Department of Computer Science and Engineering, HKUST, Hong Kong.  Email : {\tt scheng@cse.ust.hk, lmkaa@connect.ust.hk}} \and Man-Kit Lau\footnotemark[2]}

\date{}

\maketitle

\begin{abstract}
	We propose a dynamic data structure for the distribution-sensitive point location problem.  Suppose that there is a fixed query distribution in $\real^2$, and we are given an oracle that can return in $O(1)$ time the probability of a query point falling into a polygonal region of constant complexity.   We can maintain a convex subdivision $\cal S$ with $n$ vertices such that each query is answered in $O(\mathrm{OPT})$ expected time, where OPT is the minimum expected time of the best linear decision tree for point location in $\cal S$.  The space and construction time are $O(n\log^2 n)$.  An update of $\cal S$ as a mixed sequence of $k$ edge insertions and deletions takes $O(k\log^5 n)$ amortized time.  As a corollary, the randomized incremental construction of the Voronoi diagram of $n$ sites can be performed in $O(n\log^5 n)$ expected time so that, during the incremental construction, a nearest neighbor query at any time can be answered optimally with respect to the intermediate Voronoi diagram at that time.
\end{abstract}

\section{Introduction}

Planar point location is a classical problem in computational geometry.  
In the static case, a subdivision is preprocessed into a data structure so that, given a query point, the face containing it can be reported efficiently.  In the dynamic case, the data structure needs to accommodate edge insertions and deletions.   It is assumed that every new edge inserted does not cross any existing edge.  There are well-known worst-case optimal results in the static case~\cite{paper:Adamy1998,paper:Edelsbrunner1986,paper:Kirkpatrick1981,paper:Sarnak1986}.  There has been a long series of results in the dynamic case~\cite{paper:Arge2006,paper:Baumgarten1994,paper:Chan2015,paper:Cheng1992,CPT96,paper:Chiang1992,paper:Goodrich1998,O18,OA18}.
%, and near worst-case optimal query and update times are reported in the work of Chan and Nekrich for connected subdivisions~\cite{paper:Chan2015}.  
%
%We call a planar subdivision \emph{connected} if every bounded face is a simple polygon and the outer subdivision boundary also bounds a simple polygon.  The subclass of \emph{convex} subdivisions---every bounded face is a convex polygon and the outer boundary also bounds a convex polygon---is interesting because Voronoi diagrams are convex subdivisions.  A subdivsion is \emph{general} if the faces may have holes.  
For a dynamic connected subdivision of $n$ vertices, an $O(\log n)$ query time and an $O(\log^{1+\eps} n)$ update time for any $\eps > 0$ can be achieved~\cite{paper:Chan2015}.

When the faces have different probabilities of containing the query point, minimizing the expected query time is a more appropriate goal.  Assume that these probabilities are given or accessible via an oracle.  Arya et~al.~\cite{paper:Arya2007b} and 
Iacono~\cite{paper:Iacono2004} obtained optimal expected query time when the faces have constant complexities.
%proposed point location structures for subdivisions with faces of $O(1)$ complexities.  These structures use $O(n)$ space and answer a query in $O(H)$ expected time, where $H$ is the entropy of the subdivision.  
Later, Collete~et~al.~\cite{paper:Collette2012} obtained the same result for connected subdivisions.  So did 
%It uses $O(n)$ space and ensures an expected query time that is asymptotically the same as the best linear decision tree for the problem.  
%Each node of a linear decision models a point-line comparison which is a popular primitive in many planar point location strucrures.  
Afshani~et~al.~\cite{paper:Afshani2015} and Bose et al.~\cite{paper:Bose2013} for general subdivisions.

In the case that no prior information about the queries is available, Iacono and Mulzer~\cite{paper:Iacono2011} designed a method for triangulations that can process an online query sequence $\sigma$ in time proportional to $n$ plus the entropy of $\sigma$.
%that can process an online query sequence optimally in the comparison-based model.
%$\sigma$ in $O\bigl(n + \sum_t f_t \log\frac{|\sigma|}{f_t}\bigr)$ time, where the sum is over all triangles $t$ and $f_t$ denotes the access frequency of $t$ by query points in $\sigma$.  The time bound includes the preprocessing time.  
We developed solutions for convex and connected subdivisions in a series of work~\cite{cheng2015adaptive,paper:cheng2017adaptive,CL18}.  For convex subdivisions, the processing time is $O(T_{\text{opt}} + n)$, where $T_{\text{opt}}$ is the minimum time needed by a linear decision tree to process~$\sigma$~\cite{paper:cheng2017adaptive}.  For connected subdivisions, the processing time is $O(T_{\text{opt}} + n + |\sigma|\log(\log^* n))$~\cite{CL18}.
%These time bounds include the preprocessing time.

In this paper, we are interested in dynamic distribution-sensitive planar point location.  Such a problem arises when there are online demands for servers that open and close over time, and a nearest server needs to be located for a demand.  For example, walking tourists may look for a facility nearby (e.g.~convenience store) and search on their mobile phones.  The query distribution can be characterized using historical data.  New convenience store may open and existing ones may go out of business.  If we use the Euclidean metric, then we are locating a query point in a dynamic convex subdivision which is a Voronoi diagram.  We are interested in solutions with optimal expected query time.  

We assume that there is an oracle that can return in $O(1)$ time the probability of a query point falling inside a polygonal region of constant complexity.  We propose a data structure for maintaining a convex subdivision $\cal S$ with $n$ vertices such that each query is answered in $O(\mathrm{OPT})$ expected time, where OPT is the minimum expected time of the best \emph{point location decision tree for $\cal S$}, i.e., the best linear decision tree for answering point location queries in $\cal S$ with respect to the fixed underlying query distribution.  An update of $\cal S$ as a mixed sequence of $k$ edge insertions and deletions can be performed in $O(k\log^5 n)$ amortized time.  The space and construction time are $O(n\log^2 n)$.   As a corollary, we can carry out the randomized incremental construction of the Voronoi diagram of $n$ sites so that, during the incremental construction, a nearest neighbor query at any time can be answered optimally with respect to the intermediate Voronoi diagram at that time.  The expected total construction time is $O(n\log^5 n)$ because each site insertion incurs $O(1)$ expected structural changes to the Voronoi diagram.  A key ingredient in our solution is a new data structure, \emph{slab tree}, for maintaining a triangulation with a nearly optimal expected point location time and polylogarithmic amortized update time.  We believe that this data structuring technique is of independent interest and it may find other applications, especially in a distribution-sensitive setting.
%If another oracle is given for sampling a point in $O(1)$ from any vertical slab according to the query distribution, our data structure can be made simpler and better---the space and construction time decrease to $O(n\log n)$, and the amortized update time decreases to $O(k\log^3 n + \log^4n\log\log n)$.  
%The oracle for probability computation and the oracle for point sampling have been used in the previous work of Bose~et~al.~\cite{paper:Bose2013} in the static case.

\section{Overview}

There are two aspects of the maintenance of a convex subdivision $\cal S$ for point location.  First, the maintenance of $\cal S$ and a decomposition of $\cal S$ into simpler shapes in which the point location is carried out.  Second, the maintenance of the point location structure.

The maintenance of a \emph{balanced geodesic triangulation} of a connected planar subdivision has been studied by Goodrich and Tamassia~\cite{paper:Goodrich1997}.  It is shown that every edge update in a planar subdivision can be transformed to $O(\log n)$ edge updates in its balanced geodesic triangulation.  This method fits nicely with our previous use of the DK-triangulation~\cite{Dobkin1990} of $\cal S$ in adaptive point location~\cite{paper:cheng2017adaptive} because a DK-triangulation is a balanced geodesic triangulation.   The preamble of Section~\ref{sec:dynamconv} defines a convex subdivision and the updates to be supported.  Section~\ref{sec:dk} defines the DK-triangulation of $\cal S$ and the performance of Goodrich and Tamassia's structure in our case.

The development of a dynamic distribution-sensitive point location structure for the DK-triangulation of $\cal S$ is the main thrust of this paper.  Sections~\ref{sec:dyntridatastructure} and~\ref{sec:update} are devoted to it.  Theorem~\ref{thm:3} in Section~\ref{sec:general} gives the performance of this dynamic data structure.   Query time is $O(\mathrm{OPT} + \log\log n)$ expected and update time is $O(k\log^4 n)$ amortized, where OPT is the minimum expected query time of the best point location decision tree, $n$ is the number of vertices of $\cal S$, and $k$ is the number of edge updates involved.  In Sections~\ref{sec:pt} and~\ref{subsec:convplsupdate}, we discuss how to apply Theorem~\ref{thm:3} to obtain the main result of this paper, Theorem~\ref{thm:main} in Section~\ref{subsec:convplsupdate}, on dynamic distribution-sensitive point location.   Then, the result on answering queries optimally during the randomized incremental construction of the Voronoi diagram of $n$ points follows as Corollary~\ref{cor:main} in Section~\ref{subsec:convplsupdate}.  Since the expected query time in Theorem~\ref{thm:3} has an $O(\log\log n)$ additive term, more work is needed in applying this result in Section~\ref{sec:pt} in order to obtain an optimal query time.  This is achieved by adapting our previous work~\cite{paper:cheng2017adaptive} to the distribution-sensitive setting.  
%In a ntushell, a hierarchy of triangulations and their point location structures are constructed from the DK-triangulation of $\cal S$.  The triangulation size drops exponentially from one level to the next upper level, by promoting a polylogarithmic number of triangles with the highest probabilities of containing a query point.  The hierarchy serves as a multi-level cache, and point location query will start from the highest level and work downward until the query is answered.  This results in the optimal expected query time as given in Lemma~\ref{lem:dsexpected}, the proof of which is an adaptation of our previous result in~\cite{paper:cheng2017adaptive,CL18}.

The dynamic point location structure for the DK-triangulation of $\cal S$ is developed in three stages.  First, we describe in Section~\ref{sec:dyntridatastructure} the \emph{slab tree} for performing point location in $O(\mathrm{OPT} + \log\log n)$ time in a static triangulation.   Here, we assume that a fixed set $\cal L$ of vertical lines is given such that the vertices of $\cal S$ lie on lines in $\cal L$, but some lines in $\cal L$ may not pass through any vertex of $\cal S$.   This feature is very useful when accommodating updates.   

Section~\ref{sec:update} defines a \emph{triangulation-update} and describes how to perform it in the special case that any new vertex that will appear must lie on some line in a fixed, given set $\cal L$.  It is based on a recursive traversal of the slab tree that performs a merge of the updated portions of the triangulation with the existing information stored at every node visited. The inductive proof of the correctness of this merging process is quite involved, so it is deferred to the appendix.   Lemma~\label{lem:1} in Section~\ref{sec:update} summarizes the performance of this semi-dynamic structure.  

Finally, Section~\ref{sec:general} discusses how to accommodate arbitrary vertex location.  This is achieved by generalizing the slab tree so that: (1)~each internal node has a fan-out of $O(\log n)$ instead of three in the static and semi-dynamic cases, (2)~the children of an internal nodes are classified as light or heavy based on their probabilities of containing a query point, and~(3)~the heavy and light classification will allow us to periodically choose appropriate slab subtrees to be rebuilt.  The above generalization allows vertices at arbitrary locations to be inserted in polylogarithmic amortized time.

\section{Dynamic convex subdivision}
\label{sec:dynamconv}

Let $\cal S$ be a convex subdivision.   Let $\partial {\cal S}$ be the outer boundary of $\cal S$, which bounds a convex polygon.
%Each face of $S$, bounded or exterior, is associated with a unique id.  A point location query is answered by returning the id of the face that contains the query point.
A \emph{general-update sequence} $\Phi$ is a mixed sequence of edge insertions and deletions in $\cal S$ that produces a convex subdivision.
%For $i \geq 1$, we use $S_i$ to denote the subdivision obtained after performing the first $i$ edge updates.  We take $S_0$ to be $S$.  There are some requirements on $\Phi$.
However, the intermediate subdivision after each edge update  is only required to be connected, not necessarily convex.   Vertices may be inserted into or deleted from $\partial {\cal S}$, but the shape of $\partial {\cal S}$ is never altered.   We will present in Sections~\ref{sec:dyntridatastructure}-\ref{sec:general} a dynamic point location structure for a DK-triangulation of $\cal S$ (to be defined below).  
%By Theorem~\ref{thm:3} in Section~\ref{sec:general}, the structure uses $O(n\log n)$ space and $O(n\log n)$ preprocessing time, allows an update of size $k$ in $O(k\log^4 n)$ amortized time, and answers a query in $O(\text{OPT} + \log\log n)$ expected time, where OPT is the minimum expected query time of the best point location decision tree for the current triangulation.  
The performance of this structure is given in Theorem~\ref{thm:3}.   In this section, we show how to apply Theorem~\ref{thm:3} to obtain a dynamic distribution-sensitive point location structure for $\cal S$.

\subsection{Dynamic DK-triangulation}
\label{sec:dk}

Let $P$ be a convex polygon.  Find three vertices $x$, $y$ and $z$ that roughly trisect the boundary of $P$.  This gives a triangle $xyz$.  Next, find a vertex $w$ that roughly bisects the chain delimited by $x$ and $y$.  This gives a triangle $xyw$ adjacent to $xyz$.  We recurse on the other chains to produce a DK-triangulation of $P$~\cite{Dobkin1990}.  It has the property that any line segment inside $P$ intersects $O(\log |P|)$ triangles.  A DK-triangulation of $\cal S$ is obtained by computing the DK-triangulations of its bounded faces.
%
%A \emph{balanced geodesic triangulation} was introduced by Chazelle~et~al.~\cite{chazelle94} to decompose a simple polygon into geodesic triangles (i.e., such a triangle has three convex vertices and each ``side'' is a reflex chain) for answering ray shooting queries efficiently in a simple polygon.  Its complexity is linear in the polygon size.  A balanced geodesic triangulation of a convex polygon is a DK-triangulation.  A balanced geodesic triangulation of a connected subdivision is obtained by putting balanced geodesic triangulations of its bounded faces together.  
Goodrich and Tamassia~\cite{paper:Goodrich1997} proposed a method to maintain a \emph{balanced geodesic triangulation} of a connected subdivision.  %Their method allows insertions and deletions of vertices and edges subject to the condition that any subdivision produced is a connected subdivision. 
We can use it to maintain a DK-triangulation of $\cal S$ because a DK-triangulation is a balanced geodesic triangulation.  By their method, each edge insertion/deletion in $\cal S$ is transformed into $O(\log n)$ edge insertions and deletions in the DK-triangulation of $\cal S$, where $n$ is the number of vertices of $\cal S$.  Consequently, each edge insertion/deletion in $\cal S$ takes $O(\log^2 n)$ time.

\subsection{Point location}
\label{sec:pt}

%We modify the construction of $\Delta_i$ and $D_i$ to suit the distribution-sensitive setting.  The modification of $D_1$ is the main building block of our results.

We modify our adaptive point location structure for static convex subdivisions~\cite{paper:cheng2017adaptive} to make it work for the distribution-sensitive setting.  Compute a DK-triangulation $\Delta_1$ of $\cal S$.  For each triangle $t \in \Delta_1$, use the oracle to compute the probability $\Pr(t)$ of a query point falling into $t$.  This probability is the weight of that triangle.  We call the triangles in $\Delta_1$ \emph{non-dummy} because we will introduce some \emph{dummy} triangles later.

Construct a data structure $D_1$ for $\Delta_1$ with two parts.  The first part of $D_1$ is a new dynamic distribution-sensitve point location structure for triangulations (Theorem~\ref{thm:3}).  The query time of the first part of $D_1$ is $O(\text{OPT}+ \log\log n)$, where OPT is the minimum expected time of the best point location decision tree for $\Delta_1$. The second part can be any dynamic point location structure with $O(\log n)$ query time, provided that its update time is $O(\log^2 n)$ and space is $O(n\log^2 n)$~\cite{paper:Arge2006, paper:Chan2015, paper:Chiang1992, paper:Preparata1989}. 

Next, we build a hierarchy of triangulations and their point location structures from $\Delta_1$.  The triangulation size drops exponentially from one level to the next, by promoting a polylogarithmic number of triangles with the highest probabilities of containing a query point.  The hierarchy serves as a multi-level cache.  A point location query will start from the highest level and work downward until the query is answered.  This results in the optimal expected query time as given in Lemma~\ref{lem:dsexpected} below, the proof of which is an adaptation of our previous result in~\cite{paper:cheng2017adaptive,CL18}.

Specifically, for $i \ge 2$, define $n_i = (\log_2 n_{i-1})^4$ inductively, where $n_1 = n$.  To construct $\Delta_i$ from $\Delta_{i-1}$, extract the non-dummy triangles in $\Delta_{i-1}$ whose probabilities of containing a query point are among the top $(\log_2 n_{i-1})^4$.  For each subset of extracted triangles that lie inside the same bounded face of $\cal S$, compute their convex hull and its DK-triangulation.   These convex hulls are holes  in the polygon $H_i$ with $\partial {\cal S}$ as its outer boundary.  Triangulate $H_i$.  We call the triangles used in triangulating $H_i$ \emph{dummy} and the triangles in the DK-triangulations of the holes of $H_i$ \emph{non-dummy}.   The dummy and non-dummy triangles form the triangulation $\Delta_i$.  The size of $\Delta_i$ is $O(n_i)$.   For each non-dummy triangle $t \in \Delta_i$, set its weight to be $\max\bigl\{\Pr(t), W^*_i/n_i \bigr\}$, where $W^*_i$ is the sum of $\Pr(t)$ over all non-dummy triangles $t$ in $\Delta_i$.  Dummy triangles are given weight $W^*_i/n_i$.    Note that the total weight $W_i$ of all triangles in $\Delta_i$ is $\Theta(W^*_i)$.  Construct a data structure $D_i$ 
%with two parts.  The first part is 
as the point location structure of Iacono~\cite{paper:Iacono2004} for $\Delta_i$, which can answer a query in $O\bigl(\log \frac{W_i}{w_i}\bigr)$ time, where $w_i$ is the weight of the triangle containing the query point.
%and $W_i$ is the total weight of all triangles in $\Delta_i$.  The second part is a worst-case optimal point location structure for $\Delta_i$.   Given a query point, we run the querying procedures of these two parts alternately until one of them locates the query point.  
The query time of $D_i$ is no worse than $O(\log n_i)$ in the worst case as $w_i \geq W_i^*/n_i = \Theta(W_i/n_i)$.

A hierarchy $(\Delta_1,D_1), \ldots, (\Delta_m,D_m)$ is obtained in the end, where the size of $\Delta_m$ is less than some predefined constant.  So $m = O(\log^* n)$.

For $i \geq 2$, label every non-dummy triangle $t \in \Delta_i$ with the id of the bounded face of $\cal S$ that contains it.  If $t$ is located by a query, we can report the corresponding face of $\cal S$.   The labelling of triangles in $\Delta_1$ is done differently in order to allow updates in $\Delta_1$ to be performed efficiently.  For each vertex $p$ of $\cal S$, its incident triangles in $\Delta_1$ are divided into circularly consecutive groups by the incident edges of $p$ in $\cal S$.  Thus, each group lies in a distinct face of $\cal S$ incident to $p$.  We store these groups in clockwise order in a biased search tree $T_p$~\cite{paper:bent85} associated with $p$.  Each group is labelled by the bounded face of $\cal S$ that contains it.  The group weight is the maximum of $1/n$ and the total probability of a query point falling into triangles in that group.  The threshold of $1/n$ prevents the group weight from being too small, allowing $T_p$ to be updated in $O(\log n)$ time.  The query time to locate a group is $O\bigl(\log\frac{W}{w}\bigr)$, where $w$ is the weight of that group and $W$ is the total weight in $T_p$.  Suppose that $D_1$ returns a triangle $t \in \Delta_1$ incident to $p$.  We find the group containing $t$ which tells us the face of $\cal S$ that contains $t$.  If $p$ is a boundary vertex of $\cal S$, there are two edges in $\partial {\cal S}$ incident to $p$, so we can check in $O(1)$ time whether $t$ lies in the exterior face.  Otherwise, we search $T_p$ to find the group containing $t$ in $O\bigl(\log\frac{W}{w}\bigr) = O\bigl(\log \frac{1}{\Pr(t)}\bigr)$ time.  

Given a query point $q$, we first query $D_m$ with $q$.  If a non-dummy triangle is reported by $D_m$, we are done.  Otherwise, we query $D_{m-1}$ and so on.  
%The query time of $D_1$ is $O(\text{OPT} + \log\log n)$.  If we query $D_1$ with $q$ in the end, spending $O\bigl(\log\frac{W}{w_1}\bigr)$ time to find the face id in some biased search tree $T_p$ does not affect the query time asymptotically because $\text{OPT} = \Omega\bigl(\log\frac{W}{w_1}\bigr)$.

%In~\cite{paper:cheng2017adaptive}, there are multiple versions of structures at each level because rebuilding is invoked regularly in order to adapt to the online query sequence.  Here, there is only one version of structures at each level.
%We show that the expected query time of our structure is asymptotically optimal.  
%We will discuss the updates later.

\begin{lemma}
	\label{lem:dsexpected}
	Let ${\cal D} = ((\Delta_1,D_1), \ldots, (\Delta_m,D_m))$ be the data structure maintained for $\cal S$. The expected query time of ${\cal D}$ is $O(\mathrm{OPT})$, where {\rm OPT} is the minimum expected time of the best point location decision tree for $\cal S$.
\end{lemma}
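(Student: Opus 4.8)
The plan is to bound the expected query time by relating the cost of traversing the cache hierarchy to $\mathrm{OPT}$, following the template of~\cite{paper:cheng2017adaptive, CL18} but accounting for the distribution-sensitive setting. Fix the query distribution, and let $T^*$ be the best point location decision tree for $\cal S$, so that $\mathrm{OPT} = \sum_{f} \Pr(f)\cdot \mathrm{depth}_{T^*}(f)$ summed over the faces $f$ of $\cal S$ (or, more conveniently, summed over triangles of $\Delta_1$ weighted by $\Pr(t)$, up to a constant factor using the $O(\log|P|)$ stabbing property of DK-triangulations). The first step is to establish the information-theoretic lower bound $\mathrm{OPT} = \Omega\bigl(\sum_{t \in \Delta_1} \Pr(t)\log\frac{1}{\Pr(t)}\bigr)$, i.e. $\mathrm{OPT}$ is at least a constant times the entropy of the induced distribution on $\Delta_1$; this is standard for linear decision trees locating a point among $n$ regions and is exactly the obstruction that makes the weighted structure $D_i$ of Iacono the right tool.

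Next I would analyze the cost of a single query $q$ in $\cal D$. Let $t_i$ be the triangle of $\Delta_i$ containing $q$; the query walks $D_m, D_{m-1}, \dots$ until it hits a non-dummy triangle, and the work spent at level $i$ is $O\bigl(\log\frac{W_i}{w_i}\bigr)$ where $w_i$ is the weight of $t_i$. Two things need to be shown: (a)~if $q$ lands in a non-dummy triangle at level $i$ (so the query terminates there), then $\sum_{j \ge i} \log\frac{W_j}{w_j}$ telescopes to $O\bigl(\log\frac{1}{\Pr(t)}\bigr) + O(m)$ for the located triangle $t = t_i$, using $n_j = (\log_2 n_{j-1})^4$ and $W_j = \Theta(W_j^*)$, together with the fact that once a triangle survives promotion into $\Delta_j$ its weight there is at least $\Pr(t)$; and (b)~the additive $O(m) = O(\log^* n)$ tail, when weighted by $\Pr(\cdot)$ and summed, contributes only $O(1)$ expected cost, because the probability of not terminating by level $i$ decays fast enough (each level keeps only the $(\log_2 n_{i-1})^4$ heaviest triangles, so the total probability mass in faces not represented at level $i$ is geometrically small). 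Then add the $O\bigl(\log\frac{1}{\Pr(t)}\bigr)$ cost of the biased-search-tree lookup in $T_p$, which has the same form and is absorbed.

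Summing over all query points, the expected query time becomes $O\bigl(\sum_{t \in \Delta_1}\Pr(t)\log\frac{1}{\Pr(t)}\bigr) + O(1) = O(\mathrm{OPT})$ by the lower bound from the first step. The one subtlety I expect to be the main obstacle is part~(a): correctly showing that the weighted query cost ``telescopes'' across levels despite the reweighting $w_i = \max\{\Pr(t), W_i^*/n_i\}$ and the clamping of dummy weights — one must argue that $\log\frac{W_i}{w_i} = O\bigl(\log\frac{W_{i-1}}{w_{i-1}}\bigr) + O(1)$ whenever $t$ is promoted from level $i-1$ to level $i$, i.e. that the selection of the top $(\log_2 n_{i-1})^4$ triangles guarantees a multiplicative drop in $W_i/W_{i-1}$ comparable to the drop in the per-triangle weight threshold. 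This is where the exact choice $n_i = (\log_2 n_{i-1})^4$ is used, and it is essentially the accounting already carried out in~\cite{paper:cheng2017adaptive}; I would reproduce that argument, adjusting the constants and replacing ``number of triangles crossed'' bounds with ``probability mass'' bounds where the distribution enters.
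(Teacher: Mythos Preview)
Your proposal has a genuine gap at its very first step. You claim $\mathrm{OPT} = \Omega\bigl(\sum_{t\in\Delta_1}\Pr(t)\log\frac{1}{\Pr(t)}\bigr)$, but this is false: the decision tree for $\cal S$ outputs the \emph{face of $\cal S$} containing the query, not the triangle of $\Delta_1$, so the information-theoretic lower bound gives only $\mathrm{OPT}\ge H({\cal S})$, not $\mathrm{OPT}\ge H(\Delta_1)$. A single convex face $P$ with $n$ vertices and the uniform distribution already breaks it: $\mathrm{OPT}=O(1)$ (the tree can answer ``$P$'' immediately), while $H(\Delta_1)=\Theta(\log n)$. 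The DK stabbing property does not bridge this gap. Relatedly, your tail accounting is off: when the query terminates at level $i$, the cost of the failed levels $j>i$ is $\sum_{j>i}O(\log n_j)=O(\log\log n_i)$, not $O(m)$; at level~1 this is $\Theta(\log\log n)$, and $\Pr(Q_1)$ need not be small, so the tail does not contribute $O(1)$ in expectation. Also note that $D_1$ is \emph{not} an Iacono structure with per-query cost $O(\log\frac{W_1}{w_1})$; it is the slab tree of Section~\ref{sec:dyntridatastructure} with cost $O(\mathrm{OPT}+\log\log n)$.

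The paper avoids the entropy-of-$\Delta_1$ comparison entirely. It conditions on the event $Q_i$ that the query is answered at level $i$ and compares $\mathrm{E}[D_i\mid Q_i]$ not to an entropy, but to $\mathrm{E}[D_{\Delta_S}\mid Q_i]$, where $\Delta_S$ is a minimum-entropy Steiner triangulation of $\cal S$ and $D_{\Delta_S}$ its optimal decision tree (so $\mathrm{E}[D_{\Delta_S}]=O(\mathrm{OPT})$ by Collette~et~al.). The key step that replaces your telescoping is a counting argument: at most $O(\log n_i)$ leaves of $D_{\Delta_S}$ lie at depth $\le\log_2\log_2 n_i$, and each intersects $O(\log n_i)$ non-dummy triangles of $\Delta_i$, so only $O(\log^2 n_i)$ such triangles can be reached that shallowly. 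Since the $(\log_2 n_i)^4$ heaviest triangles were promoted to level $i{+}1$, conditioned on $Q_i$ the query point lands in one of the remaining triangles, each of probability at most $1/\log^4 n_i$; hence with conditional probability $1-O(1/\log^2 n_i)$ the query needs depth $>\log\log n_i$ in $D_{\Delta_S}$, giving $\mathrm{E}[D_{\Delta_S}\mid Q_i]=\Omega(\log\log n_i)$. This is what absorbs the $O(\log\log n_i)$ overhead and is precisely where the exponent $4$ in $n_i=(\log_2 n_{i-1})^4$ is used---not in a telescoping of weight ratios as you propose.
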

\begin{proof}
We use $\mathrm{E}[{\cal D}]$ to denote the expected time of ${\cal D}$ to return a non-dummy triangle $t$ that contains the query point.   Let $Q_i$ denote the event that $t$ is reported at level $i$.  Let $\mathrm{E}[D_i |Q_i]$ denote the expected query time of $D_i$ conditioned on $Q_i$.

Consider $\mathrm{E}[{\cal D}|Q_i]$.  It must be the case that for $j \in [i+1,m]$, the search in $D_j$ returns a dummy triangle.  Each such search takes $O(\log n_j)$ time. 
It is known that $\sum_{j=i+1}^m \log_2 n_j = O(\log\log n_i)$~\cite[Claim~8]{paper:cheng2017adaptive}.  Therefore,
\[	
\mathrm{E}[{\cal D}| Q_i] \leq \mathrm{E}[D_i | Q_i] + O(\log \log n_i).
\]
For $i > 1$, each non-dummy triangle $t$ in $\Delta_i$ has weight $\max\{\Pr(t), W^*_i/n_i\}$, where $W^*_i$ is the sum of $\Pr(t)$ over all non-dummy triangles $t$ in $\Delta_i$.  Recall that the total weight $W_i$ of all triangles in $\Delta_i$ is  $\Theta(W^*_i)$.  Conditioned on $Q_i$, the probability of a non-dummy triangle $t \in \Delta_i$ containing the query point is $\Pr(t)/W^*_i$.  
% the probability a query point falling into $t$, and each dummy triangle has weight zero.  
The structure $D_i$, being the distribution-sensitive structure of Iacono~\cite{paper:Iacono2004}, guarantees that 
\[
\mathrm{E}[D_i | Q_i] = O\left(\sum_{\text{non-dummy} \, t  \,\in \Delta_i}
\frac{\Pr(t)}{W^*_i}\log\frac{W_i}{\Pr(t)}\right) = O\left(\sum_{\text{non-dummy} \, t  \,\in \Delta_i}
\frac{\Pr(t)}{W^*_i}\log\frac{W^*_i}{\Pr(t)}\right).
\]
According to the information-theoretic lower bound~\cite{paper:Shannon2001}, the rightmost sum above is the minimum expected query time, conditioned on $Q_i$, for returning the non-dummy triangle in $\Delta_i$ that contains the query point.  For $i = 1$, our point location structure in Section~\ref{sec:dyntridatastructure} (Lemma~\ref{lem:dyntrianalysis}) guarantees that $\mathrm{E}[D_1 | Q_1]$ is the asymtoptically minimum expected time plus an extra $O(\log \log n)$ overhead.

Let's do a mental exercise to locate $t$ differently as follows.  Let $\Delta_S$ be the Steiner triangulation of $S$ that has the minimum entropy.\footnote{It suffices for $\Delta_S$ to be a Steiner triangulation of near-minimum entropy as in~\cite[Theorem~2]{paper:Collette2012}.}  Let $D_{\Delta_S}$ denote the linear decision tree that takes the minimum expected time to locate a point in $\Delta_S$.  Query $D_{\Delta_S}$ to identify the triangle $t' \in \Delta_S$ that contains the query point.  Notice that $t'$ lies inside the same bounded face of $S$ that contains $t$.  So $t'$ intersects $O(\log n_i)$ non-dummy triangles in $\Delta_i$.  As a result, the intersection between $t'$ and the non-dummy triangles in $\Delta_i$ consists of $O(\log n_i)$ shapes of $O(1)$ complexities, and we can do a planar point location in $O(\log\log n_i)$ time to find the shape that contains the query point.  This shape is a part of $t$ which means that we have found $t$. The total expected time needed is $\mathrm{E}[D_{\Delta_S} | Q_i] + O(\log\log n_i)$.  Since $\mathrm{E}[D_i | Q_i]$ is the asymptotically minimum expected time to find $t$ in $\Delta_i$ for $i > 1$ and $\mathrm{E}[D_1 | Q_1]$ has an additive $O(\log \log n)$ overhead, we get $\mathrm{E}[D_i|Q_i] \leq O\left(\mathrm{E}[D_{\Delta_S}|Q_i] + \log\log n_i\right)$, which implies that
\[	
\mathrm{E}[{\cal D}| Q_i] \leq \mathrm{E}[D_i | Q_i] + O(\log \log n_i)
\leq O\left(\mathrm{E}[D_{\Delta_S} | Q_i] + \log\log n_i\right).
\]

We simplify the above inequality.  If $i = m$, then $O(\log\log n_m) = O(1)$ because $n_m = O(1)$.  Assume that $i \in [1,m-1]$.  Consider the triangles of $\Delta_S$ that are represented by leaves in $D_{\Delta_S}$ at depth $\log_2 \log_2 n_i$ or less.  Since $D_{\Delta_S}$ is a binary tree, there are $O(\log n_i)$ such leaves in $D_{\Delta_S}$.   Let $Z$ be the subset of non-dummy triangles in $\Delta_i$ that overlap with  such triangles of $\Delta_S$.  Each triangle in $\Delta_S$ interescts $O(\log n_i)$ non-dummy triangles in $\Delta_i$.  It follows that $\vert Z \vert = O(\log^2 n_i)$.  

Recall that level $i+1$ is constructed based on the $(\log_2 n_i)^4$ non-dummy triangles with the highest probabilities in $\Delta_i$.  The query point does not lie inside any of these $(\log_2 n_i)^4$ triangles; otherwise, the query point would be successfully located by $D_j$ for some $j > i$, contradicting the occurrence of the event $Q_i$.  Let $\Delta'_i$ be the subset of triangles in $\Delta_i$ that are not selected for the construction of level $i+1$.  
%For every triangle in $\Delta'_i$, the probability that this triangle contains the query point is at most $1/(\log_2 n_i)^4$.
The query time of $D_{\Delta_S}$ to locate a point in any triangle in $\Delta'_i \setminus Z$ is $\Omega(\log\log n_i)$ by the definition of $Z$.  As the probabilities of triangles in $\Delta_i'$ containing a query point are not among the top $(\log_2 n_i)^4$ in $\Delta_i$, the probability of a triangle in $\Delta_i'$ containing a query point is at most $1/\log_2^4 n_i$.  Therefore, conditioned on $Q_i$, the probability of a query point falling in some triangle in $\Delta'_i \setminus Z$ is at least $1 - |Z|/\log^4 n_i
\geq 1 - O(1/\log^2 n_i)$.  As a result,
\[
\mathrm{E}[D_{\Delta_S} | Q_i] = \Omega\left(\left(1 - \frac{1}{\log^2n_i}\right)\log\log n_i\right)
= \Omega(\log\log n_i).
\]
Therefore, $\mathrm{E}[{\cal D} | Q_i] = O(\mathrm{E}[D_{\Delta_S} | Q_i])$.
Every query must be answered at exactly one level, i.e., $\sum_{i=1}^m \Pr(Q_i) = 1$. Thus,
\[
\mathrm{E}[{\cal D}] = \sum_{i=1}^m \Pr(Q_i) \cdot E[{\cal D}| Q_i]
= O\left(\sum_{i=1}^m \Pr(Q_i) \cdot E[D_{\Delta_S} | Q_i] \right)
= O\left(\mathrm{E}[D_{\Delta_S}]\right).
\]
It is known that $\mathrm{E}[D_{\Delta_S}] = O(\mathrm{OPT})$, where {\rm OPT} denotes the minimum expected time of the best point location decision tree for $S$~\cite{paper:Collette2012}.  We conclude that $\mathrm{E}[{\cal D}] = O(\mathrm{OPT})$.

If $Q_1$ happens, we have some extra work to do---finding the face of $S$ that contains the triangle $t$ of $\Delta_1$ in which the query pont is located.  This requires a search of some biased search tree $T_p$ for some vertex $p$ of $\Delta_1$ in $O\bigl(\log \frac{1}{\Pr(t)}\bigr)$ time.  Note that $\Omega\bigl(\log \frac{1}{\Pr(t)}\bigr)$ is an information-theoretic lower bound for $\mathrm{E}[D_1 | Q_1]$.  Therefore, this extra $O\bigl(\log \frac{1}{\Pr(t)}\bigr)$ search time can be absorbed by $\mathrm{E}[D_1 | Q_1]$ in the above analysis.
\end{proof}

\subsection{General-update sequence}
\label{subsec:convplsupdate}

Let $\Phi$ be a general-update sequence with $k \leq n/2$ edge updates.   We call $k$ the \emph{size} of $\Phi$.  As discussed in Section~\ref{sec:dk}, each edge update in $\cal S$ is transformed into $O(\log n)$ edge updates in $\Delta_1$.  Updating $\Delta_1$ takes $O(k\log^2 n)$ time.  We also update the biased search tree $T_p$ at each vertex $p$ of $\cal S$ affected by the structural changes in $\Delta_1$.  This step also takes $O(k\log^2 n)$ time.

For $i \geq 2$, we recompute $\Delta_i$ from $\Delta_{i-1}$ and then $D_i$ from $\Delta_i$.  By keeping the triangles of $\Delta_1$ in a max-heap according to the triangle probabilities, which can be updated in $O(k\log^2 n)$ time, we can extract the $n_2 = \log_2^4 n$ triangles to form $\Delta_2$ in $O(n_2\log n_2)$ time.   For $i \geq 3$, we scan $\Delta_{i-1}$ to extract the $n_i = \log_2^4 n_{i-1}$ triangles to form $\Delta_i$ in $O(n_{i-1} + n_i\log n_i)$ time.   For $i \geq 2$, constructing $D_i$ takes $O(n_i)$ time~\cite{paper:Iacono2004}.  The total update time of $\Delta_i$ and $D_i$ for $i \geq 2$ is $O\left(\sum_{i \geq 2} \log^4 n_{i-1}\log\log n_{i-1}\right)$, which telescopes to $O(\log^4 n \log\log n)$.

Consider $D_1$.  The second part of $D_1$ is a dynamic point location structure that admits an edge insertion/deletion in $\Delta_1$ in $O(\log^2 n)$ time, giving  $O(k\log^3 n)$ total time.
%There are two alternatives for the first part of $D_1$ as stated in Theorems~\ref{thm:2} and~\ref{thm:3}, depending on whether one is given an oracle for sampling a point in $O(1)$ time from any vertical slab according to the query distribution.  If so, the $O(k\log  n)$ structural changes lead to an $O(k\log^3 n)$ update time of the first part of $D_1$.  If not, the update time increases to $O(k\log^5 n)$.  The total update time is $O(k\log^3 n + \log^4 n\log\log n)$ or $O(k\log^5 n)$, respectively.
By Theorem~\ref{thm:3} in Section~\ref{sec:general}, the update time of the first part of $D_1$ is $O(k\log^5 n)$ amortized.

In the biased search tree $T_p$'s at the vertices $p$ of $\cal S$, there are different weight thresholds of $1/n$ depending on when a threshold was computed. To keep these thresholds within a constant factor of each other,  we rebuild the entire data structure periodically.  Let $n'$ be the number of vertices in the last rebuild.  Let $c < 1/2$ be a constant.  We rebuild when the total number of edge updates in $\cal S$ in all general-update sequences exceeds $cn'$ since the last rebuild.  Due to the first part of $D_1$, rebuidling takes $O(n\log^2 n)$ time by Theorem~\ref{thm:3}.  The second part of $D_1$ can also be constructed in $O(n\log^2 n)$ time.  This results in an extra $O(\log^2 n)$ amortized time per edge update in $\cal S$.

\begin{theorem}
	\label{thm:main}
	Suppose that there is a fixed but unknown query point distribution in $\real^2$, and there is an oracle that returns in $O(1)$ time the probability of a query point falling into a polygonal region of constant complexity.  There exists a dynamic point location structure for maintaining a convex subdivision $\cal S$ of $n$ vertices with the following guarantees.
	\begin{itemize}
		\item Any query can be answered in $O(\mathrm{OPT})$ expected time, where $\mathrm{OPT}$ is the minimum expected query time of the best point location linear decision tree for $\cal S$.
		\item The data structure uses $O(n\log^2 n)$ space, and it can be constructed in $O(n\log^2 n)$ time.
		\item A general-update sequence with size $k \leq n/2$ takes $O(k\log^5 n)$ amortized time.
	\end{itemize}
	%If another oracle is given for sampling a point in $O(1)$ time from any vertical slab according to the query distribution, both the space and construction time decrease to $O(n\log n)$, and the amortized general-update time decreases to $O(k\log^3 n + \log^4n\log\log n)$.
\end{theorem}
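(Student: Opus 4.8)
\emph{Overview.} The plan is to assemble the theorem from three pieces that are already in place: Lemma~\ref{lem:dsexpected} for the query bound, Theorem~\ref{thm:3} for the dynamic point location structure on the DK-triangulation $\Delta_1$, and the update-time accounting developed in the preamble of this subsection. Nothing new has to be proved; the work is to add up the costs and check that the side conditions hold.

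\emph{Query time.} The structure maintained for $\cal S$ is precisely ${\cal D} = ((\Delta_1,D_1),\dots,(\Delta_m,D_m))$ together with the biased search trees $T_p$ at the vertices of $\cal S$. A query descends the cache hierarchy, stops at the first level that reports a non-dummy triangle, and, if that level is $1$, performs one search in a $T_p$ to name the containing face of $\cal S$. By Lemma~\ref{lem:dsexpected} the expected time of this whole process is $O(\mathrm{OPT})$, where $\mathrm{OPT}$ is the minimum expected time of the best point location decision tree for $\cal S$; in particular the additive $O(\log\log n)$ term carried by Theorem~\ref{thm:3} at level $1$ is already absorbed there. This is the first bullet.

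\emph{Space and construction.} I would charge the bound to the dominant component, the first part of $D_1$: by Theorem~\ref{thm:3} it occupies $O(n\log^2 n)$ space and is built in $O(n\log^2 n)$ time. Everything else is at most of this order: the DK-triangulation $\Delta_1$ has $O(n)$ triangles and is computed in $O(n\log n)$ time; the second part of $D_1$ (any of~\cite{paper:Arge2006,paper:Chan2015,paper:Chiang1992,paper:Preparata1989}) fits in $O(n\log^2 n)$ space and build time; the trees $T_p$ together take $O(n\log n)$; and the remaining levels $(\Delta_i,D_i)$, $i\ge 2$, have total size $\sum_{i\ge 2} O(n_i) = o(n)$ and are built in $o(n)$ time. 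Summing gives the second bullet.

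\emph{Update time.} A general-update sequence $\Phi$ of size $k\le n/2$ is turned, by the Goodrich--Tamassia transformation for balanced geodesic triangulations (Section~\ref{sec:dk}), into $O(k\log n)$ edge insertions/deletions in $\Delta_1$. Feeding this count into Theorem~\ref{thm:3}, the first part of $D_1$ is maintained in $O((k\log n)\cdot\log^4 n) = O(k\log^5 n)$ amortized time, which is the dominant term. The remaining costs are lower order and are simply added in: maintaining $\Delta_1$ itself and the second part of $D_1$ cost $O(k\log^2 n)$ and $O(k\log^3 n)$; updating the affected $T_p$'s costs $O(k\log^2 n)$; recomputing $(\Delta_i,D_i)$ for $i\ge 2$ from scratch costs $O(\log^4 n\log\log n)$ per sequence (telescoping over the levels, and absorbed into $O(k\log^5 n)$ since $k\ge 1$); and the periodic rebuild --- triggered once the cumulative number of edge updates since the last rebuild exceeds $cn'$ for a fixed constant $c<1/2$, with $n'$ the vertex count at that rebuild --- costs $O(n\log^2 n)$ by Theorem~\ref{thm:3}, i.e.\ $O(\log^2 n)$ amortized per edge update of $\cal S$. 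Adding these contributions yields $O(k\log^5 n)$ amortized time for $\Phi$, the third bullet.

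\emph{Main obstacle.} The $\log^5$ factor is inherited entirely from upstream: it is the $\log^4$ amortized update bound of Theorem~\ref{thm:3} (the technical core of Sections~\ref{sec:dyntridatastructure}--\ref{sec:general}) composed with the $O(\log n)$ blow-up of the geodesic-triangulation transformation, so the real difficulty lives there rather than in this assembly. Within the present proof the only delicate point is the rebuilding schedule: I would need to verify that charging each $O(n\log^2 n)$ rebuild against the $\Theta(n)$ edge updates that provoke it is legitimate (a standard amortization argument), and that the restriction $k\le n/2$ keeps the current vertex count within a constant factor throughout $\Phi$, so that all the $\log n$ factors, all the $1/n$ thresholds stored in the $T_p$'s, and the $\mathrm{OPT}$-guarantee of Lemma~\ref{lem:dsexpected} remain valid for the whole sequence.
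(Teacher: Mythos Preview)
Your proposal is correct and follows essentially the same assembly as the paper: the discussion in Sections~\ref{sec:pt} and~\ref{subsec:convplsupdate} leading up to the theorem is precisely this accounting, invoking Lemma~\ref{lem:dsexpected} for the query bound, Theorem~\ref{thm:3} for the first part of $D_1$, and summing the remaining (lower-order) costs for $\Delta_1$, the $T_p$'s, the higher levels, the backup structure, and the periodic rebuild. Your identification of the $\log^5$ factor as the composition of the $O(\log n)$ Goodrich--Tamassia blow-up with the $O(\log^4 n)$ amortized bound of Theorem~\ref{thm:3} is exactly the paper's reasoning.
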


%Given Theorem~\ref{thm:main}, we can support the incremental facility location problem that we mentioned in the introduction.

\begin{corollary}
\label{cor:main}
Given the same setting in Theorem~\ref{thm:main}, we can carry out a randomized incremental construction of the Voronoi diagram of $n$ sites in $O(n\log^5 n)$ expected time such that, for every $i \in [1,n]$, any nearest neighbor query after the insertions of the first $i$ sites can be answered in $O(\mathrm{OPT})$ expected time, where {\rm OPT} is the minimum expected query time of the best point location decision tree for the Voronoi diagram of the first $i$ sites.  The expectation of the Voronoi diagram construction time is taken over a uniform distribution of the permutations of the $n$ sites, whereas the expectation of the query time is taken over the query distribution.
\end{corollary}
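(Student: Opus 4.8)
The plan is to reduce Corollary~\ref{cor:main} to Theorem~\ref{thm:main} by viewing the randomized incremental construction (RIC) of the Voronoi diagram as a sequence of general-update sequences applied to the convex subdivision $\cal S$. First I would recall that the Voronoi diagram of a point set is a convex subdivision in the sense of Section~\ref{sec:dynamconv}: each Voronoi cell is convex, so the DK-triangulation machinery and Theorem~\ref{thm:main} apply directly. (If the outer cells are unbounded, one handles this in the standard way by clipping to a sufficiently large bounding box, or by the usual point-at-infinity compactification, so that $\partial{\cal S}$ is a fixed convex polygon as required; I would note this once and move on.) The insertion of the $i$-th site into the Voronoi diagram $\mathrm{Vor}(s_1,\dots,s_{i-1})$ to obtain $\mathrm{Vor}(s_1,\dots,s_i)$ destroys the part of the old diagram inside the new cell of $s_i$ and creates the new cell boundary; in terms of edges of $\cal S$ this is exactly a mixed sequence of edge deletions and insertions, i.e.\ a general-update sequence whose size $k_i$ equals $\Theta(1)$ plus the number of structural changes at step $i$.

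The key quantitative input is the classical fact from randomized incremental construction analysis (Clarkson--Shor / backwards analysis): when the sites are inserted in a uniformly random order, the expected number of structural changes to the Voronoi diagram at the $i$-th insertion is $O(1)$, and hence $\mathrm{E}\bigl[\sum_{i=1}^n k_i\bigr] = O(n)$. I would invoke this as a known result. To apply Theorem~\ref{thm:main} I need each intermediate general-update sequence to have size $k_i \le n_i/2$ where $n_i = \Theta(i)$ is the current number of vertices; since $k_i = O(1)$ in expectation and certainly $k_i \le n_i$ deterministically for all but trivially small $i$, this constraint is essentially free — and for the initial constant-size stage one simply rebuilds from scratch in $O(1)$ time. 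Summing the amortized update bound $O(k_i\log^5 n)$ over all $i$ and taking expectations gives $O\bigl(\log^5 n \cdot \mathrm{E}[\sum_i k_i]\bigr) = O(n\log^5 n)$ expected total time for maintaining the point location structure throughout the construction; the cost of computing the structural changes to the Voronoi diagram itself is subsumed in the standard RIC bound. One subtlety worth addressing explicitly is that amortized bounds compose correctly across a sequence of update batches interleaved with the periodic rebuilds described in Section~\ref{subsec:convplsupdate}: since the rebuild is triggered by a counter on the cumulative number of edge updates and costs $O(n\log^2 n)$, its amortized cost per edge update is $O(\log^2 n)$, which is dominated by $O(\log^5 n)$, so the telescoping still yields $O(n\log^5 n)$ in expectation.

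For the query guarantee, after the insertion of the first $i$ sites the maintained structure is exactly the structure of Theorem~\ref{thm:main} for ${\cal S} = \mathrm{Vor}(s_1,\dots,s_i)$, so any nearest-neighbor query — which is a point location query in the current Voronoi diagram — is answered in $O(\mathrm{OPT})$ expected time with respect to the fixed query distribution, where $\mathrm{OPT}$ is the minimum expected query time of the best point location decision tree for $\mathrm{Vor}(s_1,\dots,s_i)$. Here the two expectations are over independent sources of randomness: the RIC time bound averages over the random insertion order, while the $O(\mathrm{OPT})$ query bound averages over the query distribution and holds for every fixed insertion order (indeed for every intermediate diagram), which is why the statement can assert both simultaneously. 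I would state this separation of the two expectations carefully, as in the corollary's last sentence, since conflating them is the most likely source of confusion. The main obstacle, and it is a mild one, is the bookkeeping: verifying that the $O(1)$-expected-structural-changes bound from the RIC literature translates verbatim into an $O(1)$-expected-size general-update sequence in the sense required by Theorem~\ref{thm:main} (in particular that deletions precede insertions appropriately so that each intermediate subdivision stays connected, and that the bounding-box/unboundedness issue does not inflate $k_i$), and then that linearity of expectation may be pulled through the amortized, rebuild-interleaved update bound. None of these steps is deep, so the proof is short.
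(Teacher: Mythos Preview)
Your proposal is correct and follows the same approach as the paper. In fact the paper gives no explicit proof of this corollary at all; its only justification is the remark in the introduction that ``each site insertion incurs $O(1)$ expected structural changes to the Voronoi diagram,'' which is precisely the Clarkson--Shor backwards-analysis fact you invoke, and everything else is a direct application of Theorem~\ref{thm:main}. Your added bookkeeping (bounding unbounded cells, the $k_i\le n_i/2$ constraint, composing amortized bounds across rebuilds, separating the two sources of expectation) is more careful than anything the paper writes down and is all sound.
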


\section{Slab tree: fixed vertical lines}
\label{sec:dyntridatastructure}

In this section, we present a static data structure for distribution-sensitive point location in a triangulation.  Its dynamization will be discussed in Sections~\ref{sec:update} and~\ref{sec:general}.

For any region $R \subset \real^2$, let $\Pr(R)$ denote the probability of a query point falling into $R$.   Let $\Delta$ be a triangulation with a convex outer boundary.  The vertices of $\Delta$ lie on a given set $\cal L$ of vertical lines, but some line in $\cal L$ may not pass through any vertex of $\Delta$.
%\footnote{The procedure can be simplified by focusing only on those lines that contain vertices of $\Delta_B$.  We keep it general and use all lines in $\cal L$ because such a setting is needed in Section~\ref{sec:general}.}   
For simplicity, we assume that no two vertices of $\Delta$ lie on the same vertical  line at any time.  
%When we discuss triangulation-updates in Section~\ref{sec:update}, any new vertex must lie on a line in $\cal L$.  We will discuss in Section~\ref{sec:general} how to allow arbitrary vertex locations.

\begin{figure}[h]
	\centering
	\includegraphics[scale=0.55]{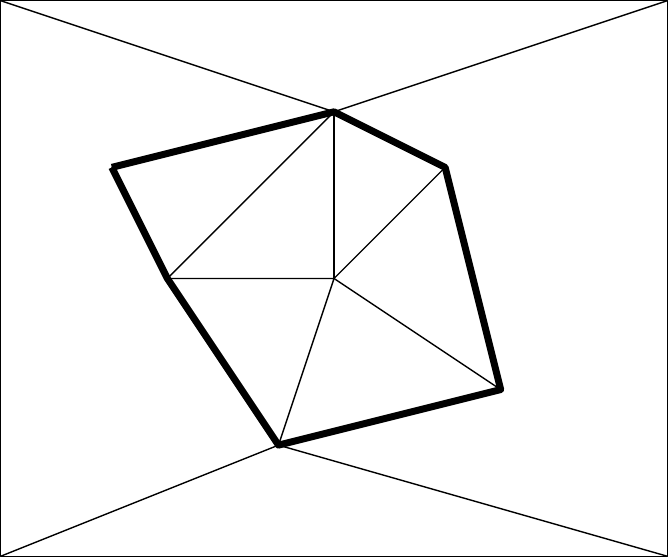}
	\caption{
		$\Delta_B$: the bold convex polygon bounds $\Delta$; the enclosing rectange is $B$; the two simple polygons in $B \setminus \Delta$ are triangulated by the algorithm of Hershberger and Suri.
		%The left figure shows a triangulation $\Delta_B$. The right figure shows the root vertical slab $v_{root}$ of ${\cal T}$. The region bounded by $v_{root}$ equals to $B$. The two light gray regions and the red region in the right figure are the two free gaps and the one blocked gap in $R(v_{root})$ respectively.}
	} 
	\label{fg:gaps_of_root}
\end{figure}

Enclose $\Delta$ with an axis-aligned bounding box $B$ such that no vertex of $\Delta$ lies on the boundary of $B$.   We assume that the left and right sides of $B$ lie on the leftmost and rightmost lines in $\cal L$.  Connect the highest vertex of $\Delta$ to the upper left and upper right corners of $B$, and then connect the lowest vertex of $\Delta$ to the lower left and lower right corners of $B$.  This splits $B \setminus \Delta$ into two triangles and two simple polygons.   Figure~\ref{fg:gaps_of_root} gives an example.  
The two simple polygons are triangulated using the method of Hershberger and Suri~\cite{paper:Hershberger1993}.   Let $\Delta_B$ denote the triangle tiling of $B$ formed by $\Delta$ and the triangulation of $B \setminus \Delta$.  Let $n$ denote the number of triangles in $\Delta_B$.  Any line segment in $B \setminus \Delta$ intersects $O(\log n)$ triangles in $\Delta_B$~\cite{paper:Hershberger1993}.  When we discuss updates in $\Delta$ later, the portion $\Delta_B \setminus \Delta$ of the tiling will not change although new vertices may be inserted into the outer boundary of $\Delta$.

Let $P$ denote the convex polygon bounded by the outer boundary of $\Delta$ and let ${\cal S}_P$ denote the subdivision with a single region $P$. We obtain a point location data structure $D_P$ of ${\cal S}_P$ by the method of Collette~et~al.~\cite{paper:Collette2012}. We use $D_P$ to determine if a query point falls outside $\Delta$.

\subsection{Structure definition}
\label{sec:struct}

%\subsubsection{Slab tree}
%\label{subsec:slabtree}

%Our point location structure is a multi-level tree structure.   
Let $(l_1,l_2,\cdots,l_{|{\cal L}|})$ be the vertical lines in $\cal L$ in left-to-right order.  
%These lines induce a hierarchical partition of $\real^2$ into vertical slabs.  The primary tree is similar to a segment tree organized on these slabs.  We call it the \emph{slab tree} and denote it by $\cal T$.     For every node $v$ of ${\cal T}$, we use $\slab(v)$ to denote the slab represented by it.  
We build the \emph{slab tree} $\cal T$ as follows.
The root of $\cal T$ represents the slab bounded by $l_1$ and $l_{|{\cal L}|}$.   The rest of $\cal T$ is recursively defined by constructing at most three children for every node $v$ of ${\cal T}$.  

We use $\slab(v)$ to denote the slab represented by $v$.  Let $(l_i,\cdots,l_k)$ be the subsequence of lines that intersect $\slab(v)$.   Choose $j \in [i,k)$ such that both the probabilties of a query point falling between $l_i$ and $l_j$ and between $l_{j+1}$ and $l_k$ are at most $\Pr(\slab(v))/2$.  Create the nodes $v_L$, $v_M$, and $v_R$ as the left, middle, and right children of $v$, respectively, where $\slab(v_L)$ is bounded by $l_i$ and $l_j$, $\slab(v_M)$ is bounded by $l_j$ and $l_{j+1}$, and $\slab(v_R)$ is bounded by $l_{j+1}$ and $l_k$.    No vertex of $\Delta_B$ lies in the interior of $v_M$.

The recursive expansion of $\cal T$ bottoms out at a node $v$ if $v$ is at depth $\log_2 n$ or $\slab(v)$ contains no vertex of $\Delta_B$ in its interior.   So the middle child of a  node is always a leaf.

%\subsubsection{Gaps in a slab}
%\label{subsec:slabgaps}

Every node $v$ of $\cal T$ stores several secondary structures.  A connected region $R \subset \real^2$ \emph{spans} $v$ if there is a path $\rho \subset R \cap \slab(v)$ that intersects both bounding lines of $\slab(v)$.  The triangulation $\Delta_B$ induces a partition of  $\slab(v)$ into three types of regions:

\begin{itemize}
	
	\item {\sc Free Gap}:~For all triangle $t$ that spans $v$ but not $\mathit{parent}(v)$, $t \cap \slab(v)$ is a \emph{free gap} of $v$. 
	
	\item {\sc Blocked Gap}:~Let $E$ be the subset of all edges and triangles in $\Delta_B$ that intersect $\slab(v)$ but do not span $v$.  Every connected component in the intersection between $\slab(v)$ and the union of edges and triangles in $E$ is a \emph{blocked gap} of $v$.
	
	\item {\sc Shadow Gap}:~Take the union of the free gaps of all proper ancestors of $v$.  Each connected component in the intersection between this union and $\slab(v)$ is a \emph{shadow gap} of $v$.

\end{itemize}

\begin{figure}
	\centering
	\begin{tabular}{ccccccc}
		\includegraphics[scale=0.55]{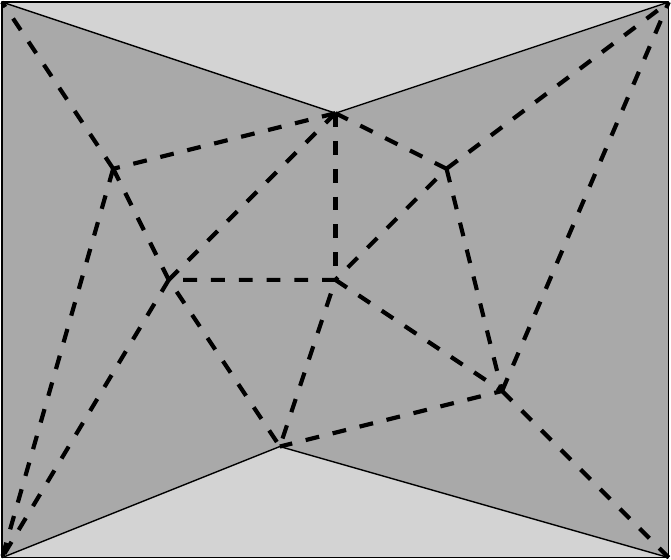} & &
		\includegraphics[scale=0.8]{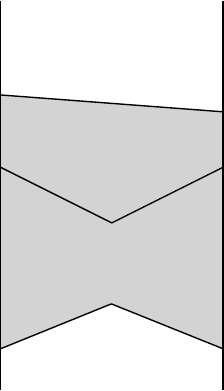} & &
		\includegraphics[scale=0.8]{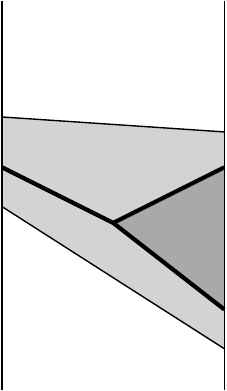} \\
		(a) & \hspace*{.2in} & (b) & \hspace*{.2in} & (c) 
	\end{tabular}
	\caption{(a)~Gaps of the root of $\cal T$: each light gray region is a free gap; the dark gray region is a blocked gap; the dashed segments show the triangles in $\Delta_B$ within the dark gray region. (b)~The two shaded gaps are adjacent. (c)~The two gaps in light gray are not adjacent; the gap consisting of the dark gray region and the bold edges is adjacent to the two gaps in light gray.}
	\label{fg:adjacent_gap}
\end{figure}

%Figure~\ref{fg:adjacent_gap}(a) gives an example.  
%Each gap of $v$ is bounded by the two vertical sides of $\slab(v)$, an upper polygonal chain, and a lower polygonal chain.  
%Although a blocked gap $g$ may overlap with many triangles, each of its boundary chain has at most two edges.  
The upper boundary of a blocked gap $g$ has at most two edges, and so does the lower boundary of $g$.  If not, there would be a triangle $t$ outside $g$ that touches $g$, intersects $\slab(v)$, and does not span $v$.  But then $t$ should have been included in $g$, a contradiction.\footnote{There is one exception: when a blocked gap boundary contains a boundary edge $e$ of $\Delta$, updates may insert new vertices in the interior of $e$, splitting $e$ into collinear boundary edges.  However, the portion $\Delta_B \setminus \Delta$ of the triangle tiling remains fixed.  We ignore this exception to simplify the presentation.}

Two gaps of $v$ are \emph{adjacent} if the lower boundary of one is the other's upper boundary.  Figures~\ref{fg:adjacent_gap}(a)--(c) show some examples.

The list of free and blocked gaps of $v$ are stored in  vertical order in a balanced search tree, denoted by $\gaplist(v)$.  Group the gaps in $\gaplist(v)$ into maximal contiguous subsequences.  Store each such subsequence in a biased search tree~\cite{paper:bent85} which allows an item with weight $w$ to be accessed in $O\bigl(\log\frac{W}{w}\bigr)$ time, where $W$ is the total weight of all items.  The weight of a gap $g$ set to be $\Pr(g)$.    We call each such biased search tree a \emph{gap tree} of $v$.

For every internal node $v$ of $\cal T$, we set up some pointers from the gaps of $v$ to  the gap trees of the children of $v$ as follows.  Let $w$ be a child of $v$.  The free gaps of $v$ only give rise to shadow gaps of $w$, so they do not induce any item in $\gaplist(w)$.   Every blocked gap $g$ of $v$ gives rise to a contiguous sequence $\sigma$ of free and blocked gaps of $w$.   Moreover, $\sigma$ is maximal in $\gaplist(w)$ because $g$ is not adjacent to any other blocked gap of $v$.  Therefore, $\sigma$ is stored as one gap tree $T_\sigma$ of $w$.  We keep a pointer from $g$ to the root of $T_\sigma$.  
%So $g$ has at most three such pointers because $v$ has at most three children.

%\subsubsection{Worst case optimal point location structures}

Since we truncate the recursive expansion of the slab tree $\cal T$ at depth $\log_2 n$, we may not be able to answer every query using $\cal T$.  We need a backup which is a dynamic point location structure ${\cal T}^*$~\cite{paper:Arge2006, paper:Chan2015, paper:Chiang1992, paper:Preparata1989}.  Any worst-case dynamic point location structure with $O(\log n)$ query time suffices, provided that its update time is $O(\log^2 n)$ and its space is $O(n\log n)$.  
%It is because the space and update time of our slab tree structure will not be smaller.

\subsection{Querying}
\label{subsec:loglognanalysis}

Given a query point $q$, we determine if $q$ lies inside $\Delta$ by $D_P$. If $q$ lies outside $\Delta$, we output that $q$ lies outside ${\cal S}$. If $q$ lies inside $\Delta$, we start at the root $r$ of $\cal T$, and $q$ must lie in a gap stored in the only gap tree of $r$.  In general, when we visit a node $v$ of $\cal T$, we also know a gap tree $T_v$ of $v$ such that $q$ lies in one of the gaps in $T_v$.  We search $T_v$ to locate the gap, say $g$, that contains $q$.  If $g$ is a free gap, the search terminates because we have located a triangle in $\Delta_B$ that contains $q$.  Suppose that $g$ is a blocked gap.  Then, we check in $O(1)$ time which child $w$ of $v$ satisfies $q \in \slab(w)$.  By construction, $g$ contains a pointer to the gap tree $T_w$ of $w$ that stores the free and blocked gaps of $w$ in $g \cap \slab(w)$.  We jump to $T_w$ to continue the search.  If the search reaches a leaf of $\cal T$ without locating a triangle of $\Delta_B$, we answer the query using ${\cal T}^*$.

We need the following technical result to analyze the expected query time.

\begin{lemma}
	\label{lem:loglognoverhead}
	Let {\rm OPT} be the expected query time of the best point location decision tree for $\Delta$.  Let $H(\Delta_B)$ be the entropy of $\Delta_B$.   Then, $H(\Delta_B) = O(\mathrm{OPT} + \log \log n)$.
\end{lemma}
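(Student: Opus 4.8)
My plan is to bound $H(\Delta_B)$ from above by exhibiting a single linear decision tree $D$ that performs point location in $\Delta_B$ — i.e.\ one whose leaves are the triangles of $\Delta_B$ — and whose expected cost is $O(\mathrm{OPT}+\log\log n)$. Since the root-to-leaf paths of any such tree form a prefix code for the random triangle of $\Delta_B$ containing a query point $q$, Shannon's source-coding inequality gives $H(\Delta_B)=O(\mathrm{E}[D])$, and the lemma follows. Recall the structure of $\Delta_B$: each triangle of $\Delta$ is also a triangle of $\Delta_B$; $\Delta_B\setminus\Delta$ consists of two triangles together with the Hershberger--Suri triangulations of the two simple polygons into which $B\setminus\Delta$ is split; and any segment lying in $B\setminus\Delta$ meets only $O(\log n)$ triangles of $\Delta_B$. (We may assume the query distribution is supported on $B$; otherwise the event $q\notin B$, with $\Delta\subseteq B$, adds only an $O(1)$ term to $H(\Delta_B)$ and is disposed of by one halfspace test.)

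The tree $D$ first runs the optimal point-location decision tree $D^{*}$ for $\Delta$, at expected cost $\mathrm{E}[D^{*}]=\mathrm{OPT}$; $D^{*}$ either returns the triangle of $\Delta$ — equivalently, of $\Delta_B$ — that contains $q$, in which case $D$ stops, or it reports that $q$ lies in $B\setminus\Delta$. In the latter case the root-to-leaf path just followed confines $q$ to a convex cell $c$ (an intersection of halfspaces), and correctness of $D^{*}$ forces $c$ to avoid the interior of $P:=\conv(\Delta)$. Let $N_c$ be the number of triangles of $\Delta_B\setminus\Delta$ meeting $c$; after $O(1)$ extra tests identifying which of the four pieces of $B\setminus\Delta$ contains $q$, we finish by running any $O(\log N_c)$-depth point-location structure (realized as a linear decision tree) on the planar subdivision obtained by clipping $\Delta_B\setminus\Delta$ to $c$. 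Because $\sum_{\text{leaves }\ell}\Pr(\ell)\cdot\mathrm{depth}(\ell)=\mathrm{E}[D^{*}]=\mathrm{OPT}$, we obtain $\mathrm{E}[D]=\mathrm{OPT}+O\!\bigl(\sum_c\Pr(c)\log N_c\bigr)$, the sum ranging over the ``outside'' leaves $c$ of $D^{*}$, and the whole statement reduces to showing $\sum_c\Pr(c)\log N_c=O(\mathrm{OPT}+\log\log n)$.

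I expect this last bound to be the technical heart of the proof. The clean sufficient condition is $N_c=O(\log n)$ for every outside cell $c$, which yields $\sum_c\Pr(c)\log N_c\le O(\log\log n)\sum_c\Pr(c)=O(\log\log n)$. To see why one should expect $N_c$ to be polylogarithmic, note that $c$ is convex and disjoint from the convex polygon $P$: a convex set avoiding a convex polygon cannot ``hug'' its boundary across many edges, so $c\cap(B\setminus\Delta)$ is bounded by a constant number of convex arcs (pieces of $\partial c$, of $\partial P$, and of $\partial B$), each a segment or a convex sub-chain of $\partial P$, and each such arc is crossed by $O(\log n)$ triangles of $\Delta_B$ by the Hershberger--Suri stabbing bound; one must then rule out a triangle of $\Delta_B\setminus\Delta$ lying in the interior of $c$ without touching this boundary. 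Should some cells still be ``large'', their $\log N_c$ should be charged against the probability mass they carry, using the Shannon bound $\sum_c\Pr(c)\log\tfrac1{\Pr(c)}\le H(D^{*}\text{'s partition})\le\mathrm{OPT}$ together with the fact that covering the crescent $B\setminus P$ by convex sets avoiding $P$ forces the sets meeting the high-curvature portion of $\partial P$ to be small, so that any large cell must lie far from $\partial P$, where $B\setminus\Delta$ has only constant local complexity. The main obstacle is thus purely combinatorial-geometric: turning the informal statement ``a convex cell disjoint from $P$ meets few triangles of the Hershberger--Suri triangulation of $B\setminus\Delta$, or else carries small probability'' into a bound that survives summation against $\Pr(c)$. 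A plausible alternative route, which I would keep in reserve, is to compare $H(\Delta_B)$ directly with the entropy of a near-minimum-entropy Steiner triangulation of $\Delta$ (which is $O(\mathrm{OPT})$ by Collette et al.) via their common refinement, the $O(\log\log n)$ slack again coming from the Hershberger--Suri stabbing property.
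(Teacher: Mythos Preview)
Your overall framework---build a decision tree for $\Delta_B$ by extending an optimal tree $D^*$ for $\Delta$, then invoke Shannon's inequality---is exactly what the paper does.  The gap is in how you bound $N_c$.

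You treat the outside leaves of $D^*$ as general convex cells (intersections of the halfspaces along a root-to-leaf path) and then claim that $c\cap(B\setminus\Delta)$ is bounded by $O(1)$ convex arcs, ``each \dots\ crossed by $O(\log n)$ triangles of $\Delta_B$ by the Hershberger--Suri stabbing bound.''  But the Hershberger--Suri guarantee is for \emph{line segments}, not for convex chains: a convex arc of $\partial c$ may consist of as many edges as the depth $d_c$ of that leaf, and such an arc can cross $\Theta(d_c\log n)$ triangles, not $O(\log n)$.  So the ``clean sufficient condition'' $N_c=O(\log n)$ does not follow from your argument for a generic linear decision tree.  (It can be salvaged: one gets $N_c=O(d_c\log n)$, whence $\sum_c\Pr(c)\log N_c=O\bigl(\sum_c\Pr(c)\log d_c+\log\log n\bigr)\le O(\mathrm{OPT}+\log\log n)$ since $\log d_c\le d_c$ and $\sum_c\Pr(c)\,d_c=\mathrm{OPT}$.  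But this detour is unnecessary.)

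The paper sidesteps the whole issue by invoking a structural fact you only mention as a ``reserve'' option: Collette et~al.\ show that there exists a linear decision tree $D^*$ for $\Delta$ with expected cost $O(\mathrm{OPT})$ whose \emph{leaves are triangles}, each lying either inside a triangle of $\Delta$ or entirely in the exterior region.  With triangle leaves, an exterior leaf has exactly three edges, each a segment in $B\setminus\Delta$, so the Hershberger--Suri bound applies directly and gives $N_c=O(\log n)$ with no further geometric argument.  Then $O(\log\log n)$ extra comparisons at each leaf suffice, and Shannon's bound finishes.  Use that result up front rather than fighting with general convex cells.
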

\begin{proof}
	Any query point that falls outside $B$ can be detected in $O(1)$ time.   Consider the case of a query point falling inside $B$ and hence inside some triangle in $\Delta_B$.   By the result of Collette~et~al.~\cite{paper:Collette2012}, there is a linear decision tree $D^*$  for anwering queries in $\Delta$ with expected query time $O(\text{OPT})$ such that every leaf of $D^*$ represents a triangle that lies inside a triangle of $\Delta$ or the exterior region of $\Delta$.   
	
	Since $\Delta_B \setminus \Delta$ is triangulated using the method of Hershberger and Suri~\cite{paper:Hershberger1993}, every leaf triangle of $D^*$ that lies in the exterior region of $\Delta$ intersects $O(\log n)$ triangles of $\Delta_B$.  Every triangle of $\Delta$ is also a triangle of $\Delta_B$.  Thus, each leaf triangle of $D^*$ intersects $O(\log n)$ triangles of $\Delta_B$.   
	
	Suppose that the query point $q$ is located in a leaf triangle $t$ of $D^*$.  The triangles in $\Delta_B$ that intersect $t$ induce a planar subdivision in $t$ of size $O(\log n)$.  Using a static, worst-case optimal planar point location structure~(e.g.,~\cite{paper:Kirkpatrick1981,paper:Sarnak1986}), we can thus determine the triangle of $\Delta_B$ containing $q$ in an extra $O(\log \log n)$ time.  Therefore, $D^*$ can be extended to answer queries in $\Delta_B$ in $O(\text{OPT} + \log \log n)$ expected time.  By Shannon's work~\cite{paper:Shannon2001}, $H(\Delta_B)$ is the lower bound for answering queries in $\Delta_B$ in the comparison-based model. As a result, $H(\Delta_B) = O(\text{OPT} + \log \log n)$.
\end{proof}

The analysis of the expected query time of the slab tree $\cal T$ exploits two facts: the halving of the probabilities of a query point falling into the slabs of internal nodes along a root-to-leaf path in $\cal T$, and storing gap trees as biased search trees.  They let us zoom into the target quickly.  The $\log\log n$ term arises because a triangle $t$ induces $O(\log n)$ free gaps, thus adding $\Pr(t)\log\log n$ to $t$'s contribution to the entropy.

\begin{lemma}
	\label{lem:dyntrianalysis}
	The expected query time of $\cal T$ is $O(\mathrm{OPT} + \log \log n)$, where OPT is the expected query time of the best point location decision tree for $\Delta$.
\end{lemma}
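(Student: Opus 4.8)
The plan is to bound the expected running time of the querying procedure of Section~\ref{subsec:loglognanalysis} by a constant times the entropy $H(\Delta_B)$ of the triangle tiling plus an additive $O(\log\log n)$, and then invoke Lemma~\ref{lem:loglognoverhead}. The cost of a query will be charged to the gap in which it terminates. First I would trace a single query: starting at the root $r=v_0$, the procedure visits a root‑to‑leaf path $v_0,v_1,\dots,v_d$ of $\cal T$, and at each $v_j$ it searches a gap tree $T_{v_j}$ and locates the gap $g_j\ni q$. For $j<d$ the gap $g_j$ is blocked and the search descends to the child $v_{j+1}$ with $q\in\slab(v_{j+1})$ through the pointer stored at $g_j$; the last gap $g_d$ is either a free gap (a triangle of $\Delta_B$ has been found) or, when $v_d$ is a leaf at depth $\log_2 n$, a blocked gap, in which case the query is completed by ${\cal T}^*$ in $O(\log n)$ extra time. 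Call $g_d$ the \emph{terminal gap} of $q$; the terminal gaps form a partition of $B$. Since a gap tree is a biased search tree, searching $T_{v_j}$ costs $O\bigl(1+\log\tfrac{W_j}{\Pr(g_j)}\bigr)$, where $W_j$ is the total weight stored in $T_{v_j}$.

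The next step is to show the whole query costs $O\bigl(1+\log\tfrac{1}{\Pr(g_d)}\bigr)$. Two facts are used. (i)~$W_0=\Pr(B)\le 1$, and for $j\ge 1$ the tree $T_{v_j}$ is exactly the one reached through the blocked gap $g_{j-1}$, hence it stores precisely the free and blocked gaps of $v_j$ lying in $g_{j-1}\cap\slab(v_j)$, so $W_j\le\Pr(g_{j-1})$; consequently $\sum_{j=0}^{d}\log\tfrac{W_j}{\Pr(g_j)}$ telescopes to $\log\tfrac{W_0}{\Pr(g_d)}\le\log\tfrac{1}{\Pr(g_d)}$. (ii)~Every internal node on the path is a left or right child (a middle child is always a leaf), so $\Pr(\slab(v_{j+1}))\le\tfrac12\Pr(\slab(v_j))$ for $j\le d-2$; therefore $d\le 1+\log_2\tfrac{1}{\Pr(\slab(v_{d-1}))}\le 1+\log_2\tfrac{1}{\Pr(g_d)}$ since $g_d\subseteq\slab(v_d)\subseteq\slab(v_{d-1})$. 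Adding the $O(1)$ terms and the telescoped logarithmic terms, the descent costs $O\bigl(1+\log\tfrac{1}{\Pr(g_d)}\bigr)$. When $g_d$ is a blocked gap at a depth‑$\log_2 n$ leaf, fact~(ii) also gives $\Pr(g_d)\le\Pr(\slab(v_d))\le 2/n$, so $\log\tfrac{1}{\Pr(g_d)}=\Omega(\log n)$ and the extra $O(\log n)$ spent in ${\cal T}^*$ is absorbed. Thus in every case a query terminating in a gap $g$ costs $O\bigl(1+\log\tfrac{1}{\Pr(g)}\bigr)$.

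Summing over query points, the expected query time is $O\bigl(1+\sum_{g}\Pr(g)\log\tfrac1{\Pr(g)}\bigr)$, the sum ranging over all terminal gaps $g$. To finish I would compare this with $H(\Delta_B)=\sum_{t}\Pr(t)\log\tfrac1{\Pr(t)}$. For the free gaps I would group them by the triangle of $\Delta_B$ that contains them: a triangle $t$ contains only $O(\log n)$ free gaps, so by the concavity bound $\sum_i p_i\log\tfrac1{p_i}\le P\log k+P\log\tfrac1P$ for $k$ nonnegatives summing to $P$ (applied with $k=O(\log n)$ and $P\le\Pr(t)$, and noting $x\log\tfrac1x$ is increasing on $(0,1/e)$), their total contribution is $O\bigl(\sum_t\Pr(t)(\log\tfrac1{\Pr(t)}+\log\log n)\bigr)=O(H(\Delta_B)+\log\log n)$. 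For the blocked gaps at depth‑$\log_2 n$ leaves, I would charge each such gap $g$ to the triangles whose intersections with $\slab(v_d)$ make it up, using $\Pr(g)\log\tfrac1{\Pr(g)}\le\sum_i\Pr(t_i\cap\slab(v_d))\log\tfrac1{\Pr(t_i\cap\slab(v_d))}$ together with the facts that a triangle properly overlaps at most two slabs at depth $\log_2 n$ and that each such intersection has probability at most $2/n$; a short case analysis according to whether $\Pr(t)$ is above or below roughly $1/\sqrt n$ then shows this part is also $O(H(\Delta_B))+o(1)$. Hence $\sum_g\Pr(g)\log\tfrac1{\Pr(g)}=O(H(\Delta_B)+\log\log n)$, and Lemma~\ref{lem:loglognoverhead} gives $H(\Delta_B)=O(\mathrm{OPT}+\log\log n)$, which yields the claim.

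The step I expect to be the main obstacle is the last one: controlling the contribution of the queries that descend all the way to depth $\log_2 n$ and are answered by the backup ${\cal T}^*$. Unlike a free gap, such a terminal gap need not lie inside a single triangle of $\Delta_B$, so its $O(\log n)$ worst‑case cost cannot simply be folded into one triangle's entropy term; one must argue that whenever these fallback queries carry non‑negligible probability the tiling $\Delta_B$ is itself complex enough there for $H(\Delta_B)$ to pay for them. Making that charging tight — so as not to lose a spurious $\log n$ factor — is the delicate point, and it is where the case analysis on the magnitude of $\Pr(t)$ is needed.
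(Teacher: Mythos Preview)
Your argument follows the same skeleton as the paper's: telescope the biased-search-tree costs along the search path, bound the depth via the probability-halving property, reduce the expected cost to an entropy-type sum over terminal regions, and finish with Lemma~\ref{lem:loglognoverhead}. Where you diverge is exactly the step you flagged as the obstacle---the contribution of queries that fall through to depth $\log_2 n$ and are answered by ${\cal T}^*$.

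The paper sidesteps your case analysis with one extra observation. Rather than bounding the query cost by $O\bigl(\log\tfrac{1}{\Pr(g_d)}\bigr)$ using the terminal \emph{gap} $g_d$, it bounds it by $O\bigl(\log\tfrac{1}{\Pr(\slab(w)\cap\tau)}\bigr)$, where $w=v_d$ and $\tau$ is the triangle of $\Delta_B$ containing $q$. Since $\slab(w)\cap\tau\subseteq g_d$, the telescoping still goes through (the last denominator only shrinks). The gain is that \emph{every} terminal region is now a piece of a single triangle. For each $t\in\Delta_B$ let $\mathcal F_t$ be the set of regions $\slab(v)\cap t$ that can terminate a query; $\mathcal F_t$ is a partition of $t$ of size $O(\log n)$ (the usual segment-tree count of free gaps, plus at most one piece per vertex of $t$ lying in a depth-$\log_2 n$ leaf slab). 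Your concavity bound with $k=|\mathcal F_t|=O(\log n)$ and $P=\Pr(t)$ then applies uniformly to both the free and blocked cases, yielding $\sum_{f\in\mathcal F_t}\Pr(f)\log\tfrac1{\Pr(f)}\le\Pr(t)\log\tfrac1{\Pr(t)}+O(\Pr(t)\log\log n)$ with no split on the size of $\Pr(t)$. The $O(\log n)$ spent in ${\cal T}^*$ is absorbed because it equals $O(\mathrm{depth}(w))$, which has already been bounded. Your route via whole blocked gaps can be made to work as you sketch, but it is strictly more effort for the same conclusion.

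One small omission: you never account for the preliminary query to $D_P$ that tests whether $q\in\Delta$. The paper disposes of this by observing that $D_P$ (the Collette~et~al.\ structure for the single-face subdivision bounded by $\partial\Delta$) decides an inside/outside question that every point-location decision tree for $\Delta$ must also decide, so its expected cost is $O(\mathrm{OPT})$.
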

\begin{proof}
The data structure $D_P$ is constructed from the method proposed by Collette~et~al.~\cite{paper:Collette2012}. This data structure provides the asymptotically minimum expected query time for deciding whether a query point is inside ${\cal S}$ or outside ${\cal S}$. Any point location structure for $\Delta$ must make the same decision. So the expected query time of $D_P$ is $O(\text{OPT})$.
%The expected query time of $D_P$ is asymptotically the same as the entropy of the minimum entropy triangulation of $P$~\cite{paper:Collette2012}, where the boundary of $P$ is the outer boundary of $\Delta$. So, the expected query time of $D_P$ is $O(H(\Delta))$ since $\Delta$ is a triangulation of $P$. By Shannon's work~\cite{paper:Shannon2001}, $H(\Delta)$ is the lower bound for answering queries in $\Delta$ in the comparison-based model. As a result, $H(\Delta) = O(\text{OPT})$, and thus the expected query time of $D_P$ is $O(\text{OPT})$.

Let $q$ be a query point. If $q$ lies outside $\Delta$, then $q$ is answered in $O(\text{OPT})$.
%We ignore the case of $q$ lying outside $B$ because its handling takes $O(1)$ time only.  
Suppose $q$ lies inside $\Delta$. Let $\tau$ be the triangle in $\Delta_B$ that contains $q$.  Let $w$ be the node of $\cal T$ at which the search terminates.   When searching in ${\cal T}$, we alternate between locating $q$ in a finer slab and locating $q$ in a finer gap.   We first analyze the total time spent on visiting finer slabs. 

The root of ${\cal T}$ is at depth 0.  Every internal node $v$ of $\cal T$ has at most three children $v_L$, $v_M$, and $v_R$.  The probabilities $\Pr(\slab(v_L))$ and $\Pr(\slab(v_R))$ are at most $\Pr(\slab(v))/2$, and $v_M$ is a leaf.   It follows that for each node $w$ of $\cal T$, $\Pr(\slab(w)) \leq 2^{1-\text{depth}(w)}$, which implies that 
$\text{depth}(w) \leq 1 - \log_2 \Pr(\slab(w))$.  The total time spent on locating finer slabs is $O(\text{depth}(w)) = O(- \log\Pr(\slab(w)))$.

The total time  spent on locating finer gaps is the total query time of the gap trees.    For $i \in [0,\text{depth}(w)]$, let $T_i$ denote the gap tree that we visited at depth $i$ in $\cal T$ during the search, and let $g_i$ denote the gap in $T_i$ that contains $q$.  The weight of $g_i$ in $T_i$ is at least the total weight of $T_{i+1}$ because all free and blocked gaps in $T_{i+1}$ are subsets of $g_i$.    Note that $\slab(w) \cap \tau \subseteq g_{\text{depth}(w)}$ and $\text{weight}(T_0) \leq 1$.  The total query time of the gap trees is
\begin{eqnarray*}
	O \left (\sum_{i = 0}^{\mathrm{depth}(w)} \left (\log \frac{\text{weight}(T_i)}{\text{weight}(g_i)} + 1 \right ) \right ) 
	& = & O \left ( \log \frac{\text{weight}(T_0)}{\Pr(\slab(w) \cap \tau)} + \mathrm{depth}(w) \right) \\
	&=& O\left (\log \frac{1}{\Pr(\slab(w) \cap \tau)}\right ).
\end{eqnarray*}

In summary, the total search time is $O\bigl (-\log \Pr(\slab(w) \cap \tau) \bigr)$.  Hence, the contribution of $\slab(w) \cap \tau$ to the expected query time of $\cal T$ is $O\bigl (-\Pr(\slab(w) \cap \tau) \log \Pr(\slab(w) \cap \tau) \bigr)$.  Either $\slab(w) \cap \tau$ is a free gap of $w$, or $\tau$ does not span $w$ but the search terminates as $w$ is at depth $\log_2 n$.  For every triangle $t \in \Delta_B$,  define 
${\cal F}_t$ to be the collection of $\slab(v) \cap t$ over all slab tree nodes $v$ such that either $\slab(v) \cap t$ is a free gap of $v$, or $v$ is a leaf node and $\slab(v) \cap t$ is contained in a blocked gap of $v$.  Note that ${\cal F}_t$ is a partition of $t$.   By our previous conclusion, the expected time of querying $\cal T$ is 
\[
O\left (\sum_{t \in \Delta_B} \sum_{f \in {\cal F}_t} \Pr(f) \log \frac{1}{\Pr(f)} \right ).
\]
We bound this quantity as follows.  Akin to storing intervals in a segment tree, we have $|{\cal F}_t| = O(\log n)$.   For each region $f \in {\cal F}_t$,  define $\alpha_f = \Pr(f)/\Pr(t)$.  Then,
\begin{eqnarray*}
	\sum_{f \in {\cal F}_t} \Pr(f) \log \frac{1}{\Pr(f)} &=& \sum_{f \in {\cal F}_t} \alpha_f \Pr(t) \log \frac{1}{\alpha_f\Pr(t)}\\
	&=& \Pr(t)\log \frac{1}{\Pr(t)}\sum_{f \in {\cal F}_t} \alpha_f + \Pr(t) \sum_{f \in {\cal F}_t} \alpha_f \log \frac{1}{\alpha_f}.
\end{eqnarray*}
Note that $\sum_{f \in {\cal F}_t} \alpha_f = 1$.   Also, $\sum_{f \in {\cal F}_t} \alpha_f \log (1/\alpha_f)$ is maximized when $\alpha_f = 1/|{\cal F}_t|$ for all $f \in {\cal F}_t$.  Therefore,
\begin{eqnarray*}
	\sum_{f \in {\cal F}_t} \Pr(f) \log \frac{1}{\Pr(f)} & \leq & \Pr(t) \log \frac{1}{\Pr(t)} + O(\Pr(t) \log \log n).
\end{eqnarray*}
Hence,
\begin{eqnarray*}
	O\left (\sum_{t \in \Delta_B} \sum_{f \in {\cal F}_t} \Pr(f) \log \frac{1}{\Pr(f)} \right )
	&=& O\left ( \sum_{t \in \Delta_B} \Pr(t) \log \frac{1}{\Pr(t)} + \log \log n \sum_{t \in \Delta_B} \Pr(t) \right ) \\
	&=& O(H(\Delta_B) + \log \log n) \\
	& = & O(\text{OPT} + \log\log n).  \quad\quad\quad (\because\text{Lemma~\ref{lem:loglognoverhead}})
\end{eqnarray*}
So far, we have ignored the event of querying the backup point location structure.  This happens when $\text{depth}(w) = \log_2 n$ and $\slab(w) \cap \tau$ is not a free gap.  Querying the backup structure takes $O(\log n)$ time which is $O(\text{depth}(w))$ in this case.  Thus, there is no asymptotic increase in the expected query time.
\end{proof}

\subsection{Construction}
\label{sec:construct}

The children of a node $v$ of $\cal T$ can be created in time linear in the number of lines in $\cal L$ that intersect $\slab(v)$.  Thus, constructing the primary tree of $\cal T$ takes $O(|{\cal L}|\log n)$ time.  

The gap lists and gap trees are constructed via a recursive traversal of $\cal T$.   In general, when we come to a node $v$ of $\cal T$ from $\mathit{parent}(v)$, we maintain the following preconditions.

\begin{itemize}
	\item We have only those triangles in $\Delta_B$ such that each intersects $\slab(v)$ and does not span $\mathit{parent}(v)$.   These triangles form a directed acyclic graph $G_v$:~triangles are graph vertices, and two triangles sharing a side are connected by a graph edge directed from the triangle above to the one below.\footnote{Refer to~\cite[Section~4]{paper:Edelsbrunner1986} for a proof that this ordering is acyclic.}
	
	\item 
	%$G_v$ may be connected or not.  If disconnected, 
	The connected components of $G_v$ are sorted in order from top to bottom.  Note that each connected component intersects both bounding lines of $\slab(v)$.
\end{itemize}

%At the root $r$ of $\cal T$, $G_r$ is $\Delta_B$ which is connected.	We will construct the gap list, $\gaplist(v)$, which stores the free and blocked gaps of $v$ in vertical order.  

Each connected component $C$ in $G_v$ corresponds to a maximum contiguous subsequence of free and blocked gaps in $\gaplist(v)$ (to be computed), so for each $C$, we will construct a gap tree $T_C$.  We will return the roots of all such $T_C$'s to $\mathit{parent}(v)$ in order to set up pointers from the blocked gaps of $\mathit{parent}(v)$ to the corresponding $T_C$'s.
	
\paragraph*{Gap list}  We construct $\gaplist(v)$ first.  Process the connected components of $G_v$ in vertical order.  Let $C$ be the next one.  The restriction of the upper  boundary of $C$ to $\slab(v)$ is the upper gap boundary induced by $C$.  
%Since any upper gap boundary has at most two edges, there are at most two source triangles in $C$.  
Perform a topological sort of the triangles in $C$.  We pause whenever we visit a triangle $t \in C$ that spans $v$.  Let $t'$ denote the last triangle in $C$ encountered that spans $v$, or in the absence of such a triangle, the upper boundary of $C$.  If $t \cap t' = \emptyset$ or $t \cap t'$ does not span $v$, the region in $\slab(v)$ between $t'$ and $t$ is a blocked gap, and we append it to $\gaplist(v)$.   Then, we append $\slab(v) \cap t$ as a newly discovered free gap to $\gaplist(v)$.  The construction of $\gaplist(v)$ takes $O(|G_v|)$ time.

\paragraph*{Recurse at the children}  
%Before we construct the gap trees of $v$, we first recursively handle the children of $v$ in order to set up pointers from the blocked gaps of $v$ to the gap tree roots at the children of $v$.  
Let $v_L$, $v_M$ and $v_R$ denote the left, middle and right children of $v$.   We scan the connected components of $G_v$ in the vertical order to extract $G_{v_L}$.  A connected component $C$ in $G_v$ may yield multiple components in $G_{v_L}$ because the triangles that span $v$ are omitted.  The components in $G_{v_L}$ are ordered vertically by a topological sort of $C$.
%because the triangles in $C$ that span $v$ are not passed onto $G_{v_L}$.  
%The topological sort of $C$ puts these newly generated connected components in the correct vertical order.  
Thus, $G_{v_L}$ and the vertical ordering of its connected components are produced in $O(|G_v|)$ time.   
The generation of $G_{v_M}$, $G_{v_R}$ and the vertical orderings of their connected components is similar.  Then, we recurse at $v_L$, $v_M$ and $v_R$.

\paragraph*{Gap trees}  After we have recursively handled the children of $v$, we construct a gap tree for each maximal contiguous subsequence of gaps in $\gaplist(v)$.  The construction takes linear time~\cite{paper:bent85}.   The recursive call at $v_L$ returns a list, say $X$, of the roots of gap trees at $v_L$, and $X$ is sorted in vertical order.   There is a one-to-one correspondence between $X$ and the blocked gaps of $v$ in vertical order.  Therefore, in $O(|\gaplist(v)|)$ time, we can set up pointers from the blocked gaps of $v$ to the corresponding gap tree roots in $X$.  The pointers from the blocked gaps of $v$ to the gap tree roots at $v_M$ and $v_R$ are set up in the same manner.  Afterwards, if $v$ is not the root of $\cal T$, we return the list of gap tree roots at $v$ in vertical order.  

\paragraph*{Running time}  We spend $O(|G_v|)$ time at each node $v$.   If a triangle $t$ contributes to $G_v$ for some node $v$, then either $\slab(v) \cap t$ is a free gap of $v$, or $\slab(v) \cap t$ is incident to the leftmost or rightmost vertex of $t$.    
%We have seen in the proof of Lemma~\ref{lem:dyntrianalysis} that $t$ contributes at most $O(\log n)$ free gaps.
Like storing segments in a segment tree, $t$ contributes $O(\log n)$ free gaps.
The nodes of $\cal T$ whose slabs contain the leftmost (resp.~rightmost) vertex of $t$ form a root-to-leaf path.  Therefore, $t$ contributes $O(\log n)$ triangles in the $G_v$'s over all nodes $v$ in $\cal T$.  The sum of $|G_v|$ over all nodes $v$ of $\cal T$ is $O(n\log n)$.

\begin{lemma}
	\label{lem:slabtree}
	Given $\Delta_B$ and $\cal L$, the slab tree and its auxiliary structures, including gap lists and gap trees, can be constructed in $O(|{\cal L}|\log n)$ time and $O(n\log n)$ space.
\end{lemma}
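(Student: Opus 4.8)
The plan is to total the cost of the four construction phases laid out above.

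\emph{Primary tree.} Creating the children $v_L, v_M, v_R$ of a node $v$ amounts to scanning the lines of $\cal L$ meeting $\slab(v)$ to pick a probability-balancing index $j$; this is done by one left-to-right sweep that accumulates the probability of consecutive thin sub-slabs with $O(1)$ oracle calls apiece, so it costs $O(k_v)$, where $k_v$ is the number of such lines. Along the two real children $v_L$ and $v_R$ these lines are partitioned, except that the two boundary lines $l_j$ and $l_{j+1}$ are duplicated into $v_M$; hence the total line count over the nodes at a fixed depth $d$ is $O(|{\cal L}| + 2^d)$, and summing over the $\log_2 n$ depths gives $O(|{\cal L}|\log n + n) = O(|{\cal L}|\log n)$ for building the whole primary tree. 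By the same count $\cal T$ has $O(n)$ nodes: it has at most $\sum_{d<\log_2 n} 2^d = O(n)$ branching nodes (each such node has at most two branching children, $v_M$ being a leaf), and every other node is a child of one of them; so the primary tree occupies $O(n)$ space.

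\emph{Gap lists, gap trees, pointers.} These are built by the recursive traversal described above, spending $O(|G_v|)$ time at each node $v$: a topological sort of each connected component of $G_v$, a walk of its boundary that emits the free and blocked gaps of $v$ in vertical order (each such gap has $O(1)$ complexity, so its weight costs one oracle call), the extraction of $G_{v_L}, G_{v_M}, G_{v_R}$ with their vertical orderings, and, after the recursive calls return, the linear-time construction of one biased search tree per maximal contiguous run of $\gaplist(v)$ together with the matching of the vertically-sorted lists of child gap-tree roots to the blocked gaps of $v$. Since $|\gaplist(v)| = O(|G_v|)$, each part is $O(|G_v|)$ per node, and so is the storage for $\gaplist(v)$, its gap trees, and the $O(1)$ pointers per blocked gap.

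\emph{The key estimate.} What remains, and the only step needing a real argument rather than accounting, is $\sum_{v\in{\cal T}}|G_v| = O(n\log n)$. A triangle $t\in\Delta_B$ belongs to $G_v$ only when $t$ meets $\slab(v)$ but does not span $\mathit{parent}(v)$, which splits into two cases. If $t$ spans $v$, then $\slab(v)\cap t$ is a free gap of $v$, and the set of such $v$ is precisely the canonical decomposition of the $x$-projection of $t$ in the bounded-fan-out tree $\cal T$, of size $O(\log n)$ exactly as for a segment stored in a segment tree. Otherwise $t$ meets $\slab(v)$ without spanning it, which forces the leftmost or the rightmost vertex of $t$ to lie in $\slab(v)$; the nodes whose slabs contain the leftmost (resp.\ rightmost) vertex of $t$ form a root-to-leaf path of $\cal T$, so there are again only $O(\log n)$ of them. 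Hence each of the $n$ triangles of $\Delta_B$ contributes to $O(\log n)$ of the graphs $G_v$, and $\sum_v|G_v| = O(n\log n)$.

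\emph{Totals.} Summing up, the construction time is $O(|{\cal L}|\log n)$ for the primary tree plus $O\!\bigl(\sum_v|G_v|\bigr) = O(n\log n)$ for the gap lists, gap trees, and pointers; since every vertex of $\Delta$ lies on its own line of $\cal L$ and $\Delta_B$ has a number of triangles linear in the number of vertices of $\Delta$, we have $n = O(|{\cal L}|)$, so the time bound is $O(|{\cal L}|\log n)$. The space is $O(n)$ for the primary tree and $O\!\bigl(\sum_v|\gaplist(v)|\bigr) = O(n\log n)$ for the gap lists, gap trees, and blocked-gap pointers, hence $O(n\log n)$ overall. The main obstacle is making the canonical-decomposition count precise for the ternary split (so that $t$ still yields only $O(\log n)$ free gaps) and verifying that, apart from those, the only nodes with $t\in G_v$ are the $O(\log n)$ nodes on the two root-to-leaf paths through the extreme vertices of $t$.
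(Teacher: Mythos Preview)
Your proposal is correct and follows essentially the same approach as the paper: the paper's argument is precisely your ``key estimate'' paragraph, namely that each triangle $t$ contributes to $G_v$ only when $\slab(v)\cap t$ is a free gap (the segment-tree canonical count, $O(\log n)$) or when an extreme vertex of $t$ lies in $\slab(v)$ (a root-to-leaf path, $O(\log n)$), giving $\sum_v|G_v|=O(n\log n)$. You supply a bit more detail than the paper does---the $O(n)$ node count for the primary tree via the middle-child-is-a-leaf observation, and the explicit $n=O(|{\cal L}|)$ step to fold $O(n\log n)$ into $O(|{\cal L}|\log n)$---but the substance is the same.
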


\section{Handling triangulation-updates: fixed vertical lines}
\label{sec:update}

We discuss how to update the slab tree when $\Delta_B$ is updated such that every new vertex lies on a vertical line in the given set $\cal L$.  This restriction will be removed later in Section~\ref{sec:general}.  A \emph{triangulation-update} $U$ has the following features:
\begin{itemize}
	\item It specifies some triangles in $\Delta$ whose union is a polygon $R_U$ possibly with holes.
	
	\item It specifies a new triangulation $T_U$ of $R_U$.   $T_U$ may contain vertices in the interior of $R_U$.  $T_U$ does not have any new vertex in the boundary of $R_U$, except possibly for the boundary edges of $R_U$ that lie on the outer boundary of $\Delta$.
	
	\item The construction of $T_U$ takes $O(|T_U| \log |T_U|)$ time.
	
	\item The \emph{size} of $U$ is the total number of triangles in $\Delta \cap R_U$ and $T_U$.
\end{itemize}

%\noindent In the case that $T_U$ has a new vertex in the interior of a boundary edge of $R_U$ that lie on the outer boundary of $\Delta$, we do not make any change to the triangles in $\Delta_B \setminus \Delta$.  

Our update algorithm is a localized version of the construction algorithm in Section~\ref{sec:construct}.  It is also based on a recursive traversal of the slab tree $\cal T$.
%
%
%It is also based on a recursive traversal of $\cal T$, but there are some important differences.  First, the slab tree $\cal T$ is unaffected.  Only the gap lists and gap trees are updated.  We give a hightlight of the update method below and provide the details  in Sections~\ref{sec:update-gaplist}--\ref{sec:update-gaptree}.  Section~\ref{sec:rebuild} discusses periodic rebuild of $\cal T$.
%
When we visit a node $v$ of $\cal T$, we have a directed acyclic graph $H_v$ that represents \emph{legal} and \emph{illegal} regions in $T_U \cap \slab(v)$: 
\begin{itemize}
	\item For each triangle $t \in T_U$ that intersects the interior of $\slab(v)$ and does not span $\mathit{parent}(v)$, $t \cap \slab(v)$ is a legal region in $H_v$.
	\item Take the triangles in $T_U$ that span $\mathit{parent}(v)$.  Intersect their union with $\slab(v)$.  Each resulting connected component that has a boundary vertex in the interior of $\slab(v)$ is an \emph{illegal region}.  Its upper and lower boundaries contain at most two edges each.
	%It is a convex polygon with two edges on the boundary of $\slab(v)$, at most two edges on the upper boundary, and at most two on the lower boundary.  
	%We use the union of triangles instead of the individual triangles in order to control the total complexity of illegal regions.  
	Requiring a boundary vertex inside $\slab(v)$ keeps the complexity of illegal regions low.
	\item Store $H_v$ as a directed acyclic graph: regions are graph vertices, and two regions sharing a side are connected by an edge directed from the region above to the one below.   The directed acyclic graph $H_v$ may not be connected.  We use $\widetilde{H}_v$ to denote the subset of connected components in $H_v$ that intersect both bounding lines of $\slab(v)$.  The connected components in $\widetilde{H}_v$ are given in sorted order from top to bottom.  As we will see later, the ordering of the remaining connected components in $H_v \setminus \widetilde{H}_v$ is  unimportant with respect to the update at $v$.
\end{itemize}

An overview of the update procedure is as follows.  We update the auxiliary structures of the slab tree $\cal T$ in a recursive traversal of it.  Suppose that we visit a node $v$ of $\cal T$ in the traversal.   We update $\gaplist(v)$ and then recursively visit the children $v_L$, $v_M$ and $v_R$ of $v$.   The recursive calls return three lists \emph{updated-trees}$(v_L)$, \emph{updated-trees}$(v_M)$ and \emph{updated-trees}$(v_R)$ that store the roots of those gap trees at $v_L$, $v_M$ and $v_R$, respectively, that are affected by the triangulation-update.  We set pointers from the appropriate blocked gaps in $\gaplist(v)$ to the gap trees in \emph{updated-trees}$(v_L)$, \emph{updated-trees}$(v_M)$, and \emph{updated-trees}$(v_R)$.  Afterwards, we construct a list, \emph{updated-trees}$(v)$, of the roots of the gap trees of $v$ that are affected by the triangulation-update.  Finally, if $v$ is not the root of $\cal T$, we return the list \emph{updated-trees}$(v)$ to $\mathit{parent}(v)$.  If $v$ is the root of $\cal T$, $\slab(v)$ contains one free gap, one blocked gap and another free gap in this order, and there is no change to these three gaps no matter what triangulation-updates have happened.

We describe the details of the update procedure in Sections~\ref{sec:update-gaplist}--\ref{sec:rebuild}.  Given a connected region $R$ that lies inside and spans a slab, we use $\up(R)$ and $\low(R)$ to denote the upper and lower boundaries of $R$, respectively.

\subsection{Updating the gap list at a slab tree node}
\label{sec:update-gaplist}

\subsubsection{Preliminaries}

We first show that every component in $H_v \setminus \widetilde{H}_v$ is contained in a single blocked gap before and after the triangulation-update.  This justifies ignoring $H_v \setminus \widetilde{H}_v$ in our subsequent processing.
%because the components in $H_v \setminus \widetilde{H}_v$ cannot determine new upper or lower gap boundaries.

\begin{lemma}
	\label{lem:ignore}
	Every connected component of $H_v \setminus \widetilde{H}_v$ is part of a blocked gap before and after the triangulation-update.
\end{lemma}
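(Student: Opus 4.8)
The plan is to analyze what it means for a connected component $C$ of $H_v \setminus \widetilde{H}_v$ to fail to span $\slab(v)$, and to show this forces $C$ to be ``trapped'' between two triangles of $\Delta_B$ that do not span $v$. First I would recall the definitions: $H_v$ is built from triangles of the new triangulation $T_U$; a component lies in $\widetilde{H}_v$ exactly when it touches both bounding lines of $\slab(v)$. So a component $C \in H_v \setminus \widetilde{H}_v$ misses at least one of the two bounding lines of $\slab(v)$. The key observation to establish is that $C$, viewed as a region in the plane, is contained in $R_U$ (the polygon being re-triangulated) and that its ``exterior'' within $\slab(v)$ — the part of $\slab(v) \cap R_U$ not covered by $C$ along the relevant bounding line — must be covered by other triangles of $T_U$ or by triangles of $\Delta_B \setminus \Delta$ or by triangles of $\Delta$ outside $R_U$.

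The main steps I would carry out: (1) Show that any triangle $t$ of $T_U$ contributing a legal region to $C$ does not span $\mathit{parent}(v)$ (by the definition of $H_v$) and, since $C \notin \widetilde{H}_v$, that the union of regions in $C$ does not span $\slab(v)$ either — so no triangle of $T_U$ in $C$ spans $v$, and the illegal regions (coming from triangles spanning $\mathit{parent}(v)$) are only those with a boundary vertex strictly inside $\slab(v)$, which by construction have bounded complexity. (2) Identify the maximal edge/triangle structure of $\Delta_B$ that sits immediately above and below $C$ within $\slab(v)$: because $C$ does not span $v$, following $\up(C)$ and $\low(C)$ leftward/rightward inside $\slab(v)$ one reaches edges of $\Delta_B$ (either still-present edges of $\Delta$ outside $R_U$, or the fixed edges of $\Delta_B \setminus \Delta$, or — after the update — edges of $T_U$) that themselves do not span $v$. (3) Conclude that the connected component of the union of non-spanning edges and triangles of $\Delta_B$ that contains $C$ is exactly a blocked gap of $v$ by the definition of blocked gap, and that this holds both for the ``before'' triangulation (using $\Delta$) and the ``after'' triangulation (using the updated $\Delta$), since the relevant boundary pieces lie outside $R_U$ or on $\partial\Delta$ and are therefore unchanged, or are the new $T_U$-edges which again do not span $v$.

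The hard part will be step (2): carefully arguing that a component $C$ failing to span $\slab(v)$ is necessarily ``capped off'' on the side where it misses the bounding line, i.e., that the boundary of $C$ cannot simply run off to the bounding line of $\slab(v)$ through a chain of $T_U$-triangles without one of them spanning $v$ (which would put $C$ into $\widetilde{H}_v$ or split it). This is essentially a planarity/connectivity argument about the directed acyclic structure of $H_v$ analogous to the acyclicity argument cited for $G_v$ in Section~\ref{sec:construct}, and I expect to need the fact that illegal regions are required to have a boundary vertex inside $\slab(v)$ — precisely the condition the text flags as ``keeping the complexity low'' — to rule out pathological configurations. Once $C$ is shown to be trapped between two non-spanning edges/triangles of the (updated or un-updated) $\Delta_B$, membership in a single blocked gap follows directly from the blocked-gap definition, because that trapped region is a connected piece of the intersection of $\slab(v)$ with the union of edges and triangles of $\Delta_B$ that do not span $v$, and all of $C$ lies in one connected component of it.
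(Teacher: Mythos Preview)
Your plan is far more elaborate than the paper's three-sentence proof. The paper simply observes that since $C$ misses one bounding line of $\slab(v)$, every triangle of $\Delta_B$ whose intersection with $\slab(v)$ lies in $\reg(C)$ must also miss that line and hence cannot span $v$; therefore $\reg(C)$ is a connected subset of the union, intersected with $\slab(v)$, of non-spanning edges and triangles, so it lies in a single connected component of that union --- which is verbatim the definition of a blocked gap. There is no boundary-tracing, no ``capping off,'' and the illegal-region vertex condition plays no role whatsoever. Your step~(1) already contains the only real observation needed; your step~(2), which you flag as the hard part, is unnecessary because the blocked-gap definition does all the work once the constituent triangles are known to be non-spanning. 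The one point that deserves a line of justification (and which the paper glosses over) is the \emph{before} case: one should note that $\reg(C)$ is in fact a full connected component of $R_U\cap\slab(v)$, so that any \emph{old} triangle meeting $\reg(C)$ has its entire $\slab(v)$-intersection inside $\reg(C)$ and is thus also non-spanning. This holds because $\partial\reg(C)\setminus\partial\slab(v)$ must lie on $\partial R_U$ --- adjacency to an omitted spanning-$\mathit{parent}(v)$ strip is impossible, as the $T_U$-triangle on the $C$-side of such a strip would share a $v$-spanning edge with it and force $C$ into $\widetilde{H}_v$. That is a two-line observation, not the planarity/DAG argument you anticipate.
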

\begin{proof}
	Let $C$ be a connected component of $H_v \setminus \widetilde{H}_v$.  Since $C$ intersects at most one bounding line of $\slab(v)$, every edge and triangle in $\Delta_B$ whose intersection with $\slab(v)$ belongs to $C$ cannot span $v$.  Therefore, $C$ is contained in a connected component of the intersection between $\slab(v)$ and the union of edges and triangles in $\Delta_B$ that intersect $\slab(v)$ but do not span $v$, i.e., a blocked gap.
\end{proof}

We show that it suffices to check $\gaplist(v)$ and $\widetilde{H}_v$ to update the gaps of $v$.

\begin{lemma}
	\label{lem:ignore2}
	Let $g$ be a free or blocked gap of $v$ after a triangulation-update.  For all $\gamma \in \{\up(g), \low(g)\}$, one of the following properties is satisfied:
	\begin{itemize}
		\item $\gamma$ is contained in some component in $\widetilde{H}_v$;
		\item $\gamma$ is the upper or lower boundary of a gap of $v$ before that triangulation-update.
	\end{itemize}
\end{lemma}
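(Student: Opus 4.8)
The plan is to prove Lemma~\ref{lem:ignore2} by a case analysis on why a boundary curve $\gamma$ of a gap $g$ of $v$ (after the update) could fail to be a boundary curve of a gap of $v$ before the update. The key observation is that $\gamma$ consists of edges and triangle sides of $\Delta_B$ (after the update), and each such edge either belonged to $\Delta_B$ before the update or was created by the triangulation-update, i.e.\ lies in $T_U$. So I would first fix $\gamma \in \{\up(g),\low(g)\}$ and decompose it according to which of its constituent pieces come from $T_U$ and which are unchanged by $U$.

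\textbf{Case 1: $\gamma$ is entirely unchanged by the update.} Then every edge/triangle side making up $\gamma$ is present in $\Delta_B$ both before and after $U$. I would argue that if $\gamma$ were not the boundary of any gap of $v$ before $U$, then some change must have occurred in $\slab(v)$ that altered the gap structure along $\gamma$ — but any such change is caused by triangles in $T_U$ (or $\Delta\cap R_U$) intersecting $\slab(v)$, and a careful check (using the fact, noted in the paragraph after the {\sc Shadow Gap} definition and in the footnote, that a blocked gap boundary has at most two edges, and that $\Delta_B\setminus\Delta$ is fixed) shows that if $\gamma$ itself is untouched then whatever gap it bounded before is either still a gap or got subdivided only on its \emph{interior}, leaving $\gamma$ still serving as a boundary. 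In the boundary-of-$\Delta$ exceptional case (edges of $\up(g)$ or $\low(g)$ split by new collinear vertices) the curve as a point set is unchanged, which is all the statement needs.

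\textbf{Case 2: some piece of $\gamma$ comes from $T_U$.} Here I would show $\gamma$ must be contained in a component of $\widetilde H_v$. The pieces of $\gamma$ lying in $T_U$ belong to triangles of $T_U$ that intersect $\slab(v)$. If all such triangles of $T_U$ touching $\gamma$ fail to span $\mathit{parent}(v)$, they generate legal regions of $H_v$, and I'd show the relevant portion of $\gamma$ lies in a component of $H_v$; the fact that $g$ is a genuine free/blocked gap of $v$ (so $g$ — hence its boundary — spans $v$, or is a blocked gap whose bounding triangles reach both sides) forces that component to intersect both bounding lines of $\slab(v)$, i.e.\ to lie in $\widetilde H_v$. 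If instead some triangle of $T_U$ touching $\gamma$ spans $\mathit{parent}(v)$, then by definition it contributes to an illegal region, and again — since $\gamma$ bounds a gap of $v$ and thus has a vertex in the interior of $\slab(v)$ where it meets $g$'s other boundary — that illegal region has a boundary vertex inside $\slab(v)$ and so lies in $H_v$; spanning $v$ on the part derived from $g$'s boundary puts the component in $\widetilde H_v$. The mixed situation (some pieces of $\gamma$ from $T_U$, some unchanged) is handled by noting that an unchanged piece adjacent to a $T_U$-piece along $\gamma$ must meet it at a vertex, and connectivity of the corresponding regions in $H_v$ propagates membership of the whole of $\gamma$ into one component of $\widetilde H_v$.

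\textbf{The main obstacle} I anticipate is Case~2's bookkeeping: precisely matching ``a maximal connected sub-curve of $\gamma$ built from $T_U$-edges'' with ``a connected component of $H_v$,'' and confirming that this component necessarily spans $\slab(v)$ rather than sitting in $H_v\setminus\widetilde H_v$. The spanning claim is where I'd lean hardest on the hypothesis that $g$ is a legitimate gap of $v$ after the update (so its boundary genuinely crosses $\slab(v)$, or its defining non-spanning triangles are glued to spanning structure at the slab's sides), combined with Lemma~\ref{lem:ignore}, which already tells us the non-spanning leftovers $H_v\setminus\widetilde H_v$ stay buried inside a single blocked gap and so cannot supply a \emph{boundary} of $g$ that isn't also captured elsewhere. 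Once the correspondence between $T_U$-portions of gap boundaries and components of $\widetilde H_v$ is pinned down, the lemma follows by assembling the cases along $\gamma$.
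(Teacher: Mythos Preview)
Your approach differs from the paper's and has a real gap. The paper does not decompose $\gamma$ edge-by-edge; instead it cases on whether $g$ is free or blocked, identifies the single triangle $t$ (for free) or $t'$ (for blocked: the spanning triangle in the adjacent free/shadow gap $g'$) whose lower boundary in $\slab(v)$ equals $\gamma$, and then checks whether that triangle is old or lies in $T_U$. The free case is immediate. In the blocked case with $t'$ old there is one nontrivial sub-case: the old free/shadow gap $g''$ containing $t'\cap\slab(v)$ might extend strictly below $\low(t'\cap\slab(v))$, so $\gamma$ was \emph{not} a gap boundary before; the paper then observes that the old triangle $t''$ directly below $t'$ must have been destroyed (it overlaps the new blocked gap $g$), and the $T_U$ triangles replacing it put $\gamma$ in $\widetilde{H}_v$.

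The concrete gap in your Case~2 is the claim that any $T_U$ triangle touching $\gamma$ lies in a component of $\widetilde H_v$. You do not handle the possibility that this $T_U$ triangle spans $\mathit{parent}(v)$: then it contributes only to a \emph{candidate} illegal region, which is kept in $H_v$ only if it has a boundary vertex in the interior of $\slab(v)$ --- otherwise it is dropped entirely, and your spanning argument evaporates. The fix you need (and which the paper's triangle-based argument obtains for free) is that when $g$ is a blocked gap, $\up(g)$ always has a vertex in the interior of $\slab(v)$: if $\gamma$ were contained in a single triangulation edge $e$, then the triangle just below $e$ would span $v$, contradicting its membership in the blocked gap $g$. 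That interior vertex is what forces the candidate illegal region to be genuine. Your sketch mentions ``a vertex in the interior of $\slab(v)$ where it meets $g$'s other boundary,'' but that is the wrong vertex; the one you need sits on $\gamma$ itself. Separately, your Case~1/Case~2 split is not exhaustive as stated: an edge ``present in $\Delta_B$ both before and after'' can still lie on $\partial R_U$ with one adjacent triangle new, so ``$\gamma$ entirely unchanged'' in your sense does not imply both neighboring triangles are old, and your Case~1 argument (``whatever gap it bounded before'') quietly assumes the conclusion.
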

\begin{proof}
	We prove the lemma for $\up(g)$.  Similar reasoning applies to $\low(g)$.
	
	If $g$ is a free gap, then $g = t \cap \slab(v)$ for some triangle $t$ in the new triangulation that spans $v$ but not $\mathit{parent}(v)$.  If $t$ exists in the old triangulation, then $g = t \cap \slab(v)$ was a free gap of $v$ before the triangulation-update.  If $t$ is new, then $t$ must be a triangle in $T_U$.  Hence, $t \cap \slab(v)$ is contained in $\widetilde{H}_v$ because $t$ spans $v$ but not $\mathit{parent}(v)$.  It follows that $\up(g) = \up(t \cap \slab(v))$ is contained in $\widetilde{H}_v$.
	
	Suppose that $g$ is a blocked gap.  
	%Either $g$ is the topmost gap in $\slab(v)$, or 
	There is a free or shadow gap $g'$ of $v$ with respect to the new triangulation such that $\low(g') = \up(g)$.  
	%In the first case, $\up(g)$ is contained in the fixed bounding box $B$, which makes $\up(g)$ the upper boundary of the topmost gap of $v$ before the triangulation-update.  In the second case, 
	Note that $\low(g') = \low(t' \cap \slab(v))$ for some triangle $t'$ in the new triangulation that spans $v$.   If $t'$ is a triangle in $T_U$, then $\up(g) = \low(g') = \low(t'\cap \slab(v))$ is contained in $\widetilde{H}_v$.  Suppose that $t'$ exists in the old triangulation.   Let $g''$ be the free or shadow gap of $v$ with respect to the old triangulation that contains $t' \cap slab(v)$.
	If $\low(g'') = \low(t'\cap \slab(v))$, we are done.  Otherwise, $g''$ is a shadow gap, and $g''$ contains another triangle $t''$ in the old triangulation such that $t''$ is below $t'$ and $t' \cap t''$ is an edge that spans $v$.  Note that $\up(g) = \low(g') = \up(t'' \cap \slab(v))$.  The triangle $t''$ ceases to exist after the triangulation-update because $t''$ overlaps with the blocked gap $g$. Therefore, $\up(g) = \up(t'' \cap \slab(v))$ must be contained in $\widetilde{H}_v$. 
\end{proof}

\subsubsection{Updating the gap list}
\label{sec:gaplist}

By Lemma~\ref{lem:ignore2}, it suffices to check $\gaplist(v)$ and $\widetilde{H}_v$ to update the gaps of $v$.  Let $C_1,C_2,\cdots,C_m$ denote the connected components in $\widetilde{H}_v$ in order from top to bottom.   Each $C_i$ has an upper boundary $\up(C_i)$ and a lower boundary $\low(C_i)$.  The rest of the  boundary of $C_i$ may include segments on the boundary of $\slab(v)$ and polygonal chains that have both endpoints on the same bounding line  of $\slab(v)$, but these boundary portions will not be relevant for our discussion.  We use $\reg(C_i)$ to denote the union of regions in $C_i$.

We process $C_1, C_2,\ldots, C_m$ in this order.  We maintain several variables whose definitions and initializations are explained below.
\begin{itemize}
	\item A balanced search tree $L_v$.   We initialize $L_v:= \gaplist(v)$ before processing $C_1$.  The breaks between maximal contiguous subsequences in $L_v$ are the shadow gaps of $v$, so boundaries of shadow gaps can be retrieved quickly.  We will update $L_v$ incrementally and set $\gaplist(v)$ to be the final $L_v$.
	
	\item A variable $\mathit{state} \in \{ \mathsf{blocked}, \mathsf{null} \}$.  If we are currently building a blocked gap, then $\mathit{state} = \mathsf{blocked}$; otherwise, $\mathit{state} = \mathsf{null}$.  We initialize $\mathit{state}:= \mathsf{null}$ before processing $C_1$.
	
	\item A variable $\gamma$ that keeps track of the upper boundary of the current blocked gap being built.  The content of $\gamma$ is only valid when $\mathit{state} = \mathsf{blocked}$.  We may update $\gamma$ even if $\mathit{state} = \mathsf{null}$ so that the content of $\gamma$ will be valid when $\mathit{state}$ becomes $\mathsf{blocked}$.
	
	\item A balanced search tree $\Sigma$ that keeps track of some free and blocked gaps being built to replace certain gaps in $L_v$.   Every now and then, certain gaps in $L_v$ will be replaced by the gaps in $\Sigma$.  Afterwards, $\Sigma$ will be emptied.  We initiliaze $\Sigma:= \emptyset$ before processing $C_1$.
	
	%\item A list, \emph{updated-blocked-gaps}$(v)$, of the blocked gaps of $v$ that are created by the triangulation-update.  We initialize \emph{updated-blocked-gaps}$(v):= \emptyset$ before processing $C_1$.
	
\end{itemize}

The following procedure {\sc Modify}$(v)$ processes $C_1, C_2, \cdots, C_m$ to update $\gaplist(v)$.    The update is an incremental merge of these components with the old version of $\mathit{gaplist}(v)$.

\vspace{10pt}

\noindent {\sc Modify}$(v)$:
\begin{enumerate}
	
	\item $L_v:= \gaplist(v)$; $\mathit{state}:= \mathsf{null}$; $\Sigma:= \emptyset$; 
	%\emph{updated-blocked-gaps}$(v):= \emptyset$.
	
	\item $i:= 1$.
	
	\item Set $\hat{g}_i$ to be the gap described in criterion (i), (ii), or (iii) below, whichever is applicable.  If more than one criterion is applicable, the order of precedence is (i), (ii), (iii).
	\begin{itemize}
		\item[(i)] The blocked gap in $L_v$ whose interior or boundary intersects $\up(C_i)$.
		\item[(ii)] The free gap in $L_v$ intersected by $\up(C_i)$ and $\reg(C_i)$.
		\item[(iii)] The shadow gap with respect to $L_v$ intersected by $\up(C_i)$ and $\reg(C_i)$.
	\end{itemize}
	/* Note that $\up(\hat{g}_i)$ and $\up(C_i)$ do not cross */
	
	\item If $\mathit{state} = \mathsf{null}$ then \{ \\
	\hspace*{12pt}\parbox[t]{5in}{
		if $\hat{g}_i$ is a blocked gap and $\up(\hat{g}_i)$ is above or partly above $\up(C_i)$ then \{ \\
		\hspace*{20pt}$\mathit{state} := \mathsf{blocked}$; \\
		\hspace*{20pt}$\gamma := \up(\hat{g}_i)$ } \\
	\hspace*{18pt} \} \\
	\hspace*{12pt}else \\
	\hspace*{20pt}$\gamma := \up(C_i)$  \quad /* we may discover later that a new blocked gap begins at $\up(C_i)$ */ \\
	\hspace*{6pt}\}
	
	\item Perform a topological sort of $C_i$.  For each region $\tau \in C_i$ encountered, \\
	\hspace*{12pt}if $\tau$ is a legal region and $\tau$ does not span $v$ then \\
	\hspace*{24pt}$\mathit{state}:= \mathsf{blocked}$ \\
	\hspace*{12pt}else \{ \\
	\hspace*{24pt}\parbox[t]{5in}{
		%if $\mathit{state} = \mathsf{blocked}$ and $\gamma \not= \up(\tau)$ then \{ \\
		if $\mathit{state} = \mathsf{blocked}$ then \{ \\
		\hspace*{12pt}~/* a new blocked gap ends at $\up(\tau)$ */ \\
		\hspace*{12pt}~make a blocked gap $g$ bounded between $\gamma$ and $\up(\tau)$; \\
		\hspace*{12pt}~append $g$ to $\Sigma$ \\
		%and \emph{updated-blocked-gaps}$(v)$ \\
		\hspace*{8pt} \} \\
		if $\tau$ is a legal region, then append $\tau$ to $\Sigma$ as a new free gap; \\
		$\gamma := \low(\tau)$; \\
		$\mathit{state}:= \mathsf{null}$ } \\
	\hspace*{12pt}\}
	
	\item The topological sort of $C_i$ ends when we come to this step.  
	\begin{enumerate}
		
		\item Set $\check{g}_i$ to be the gap described in criterion (i), (ii), or (iii) below, whichever is applicable.  If more than one criterion is applibcable, the order of precedence is (i) (ii), (iii).
		\begin{itemize}
			\item[(i)] The blocked gap in $L_v$ whose interior or boundary intersects $\low(C_i)$.
			\item[(ii)] The free gap in $L_v$ intersected by $\low(C_i)$ and $\reg(C_i)$.
			\item[(iii)] The shadow gap with respect to $L_v$ intersected by $\low(C_i)$ and $\reg(C_i)$.
		\end{itemize}
		/* Note that $\low(\check{g}_i)$ and $\low(C_i)$ do not cross */
		
		\item If $\mathit{state} = \mathsf{blocked}$ and ($i = m$ or $\check{g}_i$ is disjoint from $\up(C_{i+1})$),then \{ \\
		\hspace*{20pt}\parbox[t]{5in}{
			make a blocked gap $g$ bounded between $\gamma$ and $\low(\check{g}_i)$; \\
			append $g$ to $\Sigma$; \\
			%and \emph{updated-blocked-gaps}$(v)$; \\
			replace by $\Sigma$ the gaps in $L_v$ from $\hat{g}_i$ to $\check{g}_i$; /* $\hat{g}_i$ and $\check{g}_i$ are also replaced */ \\
			$\mathit{state} := \mathsf{null}$ }   \\
		\hspace*{12pt} \} \\
		else \{ \\
		\hspace*{20pt}\parbox[t]{5in}{ 
			if $\low(C_i)$ intersects the interior of $\check{g}_i$ then \\
			\hspace*{16pt}split $\check{g}_i$ in $L_v$ at $\low(C_i)$ into two gaps; \\
			replace by $\Sigma$ the gaps in $L_v$ from $\hat{g}_i$ to $\low(C_i)$ \\
			/* $\hat{g}_i$ and the gap immediately above $\low(C_i)$ are also replaced */ } \\
		\hspace*{12pt}\}
		
		\item $\Sigma:= \emptyset$.
		\item If $i < m$, then $i:= i+1$ and go to step~3.
	\end{enumerate}
	
	\item $\gaplist(v):= L_v$.
	
\end{enumerate}

\begin{lemma}
	\label{lem:modify-time}
	{\sc Modify}$(v)$ updates $\mathit{gaplist}(v)$ correctly in $O(|H_v|\log n)$ amortized time.
\end{lemma}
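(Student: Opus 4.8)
The plan is to prove the two claims of the lemma separately: first that {\sc Modify}$(v)$ leaves $\gaplist(v)$ equal to the correct list of free and blocked gaps of $v$ with respect to the updated triangulation, and second that it does so in $O(|H_v|\log n)$ amortized time.

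For correctness, the starting point is Lemmas~\ref{lem:ignore} and~\ref{lem:ignore2}: the components of $H_v\setminus\widetilde H_v$ never leave a single blocked gap, and every boundary of a new gap of $v$ is either a boundary of an old gap of $v$ or lies inside $\widetilde H_v$; hence it suffices to merge $\gaplist(v)$ with $C_1,\dots,C_m$ in top-to-bottom order, which is exactly what {\sc Modify}$(v)$ does. I would then prove, by induction on $i$, the loop invariant that after $C_i$ has been processed (i)~the portion of $L_v$ lying strictly above $\low(C_i)$ already coincides with the correct new gap list in that range, and (ii)~the triple $(\mathit{state},\gamma,\Sigma)$ correctly encodes the (possibly empty) blocked gap currently under construction that straddles the strip between $C_i$ and $C_{i+1}$ --- that is, $\mathit{state}=\mathsf{blocked}$ exactly when that strip begins inside a blocked gap, and then $\gamma$ records its upper boundary. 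The inductive step requires a case analysis driven by the three criteria (i)/(ii)/(iii) that select $\hat g_i$ and $\check g_i$ (whether $\up(C_i)$, resp.\ $\low(C_i)$, first meets a blocked, a free, or a shadow gap of the current $L_v$), together with the classification, inside the topological sort of $C_i$, of each region $\tau$ as a spanning-$v$ legal region (which becomes a new free gap and closes any blocked gap at $\up(\tau)$), a non-spanning legal region, or an illegal region (which opens or extends a blocked gap). The geometric facts feeding the argument are: a free gap of $v$ is exactly a triangle spanning $v$ but not $\mathit{parent}(v)$; a blocked gap is the maximal slab region between two consecutive ``spanning-$v$'' boundaries that contains no spanning-$v$ triangle; the set of spanning-$v$ triangles changes only within $\widetilde H_v$; and the shadow gaps --- hence the break points between the gap subsequences stored in $L_v$ --- are untouched by the update. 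This case analysis is intricate, and is the part I expect to be the real obstacle; following the paper, I would carry the detailed induction out in an appendix and only state its conclusion here.

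For the running time, observe that $\sum_{i=1}^m|C_i| = O(|\widetilde H_v|) = O(|H_v|)$, so the topological sorts over all the $C_i$ and the bookkeeping that is $O(1)$ per visited region cost $O(|H_v|)$ altogether. Every step that touches $L_v$ or $\Sigma$ --- locating $\hat g_i$ and $\check g_i$, splitting $\check g_i$ at $\low(C_i)$, appending a region or a freshly made blocked gap to $\Sigma$, and splicing $\Sigma$ into $L_v$ --- is a constant number of search / split / join operations on balanced search trees of size $O(n)$, hence $O(\log n)$ each, and there are $O(|H_v|)$ of them, contributing $O(|H_v|\log n)$. The only operation whose worst-case cost is not bounded this way is ``replace by $\Sigma$ the gaps in $L_v$ from $\hat g_i$ to $\check g_i$'', which removes a run of old gaps that can be far longer than $|H_v|$. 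I would pay for these removals with a potential argument: with $\Phi = c\log n\cdot\sum_{v}|\gaplist(v)|$, each deletion of a gap costs $O(\log n)$ real time but drops $\Phi$ by $\log n$, hence amortizes to $O(1)$, while {\sc Modify}$(v)$ creates only $O(|H_v|)$ new gaps --- the new free gaps are the spanning-$v$ legal regions among the $C_i$, of which there are $O(|H_v|)$, and new blocked gaps are created $O(1)$ times per visited region and $O(1)$ times per $C_i$ --- each insertion costing $O(\log n)$ real time and raising $\Phi$ by $\log n$. Since a gap is deleted at most once after it is inserted (whether during construction, a previous {\sc Modify}$(v)$, or the current one), $\Phi$ stays non-negative and its initial value is charged to preprocessing. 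Summing, {\sc Modify}$(v)$ runs in $O(|H_v|\log n)$ amortized time, as claimed.
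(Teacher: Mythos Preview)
Your proposal follows essentially the same route as the paper: correctness by induction over the components $C_1,\dots,C_m$ (with the detailed case analysis deferred), and the running time handled by observing that everything except the wholesale removal of old gaps in step~6(b) is $O(|H_v|\log n)$, with those removals paid for by charging each deleted gap to its insertion.  Your potential $\Phi=c\log n\cdot\sum_v|\gaplist(v)|$ is just the bookkeeping form of the paper's charging argument.

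There is one subtlety you glossed over that the paper makes explicit.  The factor $\log n$ in your potential (equivalently, in the credit you deposit at insertion time) is not a constant: $n$ changes across triangulation-updates, so a gap inserted when the triangulation had $n'$ triangles carries only an $O(\log n')$ credit, while its later deletion costs $O(\log n)$ with the current $n$.  Your potential as written silently inflates when $n$ grows, which is not free.  The paper closes this hole by forward-referencing the periodic rebuild of Section~\ref{sec:rebuild}: between rebuilds the current $n$ satisfies $n=\Theta(n')$, so $\log n=\Theta(\log n')$ and the stale credit still covers the deletion.  You should add that sentence; without it the amortization does not quite go through.
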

\begin{proof}
The correctness is established by induction on the processing of the components $C_1,C_2,\ldots,C_m$.  The proof of correctness is deferred to Appeneix~\ref{app:modify}.  The running time of {\sc Modify}$(v)$ is clearly $O(|H_v|\log n)$ plus the time to delete gaps in $L_v$ that are replaced in step~6(b) of {\sc Modify}$(v)$.  Each such deletion takes $O(\log n)$ time.  In Section~\ref{sec:rebuild}, we will introduce a periodic rebuild of $\cal T$ and its auxiliary structures so that $n = \Theta(n')$, where $n'$ is the number of triangles in $\Delta_B$ in the initial construction or the last rebuild, whichever is more recent.  The gap deleted from $L_v$ might be inserted in the past since the initial construction or the last rebuild, or in the initial construction or the last rebuild, whichever is applicable and more recent.  We charge the $O(\log n)$ deletion time to the insertion of that deleted gap.  Note that we might have spent as little as $O(\log n')$ time in inserting that gap into $L_v$ in the past.  Nevertheless,  $\log n = \Theta(\log n')$ as $n = \Theta(n')$, and therefore, the charging argument goes through.  Thus, the total running time is $O(|H_v|\log n)$ amortized.
\end{proof}

%We say that a gap $g$ is \emph{created} by a triangulation-update if $g$ belongs to $\Sigma$ when $\Sigma$ replaces some subsequence of $L_v$ in step~6(b) of {\sc Modify}$(v)$, or when the boundary of $g$ is changed.  The latter case happens only in step~6(b) when a gap in $L_v$ is split into two by $\low(C_i)$, thus creating the gap that lies immediately below $\low(C_i)$.

\subsection{Recurse at children and return from recursions at children}

After running {\sc Modify}$(v)$ at a slab tree node $v$, we recurse at the children of $v$.  Let $w$ denote any child of $v$.   Recursing at $w$ requires the construction of $H_w$ and $\widetilde{H}_w$ from $H_v$ and $\widetilde{H}_v$ which is described in the following.
%Recall that the connected components in $H_v$ are divided into two groups, $\widetilde{H}_v$ and $H_v \setminus \widetilde{H}_v$, such that $\widetilde{H}_v$ consists of components that intersect both bounding lines of $\slab(v)$, and $H_v \setminus \widetilde{H}_v$ consists of components that intersect at most one bounding line of $\slab(v)$.  

We first construct a balanced search tree $S_w$ of legal and candidate illegal regions in $\widetilde{H}_w$.  All legal regions in $\widetilde{H}_w$ will be included as legal regions in $S_w$.  All illegal regions in $\widetilde{H}_w$ will be included as candidate illegal regions in $S_w$.  However, some of the candidate illegal regions in $S_w$ have no vertex in the interior of $\slab(w)$, so they will be removed later.  The construction of $S_w$ goes through two stages.

The first stage processes $\widetilde{H}_v$.  We initialize $S_w$ to be empty and then scan the connected components in $\widetilde{H}_v$ in vertical order.  For each component $C$ of $\widetilde{H}_v$, we process the regions in $C$ in topological order as follows.  Let $\tau$ be the region in $C$ being examined.  If $\tau$ does not intersect the interior of $\slab(w)$, ignore it.  Suppose that $\tau$ intersects the interior of $\slab(w)$.  If $\tau$ does not span $v$, then add $\tau \cap \slab(w)$ to $S_w$ as a legal region.  Suppose that $\tau$ spans $v$.  We tentatively add $\tau \cap \slab(w)$ as a candidate illegal region to $S_w$.  Then, we check if $\up(\tau \cap \slab(w)) = \low(\tau')$ for some candidate illegal region $\tau' \in S_w$, and if so, we merge $\tau \cap \slab(w)$ into $\tau'$.

The second stage processes $H_v \setminus \widetilde{H}_v$.  In this stage, more regions may be added to $S_w$.  We also build another set $\overline{S}_w$ of regions, which will become $H_w \setminus \widetilde{H}_w$.  We repeat the following for every component $C$ in $H_v \setminus \widetilde{H}_v$.  Compute the set of connected components in $C \cap \slab(w)$.  Those components in $C \cap \slab(w)$ that do not intersect both bounding lines of $\slab(w)$ are added to $\overline{S}_w$.  For each component $C'$ in $C \cap \slab(w)$ that intersects both bounding lines of $\slab(w)$, we insert $C'$ into $S_w$.  The location of $C'$ in $S_w$ is determined by a search using any intersection between $C'$ and the left bounding line of $\slab(w)$.  Moreover, the legal and candidate illegal regions in $C'$ are generated by a topological sort of $C'$ as described in the first stage.

Finally, we scan $S_w$ to check the candidate illegal regions.  Those that do not have any vertex in the interior of $\slab(w)$ are removed.  The pruned $S_w$ becomes $\widetilde{H}_w$.  The union of $\widetilde{H}_w$ with $\overline{S}_w$ is $H_w$, i.e., $\overline{S}_w = H_w \setminus \widetilde{H}_w$.  The processing time is $O(|H_v|\log n)$.  We are now ready to recurse at $w$ using $H_w$.  

The recursive call at $w$ will return a list, \emph{updated-trees}$(w)$, of the roots of some gap trees of $w$.  Each tree in \emph{updated-trees}$(w)$ stores a maximal contiguous subsequence of free and blocked gaps of $w$ that are induced by a blocked gap $g$ of $v$ affected by the triangulation-update.  
%The trees in \emph{updated-trees}$(w)$ are sorted in vertical order.
For every gap tree $T$ in \emph{updated-trees}$(w)$, we take an arbitrarily point $x$ covered by the gaps in $T$, find the blocked gap $g \in \gaplist(v)$ that contains $x$, and set a gap tree pointer from $g$ to $T$.

\begin{lemma}
	\label{lem:modify-recurse}
	Preparing for the recursive calls at the children of $v$ takes $O(|H_v|\log n)$ time.  Upon return from the recursions at the children $v_L$, $v_M$ and $v_R$ of $v$, it takes $O(|H_v|\log n)$ time to set gap tree pointers from the appropriate blocked gaps in $\gaplist(v)$ to gap trees in updated-trees$(v_L)$, updated-trees$(v_M)$, and updated-trees$(v_R)$.
\end{lemma}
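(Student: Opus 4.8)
The plan is to read Lemma~\ref{lem:modify-recurse} purely as a running‑time accounting of the two operations described immediately above it, and to handle its two sentences separately: first the cost of building $H_w$ and $\widetilde{H}_w$ for the three children $w\in\{v_L,v_M,v_R\}$ from $H_v$ and $\widetilde{H}_v$, and then the cost of installing the gap‑tree pointers once the recursions return. In both parts the strategy is identical: show that every region of $H_v$, and every gap tree returned by a child, is touched only a constant number of times, and that each such touch costs $O(\log n)$ because the only nontrivial operations are searches and insertions in balanced search trees of size $O(n)$ (the tree $\gaplist(v)$) or size $O(|H_v|)$ (the auxiliary trees $S_w$ and $\overline{S}_w$). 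Since $|H_v|\le n$, $O(\log|H_v|)=O(\log n)$, which is exactly the bound claimed.

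For the preparation part I would walk through the two stages of constructing $S_w$. In the first stage we scan the components of $\widetilde{H}_v$ in vertical order and topologically sort each one; for each region $\tau$ encountered we perform $O(1)$ geometric tests and then either insert $\tau\cap\slab(w)$ into $S_w$ as a legal region, or tentatively insert it as a candidate illegal region and, with a single $O(\log n)$ search in $S_w$, test whether $\up(\tau\cap\slab(w))=\low(\tau')$ for the candidate illegal region $\tau'$ just above it and merge if so. In the second stage, for each component $C$ of $H_v\setminus\widetilde{H}_v$ we cut $C$ along the two bounding lines of $\slab(w)$ at $O(1)$ cost per region of $C$; each resulting connected component is appended to $\overline{S}_w$ if it fails to meet both bounding lines of $\slab(w)$, and otherwise is inserted into $S_w$ by one $O(\log n)$ search on an intersection with the left bounding line, after which a topological sort generates its legal and candidate illegal regions. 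The concluding pruning scan of $S_w$ is linear, and then $\widetilde{H}_w$ and $\overline{S}_w=H_w\setminus\widetilde{H}_w$ are in hand. The accounting point to verify is that a single region of $H_v$ meets the slabs of at most the three children of $v$, so it contributes $O(1)$ region pieces overall and every component of some $C\cap\slab(w)$ contains at least one such piece; hence the work summed over $v_L,v_M,v_R$ is $O(|H_v|\log n)$, and $|S_w|,|\overline{S}_w|=O(|H_v|)$ keeps each balanced‑tree operation within $O(\log n)$. Correctness of the identities $\widetilde{H}_w=$ (pruned $S_w$) and $H_w=\widetilde{H}_w\cup\overline{S}_w$ is asserted by the construction itself, with Lemma~\ref{lem:ignore} guaranteeing that the components routed into $\overline{S}_w$ really do sit inside single blocked gaps and so need no further processing at $w$; only the timing requires checking.

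For the return part, once a recursive call at $w$ delivers \emph{updated-trees}$(w)$, we take for each gap tree $T$ in the list an arbitrary point $x$ covered by a gap of $T$, search the balanced search tree $\gaplist(v)$ for the blocked gap $g$ of $v$ containing $x$, and set the pointer $g\to T$; this is $O(\log n)$ per tree. It then remains to bound the total number of trees returned over the three children by $O(|H_v|)$. Each tree in \emph{updated-trees}$(w)$ holds a maximal contiguous subsequence of the changed free and blocked gaps of $w$, and since the maximal contiguous subsequences of $\gaplist(w)$ are precisely the images of the blocked gaps of $v$, that tree is induced by a single blocked gap $g$ of $v$ affected by the triangulation‑update. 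But a blocked gap $g$ of $v$ can be affected only if the triangulation inside $\reg(g)$ changed, i.e.\ only if some triangle of $T_U$ lies inside $\reg(g)$; restricted to $\slab(v)$ that triangle is a legal region of $H_v$ (it does not span $v$, hence not $\mathit{parent}(v)$), and, being connected, it lies inside exactly one blocked gap of $v$. Thus $v$ has $O(|H_v|)$ affected blocked gaps, each contributing at most one gap tree to each child's list, so the three lists together contain $O(|H_v|)$ trees and the pointer‑setting time is $O(|H_v|\log n)$.

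The arithmetic above is routine; the substance of the proof, and the step I expect to be the main obstacle, is the structural bookkeeping behind the two constant‑factor claims. On the preparation side one must argue carefully — in the style of the segment‑tree counting used for the static construction in Section~\ref{sec:construct} — that cutting the components of $H_v\setminus\widetilde{H}_v$ along the bounding lines of $\slab(w)$ and reinserting the pieces charges only $O(1)$ work per region of $H_v$ and yields $O(|H_v|)$ regions in total, without any component splitting uncontrollably. On the return side the crux is the correspondence asserting that every blocked gap of $v$ whose subtree was touched actually contains a region of $H_v$ while each region of $H_v$ lies in at most one blocked gap of $v$; pinning this down, and reconciling it with the boundary cases treated by Lemma~\ref{lem:ignore2}, is the delicate part. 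Once these two combinatorial facts are secured, the uniform $O(\log n)$ per‑item search cost yields $O(|H_v|\log n)$ for each half of the lemma.
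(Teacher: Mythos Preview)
Your proposal is correct and follows the same line as the paper, which in fact states Lemma~\ref{lem:modify-recurse} without a separate proof and relies on the preceding two paragraphs for justification; you have simply fleshed out those paragraphs into an explicit accounting argument.

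One remark on the return part: your bound on $|\text{\emph{updated-trees}}(w)|$ via the bijection with affected blocked gaps of $v$ is valid, but it is slightly more delicate than necessary. A more direct route---closer to what the paper implicitly relies on in Section~5.3---is to observe that the trees appended to \emph{updated-trees}$(w)$ during each execution of step~6(b) number at most $|\Sigma|+O(1)$, and summing $|\Sigma|$ over all iterations gives $O(|\widetilde{H}_w|)=O(|H_w|)=O(|H_v|)$. This avoids having to argue that two distinct trees in \emph{updated-trees}$(w)$ cannot map to the same blocked gap of $v$, a point your version leaves implicit. Either argument suffices.
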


\subsection{Updating the gap tree at a slab tree node}
\label{sec:update-gaptree}

We need to return a list \emph{updated-trees}$(v)$ of the roots of the gap trees of $v$ that are affected by the triangulation-update.  
%That is, these are the gap trees of $v$ that contain free or blocked gaps of $v$ created by the triangulation-update.  The tree roots in \emph{updated-trees}$(v)$ are sorted in vertical order.
The contruction of \emph{updated-trees}$(v)$ goes hand in hand with the execution of {\sc Modify}$(v)$.  Specifically, whenever we execute step~6(b) of {\sc Modify}$(v)$ to replace a subsequence of free and blocked gaps in the current $L_v$ by $\Sigma$, we need to add gap trees to \emph{updated-trees}$(v)$.   At the same time, we maintain the set $\cal F$ of gap trees of $v$ with respect to the current $L_v$.   

We first compute some gap trees for the maximal contiguous subsequences of gaps in $\Sigma$.  Let $(T_1, \cdots, T_k)$ be the resulting trees in order from top to bottom.  It is possible that $T_1 = T_k$.  There are two cases according to step~6(b), depending on whether $\Sigma$ replaces the gaps in $L_v$ from $\hat{g}_i$ to $\check{g}_i$, or from $\hat{g}_i$ to $\low(C_i)$.

Consider the first case.  Let $\hat{T}$ be the gap tree in $\cal F$ that contains $\hat{g}_i$.  Split $\hat{T}$ at $\up(\hat{g}_i)$ into two trees $\hat{T}_1$ and $\hat{T}_2$ that are above and below $\up(\hat{g}_i)$, respectively.  If $\hat{T}_1$ is non-empty, replace the occurrence of $\hat{T}$ in $\cal F$ by $\hat{T}_1$ and $\hat{T}_2$ in this order.  Symmetrically, let $\check{T}$ be the gap tree in the current $\cal F$ that contains $\check{g}_i$.  Split $\check{T}$ at $\low(\check{g}_i)$ into two trees $\check{T}_1$ and $\check{T}_2$ that are above and below $\low(\check{g}_i)$, respectively.  If $\check{T}_2$ is non-empty, replace the occurrence of $\check{T}$ in $\cal F$ by $\check{T}_1$ and $\check{T}_2$ in this order.  Next, replace by $(T_1,\cdots,T_k)$ the gap trees in $\cal F$ from the one containing $\hat{g}_i$ to the one containing $\check{g}_i$.  Finally, if the highest gap in $T_1$ is adjacent to the lowest gap of the gap tree $T$ in $\cal F$ above $T_1$, merge $T$ and $T_1$; if the lowest gap in $T_k$ is adjacent to the highest gap of gap tree $T'$ in $\cal F$ below $T_k$, merge $T_k$ and $T'$. 

The corresponding change to \emph{updated-trees}$(v)$ is as follows.  The gap trees $T_2, \cdots, T_{k-1}$ are inserted into \emph{updated-trees}$(v)$.  If $T$ and $T_1$ are merged, insert the merge of $T$ and $T_1$ at the front of \emph{updated-trees}$(v)$.  If $T$ and $T_1$ are not merged, insert $T_1$ at the front of \emph{updated-trees}$(v)$ and then if $\hat{T}_1$ is non-empty, insert $\hat{T}_1$ at the front of \emph{updated-trees}$(v)$ afterwards.  The handling of $T'$ and $T_k$ is similar.   If $T'$ and $T_k$ are merged, append the merge of $T'$ and $T_k$ to \emph{updated-trees}$(v)$.  If $T'$ and $T_k$ are not merged, append $T_k$ to \emph{updated-trees}$(v)$ and then if $\check{T}_2$ is non-empty, append $\check{T}_2$ to \emph{updated-trees}$(v)$ afterwards.   

The case of $\Sigma$ replacing the gaps in $L_v$ from $\hat{g}_i$ to $\low(C_i)$ is handled similary.  The only difference is that we first split $\check{g}_i$ using $\low(C_i)$, thus modifying the gap tree $\check{T}$ in $\cal F$ that contains $\check{g}_i$, and then split the modified $\check{T}$ using $\low(C_i)$.
%we split the gap tree $\check{T}$ in $\cal F$ that contains $\check{g}_i$ using $\low(C_i)$ instead of $\low(\check{g}_i)$.

Clearly, the total running time is $O(|H_v|\log n)$ plus the time to delete the gap trees in $\cal F$ that are replaced.  Each such deletion takes $O(\log n)$ time, assuming that $\cal F$ is represented by a balanced search tree.  In Section~\ref{sec:rebuild}, we will introduce a periodic rebuild of $\cal T$ and its auxiliary structures so that $n = \Theta(n')$, where $n'$ is the number of triangles in $\Delta_B$ in the initial construction or the last rebuild, whichever is more recent.  The gap tree deleted from $\cal F$ might be inserted in the past since the initial construction or the last rebuild, or in the initial construction or the last rebuild, whichever is applicable and more recent.  We charge the $O(\log n)$ deletion time to the insertion of that deleted gap tree.  Note that we might have spent as little as $O(\log n')$ time in inserting that gap tree into $\cal F$ in the past.  Nevertheless,  $\log n = \Theta(\log n')$ as $n = \Theta(n')$, and therefore, the charging argument goes through.  Thus, the total running time is $O(|H_v|\log n)$ amortized.

\begin{lemma}
	\label{lem:modify-amort}
	The gap trees of $v$ can be updated in $O(|H_v|\log n)$ amortized time.
\end{lemma}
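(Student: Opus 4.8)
The plan is to account separately for the two kinds of work that occur while updating the gap trees of $v$: the ``constructive'' work of building the new gap trees $(T_1,\dots,T_k)$, splitting $\hat T$ and $\check T$, and performing the constant number of boundary merges, all of which we charge to the current triangulation-update; and the ``destructive'' work of deleting from $\cal F$ the gap trees that get replaced, which must be charged back in time via an amortization argument, exactly as in the proof of Lemma \ref{lem:modify-time}. First I would observe that, by construction, the gaps in $\Sigma$ arise only from the regions of the components $C_1,\dots,C_m$ of $\widetilde H_v$, together with at most one boundary blocked gap at the top and one at the bottom of each component; hence $\sum_i |\Sigma| = O(|H_v|)$ over the whole run of {\sc Modify}$(v)$. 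Since a biased search tree on $\ell$ items is built in $O(\ell)$ time~\cite{paper:bent85}, constructing all the $T_j$'s over all executions of step~6(b) takes $O(|H_v|)$ time, and the $O(\log n)$-time splits of $\hat T,\check T$ and the $O(1)$ boundary merges per invocation of step~6(b) contribute a further $O(|H_v|\log n)$.

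Next I would bound the number of gap trees of $v$ that are created or destroyed. Each execution of step~6(b) replaces a contiguous run of gap trees in $\cal F$ by $(T_1,\dots,T_k)$, so the trees inserted into \emph{updated-trees}$(v)$ (namely $T_1,\dots,T_k$, plus possibly $\hat T_1$, $\check T_2$, or merged variants) number $O(k)$, and $k = O(|\Sigma|)$ since each $T_j$ holds at least one gap of $\Sigma$. Summed over the whole run, at most $O(|H_v|)$ gap trees are created and at most $O(|H_v|)$ are destroyed per unit increase in the number of gaps, so $|$\emph{updated-trees}$(v)| = O(|H_v|)$ and the total number of deletions from $\cal F$ is $O(|H_v|)$.

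Finally I would carry out the amortization for the deletions, which is the only nontrivial point. Each gap tree deleted from $\cal F$ was either present since the last rebuild/initial construction or was inserted by some earlier triangulation-update; in either case we charge its $O(\log n)$ deletion cost to that insertion event. We must check the charge is bounded: an insertion might have cost only $O(\log n')$ where $n'$ is the triangulation size at that earlier time, but by the periodic rebuild of $\cal T$ introduced in Section~\ref{sec:rebuild} we have $n = \Theta(n')$, hence $\log n = \Theta(\log n')$, so each insertion is charged $O(\log n') = O(\log n)$ and absorbs the charge up to a constant factor. The main obstacle is precisely making this bookkeeping airtight --- ensuring that every gap tree deleted in this step is uniquely chargeable to a prior insertion in $\cal F$ (no double-charging across the several sub-cases of step~6(b): $\Sigma$ versus $\hat T_1,\hat T_2,\check T_1,\check T_2$ versus the boundary merges) and that the rebuild invariant $n=\Theta(n')$ is actually maintained throughout; both of these are handled by the same rebuild mechanism already used for Lemma \ref{lem:modify-time}. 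Combining the $O(|H_v|\log n)$ constructive cost with the amortized-$O(|H_v|\log n)$ destructive cost yields the claimed $O(|H_v|\log n)$ amortized bound.
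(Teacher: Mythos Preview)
Your proposal is correct and follows essentially the same approach as the paper: separate the running time into an $O(|H_v|\log n)$ constructive part and the cost of deleting replaced gap trees from $\cal F$, then amortize each $O(\log n)$ deletion by charging it to the corresponding insertion, with the periodic rebuild of Section~\ref{sec:rebuild} guaranteeing $n=\Theta(n')$ so that $\log n=\Theta(\log n')$. You spell out the constructive-cost bookkeeping in more detail than the paper (which simply asserts ``Clearly, the total running time is $O(|H_v|\log n)$ plus the time to delete the gap trees in $\cal F$ that are replaced''), but the substance and the charging argument are identical.
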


\subsection{Periodic rebuild}
\label{sec:rebuild}

Lemmas~\ref{lem:modify-time}--\ref{lem:modify-amort} show that the update time is $O(|H_v|\log n)$ amortized, provided that $n = \Theta(n')$, where $n'$ is the number of vertices in the initial construction or the last rebuild, whichever is more recent.  To enforce this assumption, we need to rebuild the slab tree $\cal T$ and its auxiliary structures periodically.  Let $c < 1/2$ be a constant.  We rebuild $\cal T$ and its auxiliary structures with respect to $\cal L$ and the current $\Delta_B$ when the total size of triangulation-updates exceeds $cn'$ since the initial construction or the last rebuild, where $n'$ was the number of triangles in $\Delta_B$ then.
%Let $n$ denote the number of triangles in the current $\Delta_B$.  The depth of the slab tree is $\log_2 n'$, which is fine as long as $n = \Theta(n')$.  However, the discrepancy between $n$ and $n'$ grows over time.  Therefore, 

\begin{lemma}
\label{lem:1}
%Suppose that there is a given set $\cal L$ of vertical lines, a fixed but unknown query point distribution in $\real^2$, and an oracle that can return in $O(1)$ time the probability of a query point falling into a polygonal region of constant complexity.  There exists a dynamic point location structure for maintaining a triangulation with vertices on lines in $\cal L$ with the following guarantees.  
Let $n$ denote the number of triangles in $\Delta_B$.
\begin{itemize}
	\item $n = \Theta(n')$, where $n'$ is the number of triangles in $\Delta_B$ in the initial construction or the last rebuild, whichever is more recent.
	\item Any query can be answered in $O(\mathrm{OPT} + \log \log n)$ expected time, where {\em OPT} is the minimum expected query time of the best point location decision tree for $\Delta$.
	\item The data structure uses $O(n\log n)$ space and can be constructed in $O(|{\cal L}|\log n)$ time.
	\item A triangulation-update of size $k \leq n/2$ takes $O(k\log^2 n + (|{\cal L}|\log n)/n)$ amortized time.
\end{itemize}
\end{lemma}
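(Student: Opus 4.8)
\emph{Plan.} The four bullets are, for the most part, a repackaging of facts already proved in Sections~\ref{sec:dyntridatastructure}--\ref{sec:rebuild}; the one genuinely new ingredient is a segment‑tree‑style bound $\sum_v |H_v| = O(k\log n)$ on the total size of the directed acyclic graphs encountered during an update traversal. For the first bullet I would argue as follows: a triangulation‑update of size $s$ deletes the $|\Delta\cap R_U|$ triangles of $\Delta\cap R_U$ from $\Delta_B$ and inserts the $|T_U|$ triangles of $T_U$, and since $s=|\Delta\cap R_U|+|T_U|$ (and the portion $\Delta_B\setminus\Delta$ is never altered), the number of triangles of $\Delta_B$ changes by at most $s$ in absolute value. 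By the rebuild policy of Section~\ref{sec:rebuild}, between the initial construction or the last rebuild and the next one the cumulative size of triangulation‑updates stays below $cn'$ with $c<1/2$, so $n$ stays in the interval $((1-c)n',(1+c)n')$; hence $n=\Theta(n')$ and in particular $\log n=\Theta(\log n')$, which is exactly what Lemmas~\ref{lem:modify-time}--\ref{lem:modify-amort} assumed for their charging arguments.

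The second and third bullets are almost immediate. The expected query‑time bound $O(\mathrm{OPT}+\log\log n)$ is Lemma~\ref{lem:dyntrianalysis} (with $D_P$ disposing of the ``outside $\Delta$'' case in $O(\mathrm{OPT})$ time and ${\cal T}^*$ serving as the $O(\log n)$ backup, as analysed there). The $O(n\log n)$ space and $O(|{\cal L}|\log n)$ construction time come from Lemma~\ref{lem:slabtree}, together with the observation that $D_P$, the backup structure ${\cal T}^*$, and $\Delta_B$ itself all fit within $O(n\log n)$ space and can be built in $O(n\log n)$ time, and that $|{\cal L}|=\Omega(n)$ because distinct vertices of $\Delta$ lie on distinct lines of ${\cal L}$ (so the construction cost of these auxiliary structures is absorbed into $O(|{\cal L}|\log n)$).

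The fourth bullet is where the work lies. An update is the recursive traversal of ${\cal T}$ of Section~\ref{sec:update}: at each visited node $v$ we run {\sc Modify}$(v)$ to update $\gaplist(v)$, prepare $H_w,\widetilde H_w$ for the children, recurse, and then patch the gap‑tree pointers and \emph{updated-trees}$(v)$ on the way back. By Lemmas~\ref{lem:modify-time}, \ref{lem:modify-recurse}, and~\ref{lem:modify-amort} the amortized cost at $v$ is $O(|H_v|\log n)$, the ``amortized'' coming solely from charging each $O(\log n)$ deletion of a replaced gap or gap tree to its earlier insertion — legitimate because $\log n=\Theta(\log n')$ by the first bullet. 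By Lemma~\ref{lem:ignore} only the components of $\widetilde H_v$ matter, and we descend to a child only when its $H_w$ is non‑empty, so the total amortized traversal cost is $O\bigl(\log n\cdot\sum_v |H_v|\bigr)$. I would then prove $\sum_v |H_v| = O(k\log n)$ by mirroring the bound $\sum_v |G_v| = O(n\log n)$ of Section~\ref{sec:construct}: every region of $H_v$ arises from a triangle $t\in T_U$, of which there are at most $k$, and for a fixed $t$ either $t\cap\slab(v)$ is a legal region that spans $v$ but not $\mathit{parent}(v)$ — which, just as with storing a segment in a segment tree, happens at $O(\log n)$ nodes — or $t\cap\slab(v)$ lies in an illegal region, which by definition has a boundary vertex in the interior of $\slab(v)$, forcing $v$ onto one of the two root‑to‑leaf paths of ${\cal T}$ whose slabs contain the leftmost or rightmost vertex of $t$, again $O(\log n)$ nodes. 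Summing over the $\le k$ triangles of $T_U$ gives $\sum_v|H_v| = O(k\log n)$, hence $O(k\log^2 n)$ amortized merge time. Finally the periodic rebuild costs $O(|{\cal L}|\log n)$ by Lemma~\ref{lem:slabtree} and is triggered only after $\Theta(n')=\Theta(n)$ units of update size have accumulated, contributing $O\bigl((|{\cal L}|\log n)/n\bigr)$ to the amortized cost; combining, a triangulation‑update of size $k\le n/2$ takes $O(k\log^2 n + (|{\cal L}|\log n)/n)$ amortized time.

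\emph{Main obstacle.} The crux is the $\sum_v |H_v|=O(k\log n)$ bound — specifically, showing that the ``illegal regions'', i.e.\ the pieces of $T_U$‑triangles that span the parent slab, cannot proliferate across unboundedly many nodes. This rests on the deliberate design choice that such a region is recorded at $v$ only when it has a vertex strictly inside $\slab(v)$, which confines it to the $O(\log n)$ ancestors of a leaf slab containing an endpoint of the triangle, exactly as in the static analysis of $\sum_v|G_v|$. Everything else — the query time, the space and construction bounds, and the per‑node $O(|H_v|\log n)$ update cost — follows by direct appeal to Lemmas~\ref{lem:dyntrianalysis}, \ref{lem:slabtree}, \ref{lem:modify-time}, \ref{lem:modify-recurse}, and~\ref{lem:modify-amort}, plus the elementary rebuild accounting above.
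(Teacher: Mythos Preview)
Your overall approach matches the paper's, and the first three bullets are handled exactly as the paper does (rebuild accounting for $n=\Theta(n')$, then citations of Lemmas~\ref{lem:dyntrianalysis} and~\ref{lem:slabtree}). For the fourth bullet you correctly identify $\sum_v|H_v|=O(k\log n)$ as the crux and handle the periodic-rebuild amortization as the paper does.

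The gap is in your treatment of illegal regions. You claim that if $t\cap\slab(v)$ lies in an illegal region then $v$ must lie on one of the two root-to-leaf paths through the leftmost or rightmost vertex of $t$. But an illegal region at $v$ is, by definition, composed of triangles that \emph{span} $\mathit{parent}(v)$; for any such $t$, both its leftmost and rightmost vertices lie outside $\slab(\mathit{parent}(v))\supseteq\slab(v)$, so $v$ is on neither of those paths. The interior boundary vertex that qualifies the component as ``illegal'' is indeed some vertex of $T_U$, but it need not be a vertex of the particular triangle $t$ you are tracking---your per-triangle accounting cannot see it. The paper's fix is to abandon the per-triangle view for illegal regions: each illegal region has $O(1)$ complexity and at least one boundary vertex $x$ in the interior of $\slab(v)$; charge the region to $x$. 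Distinct illegal regions at the same node are disjoint and hence charge distinct vertices, and each vertex of $T_U$ lies in the interior of $O(\log n)$ slabs (a single root-to-leaf path), giving the $O(k\log n)$ total. Incidentally, your root-to-leaf-path argument \emph{does} correctly cover the legal regions that fail to span $v$---a case your dichotomy ``spans $v$ but not $\mathit{parent}(v)$, or illegal'' omits---since those force an extreme vertex of $t$ into $\slab(v)$.

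A minor omission: you do not account for updating the backup structure ${\cal T}^*$; the paper handles this as $O(k)$ edge insertions and deletions at $O(\log^2 n)$ each, absorbed into the $O(k\log^2 n)$ term.
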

\begin{proof}
Let $U$ denote a triangulation-update.  Recall that $T_U$ is the new triangulation of the polygonal region $R_U$ affected by $U$.  Both $|R_U|$ and $|T_U|$ are $O(k)$, and $T_U$ can be constructed in $O(k\log k)$ time.  The periodic rebuild ensures that $n \leq n' + cn' + k \leq (c+1)n' + n/2 \Rightarrow n \leq 2(c+1)n'$ and $n \geq n'- cn' - k \geq (1-c)n' - n/2 \Rightarrow n \geq 2(1-c)n'/3$.  It follows that $n = \Theta(n')$.
	
The query time bound follows from Lemma~\ref{lem:dyntrianalysis}.  The space and preprocessing time follow from Lemma~\ref{lem:slabtree}.  The correctness of the update follows from the discussion in Sections~\ref{sec:update-gaplist}--\ref{sec:rebuild} and Appendix~\ref{app:modify}.  It remains to bound the amortized update time.  

By Lemmas~\ref{lem:modify-time}--\ref{lem:modify-amort}, the update time is $O(\sum_v |H_v| \log n)$ amortized.  Each legal region in $H_v$ is part of a triangle $t \in T_U$ that intersects the interior of $\slab(v)$ but does not span $\mathit{parent}(v)$.  The depth of the slab tree is $O(\log n)$.  Therefore, as in the case of a segment tree, $t$ is stored as $O(\log n)$ legal regions at $O(\log n)$ nodes of the slab tree.  This contributes a term of $O(k\log n)$ to $\sum_v |H_v|$.  Each illegal region $\tau$ in $H_v$ has a boundary vertex, say $x$, in the interior of $\slab(v)$.  Also, the complexity of $\tau$ is $O(1)$.   We charge the complexity of $\tau$ to $x$.  At the node $v$, we cannot charge the complexity of another illegal region in $\slab(v)$ to $x$.  Since the slab tree has $O(\log n)$ depth, $x$ is charged $O(\log n)$ times.  Only vertices of $T_U$ can be charged.  It follows that the total complexity of all illegal regions in all $H_v$'s is $O(k\log n)$.  This allows us to conclude that $\sum_v |H_v| = O(k\log n)$, implying that the update time is $O(k\log^2 n)$ amortized.  The periodic rebuilding of $\cal T$ adds another $O((|{\cal L}|\log n)/n)$ amortized time.

We also need to update the backup worst-case dynamic planar point location structure.  The triangulation-update $U$ can be formuated as a seuqence of $O(k)$ edge deletions and edge insertions.  We can use any one of the data structures in~\cite{paper:Arge2006, paper:Chan2015, paper:Chiang1992, paper:Preparata1989} to represent the  backup structure.  Each edge insertion or deletion can be done in $O(\log^2 n)$ time.
\end{proof}

\section{Allowing arbitrary vertex location}
\label{sec:general}

In this section, we discuss how to allow a new vertex to appear anywhere instead of on one of the fixed lines in $\cal L$.  This requires revising the slab tree structure.  The  main issue is how to preserve the geometric decrease in the probability of a query point falling into the slabs of internal nodes on every root-to-leaf path in $\cal T$.

Initialize $\cal L$ to be the set of vertical lines through the vertices of the initial $\Delta_B$.  Construct the initial slab tree $\cal T$ for $\Delta_B$ and $\cal L$ using the algorithm in Section~\ref{sec:construct}.  Whenever $\cal T$ is rebuilt, we also rebuild $\cal L$ to be the set of vertical lines through the vertices of the current $\Delta_B$.   %The nodes in the slab tree immediately after the initial construction or the last rebuild are called the \emph{inactive} nodes.  
%
%Immediately after the initial construction or the last rebuild of $\cal T$ and $\cal L$, every line in $\cal L$ passes through a unique vertex of $\Delta_B$ except for the leftmost and rightmost lines in $\cal L$ that pass through the vertical sides of $B$.  
Between two successive rebuilds, we grow $\cal L$ monotonically as triangulation-updates are processed.   Although every vertex of $\Delta_B$ lies on a line in $\cal L$, some line in $\cal L$ may not pass through any vertex of $\Delta_B$ between two rebuilds.

The free, blocked, and shadow gaps of a slab tree node are defined as in Section~\ref{sec:dyntridatastructure}.  So are the gap trees of a slab tree node.  However, gap weights are redefined in Section~\ref{sec:gap-weight} in order that they are robust against small geometric changes.

When a triangulation-update $U$ is processed, 
%the modification of the slab tree is different from that in Section~\ref{sec:update}.  There is an additional stage of processing the vertical lines through the vertices of $T_U$ we
we first process the vertical lines through the vertices of $T_U$ before we process $T_U$ as specified in Section~\ref{sec:update}.   For each vertical line $\ell$ through the vertices of $T_U$, if $\ell \not\in {\cal L}$, we insert $\ell$ into $\cal L$ and then into $\cal T$.  To allow for fast line insertion into $\cal T$, we increae the number of children of an internal node to $O(\log n)$, and we need to classify the children appropriately.
%This insertion splits a leaf node $v$ of $\cal T$ and possibly triggers the modification of some auxiliary structures.  Moreover, after some insertions, a node of $\cal T$ may have $O(\log n)$ children, which is different from the initial slab tree in whch every internal node has at most three children.   The insertions of vertical lines into $\cal T$ may also trigger the rebuild of subtrees of $\cal T$.  
Sections~\ref{sec:child} and~\ref{sec:insert-line} provide the details of this step.  The processing of $T_U$ is discussed in Section~\ref{sec:update2}.

%After inserting all new vertical lines into $\cal T$, we process $T_U$ in almost the same way as in Section~\ref{sec:update}, except that we may now recurse in $O(\log n)$ children of a node of $\cal T$ instead of at most three.

Querying is essentially the same as in Section~\ref{subsec:loglognanalysis} except that we need a fast way to descend the slab tree as some nodes have $O(\log n)$ children.  This is described in Section~\ref{sec:child}. 

\subsection{Weights of gaps and more}
\label{sec:gap-weight}

%We will see that the geometry of a free gap $g$ may change slightly as triangulation-updates are processed.  By our previous definition, the weight of $g$ also changes because it is equal to $\Pr(g)$.  
Let $n'$ be the number of triangles in $\Delta_B$ at the time of the initial construction or the last rebuild of $\cal T$, whichever is more recent.  Let $N = 2(c+1)n'$, where $c$ is the constant in the threshold $cn'$ for triggering a rebuild of $\cal T$.

For every free gap $g$, let $t_g$ denote the triangle in the current $\Delta_B$ that contains $g$, and we define the weight of $g$ to be $\wt(g) = \max\left\{\Pr(t_g), 1/N\right\}$.   The alternative $1/N$ makes the access time of $g$ in a gap tree no worse than $O(\log N) = O(\log n)$.
%\[
%\wt(g) = \max\left\{\frac{1}{N(2\log_2 N + 3)}, \, \frac{\Pr(t_g)}{2\log_2 N + 3} \right\},
%\]

\cancel{
What is the reason for setting $\wt(g)$ this way?  
%A rebuild of $\cal T$ and its auxiliary structures will be triggered when the total size of triangulation-updates since the last rebuild exceeds $cn'$ for the first time.  This allows us to show that the number $n$ of triangles in the current $\Delta_B$ 
By rebuiding, the number $n$ of triangles in $\Delta_B$ is at most $N$, provided that each triangulation-update has size at most $n/2$.  
%Thus, the number of gaps in any slab is no more than $N$.  
We will see that a node in $\cal T$ has at most $2\log_2 N + 3$ children.   A weight of $\frac{\Pr(t_g)}{2\log_2 N + 3}$ ensures that the total weight of free gaps induced by $t_g$ among the children of a node is at most $\Pr(t_g)$.  This is important in the query time analysis.  The alternative $\frac{1}{N(2\log_2 N + 3)}$ 
%is like replacing $\Pr(t_g)$ by $1/N$, i.e., a uniform distribution of the query point over all triangles.  It avoids very small gap weights, which will impact the access time of gaps in a biased search tree.
makes the access time of a biased search tree storing $g$ no worse than $O(\log n)$.
}

For every blocked gap $g$, every vertex $p$ of $\Delta_B$, and every node $v$ of $\cal T$, define:

\begin{itemize}
	\item $\wt(p)$ = sum of  $\max\left\{\Pr(t), 1/N\right\}$ over all triangles $t \in \Delta_B$ incident to $p$.
	\item $\mathit{vert}(g) = \{ \text{vertex $p$ lying in $g$} : \exists\, \text{triangle}\, pqr \in \Delta_B \,\, \text{s.t.}\,\, \mathit{interior}(pqr) \cap \mathit{interior}(g) \not= \emptyset \}$.
	\item $\wt(g)  = \sum_{p \in \mathit{vert}(g)} \wt(p)$.
	\item \emph{blocked-gaps}$(p)$ = $\{\text{blocked gap $g$} : p \in \mathit{vert}(g)\}$.
	\item $\mathit{vert}(v)$ = the subset of vertices of $\Delta_B$ that lie in $\slab(v)$.
	\item $\mathit{lines}(v)$ =  the subset of lines in $\cal L$ that intersect $\slab(v)$.
\end{itemize}

The set $\mathit{vert}(g)$ is only used for notational convenience.  The set \emph{blocked-gaps}$(p)$ is not stored explicitly.  We discuss how to retrieve \emph{blocked-gaps}$(p)$ in Section~\ref{sec:child}.  The sets
$\mathit{vert}(v)$ and $\mathit{lines}(v)$ are stored as balanced search trees in increasing order of $x$-coordinates.

\subsection{Revised slab tree structure}  
\label{sec:child}

\paragraph*{Node types}  A vertical line \emph{pierces} a slab if the line intersects the interior of that slab.  An internal node $v$ of $\cal T$ has children of two possible types.

\begin{itemize}
	
	\item {\sc Heavy-child:}~A child $w$ of $v$ is a \emph{heavy-child} if $\Pr(\slab(w)) > \Pr(\slab(v))/2$.   
	\begin{itemize}
		%\item The interior of $\slab(w)$ contains no vertex of $\Delta_B$, so $w$ is a leaf.  
		%There is at most one heavy-child of $v$, and there may be none.  
		
		\item The heavy-child $w$ may be labelled \emph{active} or \emph{inactive} upon its creation.  This label will not change.  If $w$ was created in the initial construction or a rebuild of $\cal T$, then $w$ is inactive.  
		%Later on, whenever a slab subtree is rebuilt (possibly the whole $\cal T$), all heavy-children in that subtree are labelled inactive.  New heavy children obtained by expanding an existing leaf are labelled active.  Details will be given when we explain how to insert a vertical line into $\cal T$ later.
		
		\item If $w$ is inactive, $\gaplist(w)$ and the gap trees of $w$ are represented as before.  If $w$ is active, then $w$ is a leaf, and $\gaplist(w)$ and the gap trees of $w$ are stored as persistent data structures using the technique of node copying~\cite{driscoll89}.   
		%We use node copying~\cite{driscoll89} to make $\gaplist(w)$ and the gap trees of $w$ persistent.  Node copying incurs an extra cost of $O(1)$ space and amortized time per update.
	\end{itemize}
	
	\item {\sc Light-child:}~There are two sequences of \emph{light-children} of $v$, denoted by \emph{left-light}$(v)$ and \emph{right-light}$(v)$, which satisfy the following properties.
	\begin{itemize}
		\item For each light child $w$ of $v$, $\Pr(\slab(w)) \leq \Pr(\slab(v))/2$.
		
		\item For each light child $w$ of $v$, $\gaplist(w)$ and the gap trees of $w$ are represented as before.
		
		\item Let \emph{left-light}$(v)$ = $(w_1, w_2, \cdots, w_k)$ and let \emph{right-light}$(v)$ = $(w_{k+1}, w_{k+2}, \cdots, w_m)$ in the left-to-right order of the nodes.
		\begin{itemize}
			\item For $i \in [1,k-1] \cup [k+1,m-1]$, $\slab(w_i)$ and $\slab(w_{i+1})$ are interior-disjoint and share a boundary.
			\item If $v$ has an active heavy-child $w$, then $\slab(w)$ is bounded by the right and left boundaries of $\slab(w_k)$ and $\slab(w_{k+1})$, respectively.  Otherwise, the right boundary of $\slab(w_k)$ is the left boundary of $\slab(w_{k+1})$.
			\item If $v$ does not have an active heavy child, $v$ has at most $2\log_2 N + 2$ children.
			\item If $v$ has an active heavy-child, the following properties are satisfied.
			\begin{romani}
				\item For $r \geq 1$, a light-child $w$ of $v$ has \emph{rank} $r$ if the number of lines in $\cal L$ that intersect $\slab(w)$ is in the range $[2^r,2^{r+1})$.  So $r \leq \log_2 N$, where $N = 2(c+1)n'$.  We denote $r$ by $\mathit{rank}(w)$.  
				\item We have $\mathit{rank}(w_1) > \cdots > \mathit{rank}(w_k)$ and $\mathit{rank}(w_{k+1}) < \cdots < \mathit{rank}(w_m)$.  For $r \in [1,\log_2 N]$, there is at most one light-child of rank $r$ in each of \emph{left-light}$(v)$ and \emph{right-light}$(v)$ .
			\end{romani}
			
		\end{itemize}

	\end{itemize}
	
\end{itemize}

%Note that \emph{left-light}$(v)$ or \emph{right-light}$(v)$ can be an empty or singleton set.  For example, immediately after the initial construction or the last rebuild of $\cal T$, a node may have three children $v_L$, $v_M$, and $v_R$ such that $v_M$ is the heavy-child of $v$, \emph{left-light}$(v) = (v_L)$, and \emph{right-light}$(v) = (v_R)$.  More interesting cases arise when new vertical lines are inserted into $\cal T$.  

Since an internal node has $O(\log n)$ children, each triangle $t \in \Delta_B$ induces $O(\log n)$ free gaps at each level of the slab tree $\cal T$, resulting in $O(\log^2 n)$ free gaps.  Each vertex of $t$ may also contribute to the boundary complexity of at most $O(\log n)$ blocked gaps at slab tree nodes whose slabs contain that vertex of $t$.  As a result, the revised slab tree and its auxiliary structures take up $O(n\log^2 n)$ space.

\paragraph*{Node access}  
%Since a node $v$ of $\cal T$ may have $O(\log n)$ children, it is time-consuming to check all children of  $v$ in order to descend to the appropriate one.  Storing the $O(\log n)$ children in a balanced search tree gives an $O(\log\log n)$ overhead, which is too slow for our querying.  Instead, 
Each node $v$ keeps a biased search tree \emph{children}$(v)$.  The weight of a child $w$ in \emph{children}$(v)$ is  $\max\bigl\{\frac{\Pr(\slab(v))}{2\log_2 N + 2}, \, \Pr(\slab(w))\bigr\}$, where $N = 2(c+1)n'$.   Since $n  = \Theta(n')$, accessing
$w$ takes $O\bigl(\min\bigl\{\log \frac{\Pr(\slab(v))}{\Pr(\slab(w))}, \, \log\log n \bigr\} \bigr)$ time.
%as $n' = \Theta(n)$. 

%There is another kind of child access.  In Section~\ref{sec:dyntridatastructure}, each blocked gap $g$ of $v$ keeps two pointers to the gap trees induced by $g$ at the left and right children of $v$.  Now that $v$ may have $O(\log n)$ children, say $w_1, \cdots, w_m$, 
For each blocked gap $g$ of $v$, we use a biased search tree $T_g$ to store pointers to the gap trees induced by $g$ at the children of $v$. The weight of the node in $T_g$ that represents a gap tree $T$ at a child $w$ is $\max\bigl\{\frac{\Pr(\slab(v))}{2\log_2 N + 2}, \, \Pr(\slab(w))\bigr\}$.  Accessing $T$ via $T_g$ takes  $O\bigl(\min\bigl\{\log \frac{\Pr(\slab(v))}{\Pr(\slab(w))}, \, \log\log n \bigr\} \bigr)$ time.
%It takes linear time to construct \emph{children}$(v)$ and $T_g$~\cite{paper:bent85}.

Given a vertex $p$ of $\Delta_B$, there are $O(\log n)$ blocked gaps in \emph{blocked-gaps}$(p)$ and we can find them as follows.  Traverse the path from the root of $\cal T$ to the leaf whose slab contains $p$, and for each node $v$ encountered, we search in $\gaplist(v)$ to retrieve the blocked gap of $v$, if any, that contains $p$.  The time needed is $O(\log^2 n)$.

%\vspace{6pt}

%\noindent {\bf Construction time and space.}  Since a node of $T$ may have $O(\log n)$ children, using the same analysis in Section~\ref{sec:construct}, every triangle $t$ in $\Delta_B$ is stored at $O(\log^2 n)$ nodes of $\cal T$.  So the construction time of $\cal T$ is $O(n\log^2 n)$.  The space required for storing gap lists and gap trees is also $O(n\log^2 n)$.  The  auxiliary structures \emph{vert}$(\cdot)$'s and \emph{lines}$(\cdot)$'s use $O(n\log n)$ space.

\subsection{Insertion of a vertical line into the slab tree}
\label{sec:insert-line}

Let $\ell$ be a new vertical line.  We first insert $\ell$ into $\cal L$ and then  insert $\ell$ into $\cal T$ in a recursive traversal towards the leaf whose slab is pierced by $\ell$.  
%However, the recursive insertion may stop before reaching that leaf. 

\paragraph*{Internal node}  Suppose that we visit an internal node $v$.  We first insert $\ell$ into $\mathit{lines}(v)$.  We query \emph{children}$(v)$ to find the child slab pierced by $\ell$.  If $v$ does not have an active heavy-child, recursively insert $\ell$ at the child found.  Otherwise, we work on \emph{left-light}$(v)$ or \emph{right-light}$(v)$ as follows.

\begin{itemize}
	
	\item Case~1: $\ell$ pierces $\slab(w_j)$ for some $w_j \in \text{\emph{left-light}}(v)$.  If $\slab(w_j)$ intersects fewer than $2^{\mathit{rank}(w_j)+1}$ lines in $\cal L$, recursively insert $\ell$ into $w_j$ and no further action is needed.\footnote{The line $\ell$ has already been inserted into $\cal L$.}  Otherwise, $\slab(w_j)$ intersects $2^{\mathit{rank}(w_j)+1}$ lines in $\cal L$, violating the structural property of a light-child.  In this case, we merge some nodes in \emph{left-light}$(v)$ as follows.  
	
	\hspace*{12pt}Let \emph{left-light}$(v) = (w_1,\cdots,w_j,\cdots)$.  Find the largest $i \leq j$ such that  the number of lines in $\cal L$ that intersect $\slab(w_i) \cup \cdots \cup \slab(w_j)$ is in the range $[2^r, 2^{r+1})$ for some $\mathit{rank}(w_i) \leq r < \mathit{rank}(w_{i-1})$.\footnote{If $i = 1$, treat $\mathit{rank}(w_{i-1})$ as $\infty$.}  Note that $r > \mathit{rank}(w_j)$.  Let $\ell_L$ denote the left boundary of $\slab(w_i)$.  Let $\ell_R$ denote the right boundary of $\slab(w_j)$.  Let $S$ denote the slab bounded by $\ell_L$ and $\ell_R$.  We rebuild the slab subtree rooted at $w_i$ and its auxiliary structures to expand $\slab(w_i)$ to $S$ as follows.   It also means that $\mathit{rank}(w_i)$ is updated to $r$.  The children $w_{i+1},\ldots,w_j$ and their old subtrees are deleted afterwards.
	
	\hspace*{12pt}Let $V = \mathit{vert}(w_i) \cup \cdots \cup \mathit{vert}(w_j)$.
	%i.e., the set of vertices of $\Delta_B$ between $\ell_L$ and $\ell_R$.   
	Let $\Lambda = \mathit{lines}(w_i) \cup \cdots \cup \,\mathit{lines}(w_j)$.  First, we construct a new slab subtree rooted at $w_i$ with respect to $V$ and $\Lambda$ as described in Section~\ref{sec:struct}.  No auxiliary structure is computed yet.  
	%We will update $\gaplist(w_i)$ and the gap trees of $w_i$ later.  
	We control the construction so that it does not produce any node at depth greater than $\log_2 N = O(\log n)$ with respect to the whole slab tree.  The construction time is $O(|\Lambda|\log n)$.  Afterwards, $\slab(w_i)$ becomes $S$.  {\em Label all heavy-children in the new slab subtree rooted at $w_i$ as inactive.}
	
	\hspace*{12pt}
	%To update $\gaplist(w_i)$ and the gap trees of $w_i$, 
	Mark the triangles that are incident to the vertices in $V$, overlap with $S$, and do not span $S$.  Let $G_{w_i}$ be the set of marked triangles.  This takes $O(|G_{w_i}|)$ time, assuming that each vertex $p$ has pointers to its incident triangle(s) intersected by a vertical line through $p$.
	%Since $\Delta_B$ is a planar triangulation, we can assume that the incident edges and triangles of a vertex $p$ are ordered in circular order, and that $p$ has pointers to the incident triangles that intersect the vertical line throug $p$ (at most four such triangles).  It follows that it takes $O(|V|)$ time to mark the triangles in $\Delta_B$ that are incident to vertices in $V$ and overlap with $S$.  
	The old blocked gaps of $w_i$ will be affected by the rebuild at $w_i$.  The old free gaps of $w_i$ contained in some triangles in $G_{w_i}$ will be absorbed into some blocked gaps after the rebuild.  The other old free gaps of $w_i$ are not affected because their containing triangles span $S$.  
	
	\hspace*{12pt}To update $\gaplist(w_i)$, intersect $G_{w_i}$ with $S$ to generate the directed acyclic graph $H_{w_i}$ and then update $\gaplist(w_i)$ as in Section~\ref{sec:update}.  This takes $O(|G_{w_i}|\log n)$ amortized time.  Only the blocked gaps of $w_i$ can induce gap lists and gap trees at the descendants of $w_i$.  Therefore, as in the construction algorithm in Section~\ref{sec:construct}, we can take the subset of $G_{w_i}$ that induce the blocked gaps of $w_i$ and recursively construct the gap lists and gap trees at the descendants of $w_i$.  This takes $O(|G_{w_i}|\log^2 n)$ time by an analysis analogous to the one for Lemma~\ref{lem:slabtree}.\footnote{Since $w_i$ has $O(\log n)$ instead of $O(1)$ children, the construction time has an extra log factor.}  For each blocked gap $g$ of $w_i$, we create a biased search tree of pointers to the gap trees induced by $g$ at the children of $w_i$.  
	
	\hspace*{12pt}The update of $\gaplist(w_i)$ preserves the old shadow gaps of $w_i$, and it does not generate any new shadow gap of $w_i$.  Therefore, no two gap trees of $w_i$ can be merged and no gap tree of $w_i$ can be split, although the content of a gap tree may be modified.  A gap tree of $w_i$ is updated only  when some free gaps in it are merged into some blocked gaps.  Thus, updating the gap trees of $w_i$ takes $O(|G_{w_i}|\log n)$ time.
	
	\hspace*{12pt}Finally, $\mathit{vert}(w_i) := V$, $\mathit{lines}(w_i) := \Lambda$, and the recursive insertion of $\ell$ terminates.

	\item Case~2: $\ell$ pierces $\slab(w_j)$ for some $w_j \in \text{\emph{right-light}}(v)$.  Symmetric to Case~1.
	
	\item Case~3: $\ell$ pierces the active heavy-child of $v$.  An active heavy-child is a leaf of the slab tree.  We discuss how to insert a vertical line at a leaf next.
	
\end{itemize}

\paragraph*{Leaf node}  Suppose that we come to a leaf $v$.  If $\text{depth}(v) = \log_2 N$, do nothing and return.  Otherwise, there are two cases.  Note that $\gaplist(v)$ consists of free gaps only.  The line $\ell$ divides $\slab(v)$ into slabs $S_L$ and $S_R$ on the left and right of $\ell$, respectively. 
%the interior of $\slab(v)$ contains no vertex of $\Delta_B$.  So $\gaplist(v)$ consists of free gaps only.   There are two cases.

\begin{itemize}
	
	\item Case~1: $v$ is not an active heavy-child of $\mathit{parent}(v)$.   Turn $v$ into an internal node by making two children $w_L$ and $w_R$ of $v$ with $\slab(w_L) = S_L$ and $\slab(w_R) = S_R$.  If $\Pr(\slab(w_L)) > \Pr(\slab(v))/2$, then $w_L$ is the heavy-child of $v$, label $w_L$  active, and set \emph{left-light}$(v):= \emptyset$.  If not, $w_L$ is a light-child of rank one and set \emph{left-light}$(v):= (w_L)$.   The handling of $w_R$ is symmetric.  As $\gaplist(v)$ consists of free gaps only, $\gaplist(w_L)$ and $\gaplist(w_R)$ are empty.  So $w_L$ and $w_R$ have no gap tree.  The initializations of $\mathit{vert}(w_L)$, $\mathit{vert}(w_R)$, $\mathit{lines}(w_L)$, and $\mathit{lines}(w_R)$ are trivial.  
	%Also, $\mathit{vert}(w_L)$ consists of the vertex on the left boundary of $S_L$, if there is any; $\mathit{vert}(w_R)$ consists of the vertex on the right boundary of $S_R$, if there is any; $\mathit{lines}(w_L)$ consists of the boundaries of $S_L$; $\mathit{lines}(w_R)$ consists of the boundaries of $S_R$.  In all, the insertion at $v$ takes $O(1)$ time.
	
	\item Case~2: $v$ is an active heavy-child of $\mathit{parent}(v)$.   We expand \emph{left-light}$(\mathit{parent}(v))$ and/or \emph{right-light}$(\mathit{parent}(v))$ as follows.  \emph{}W.l.o.g., assume that $\Pr(S_L) \leq \Pr(\slab(v))/2$.  Update $\slab(v):= S_R$, which does not change $\gaplist(v)$ or any gap tree of $v$ combinatorially.    The weights of gaps in $\gaplist(v)$ are also unaffected.\footnote{This is the reason why gap weights have to be redefined in order to be robust against small geometric changes.}
	\begin{itemize}
		\item Case~2.1: $\Pr(S_R) \leq \Pr(\slab(\mathit{parent}(v)))/2$.  Then, $\mathit{parent}(v)$ has no heavy-child afterwards.  Note that $\mathit{parent}(v)$ has at most $2\log_2 N + 1$ children before this update, where $N = 2(c+1)n'$.  Create a light-child $w_L$ of $\mathit{parent}(v)$ with $\slab(w_L) = S_L$.   Note that $\gaplist(w_L)$ and the gap trees of $w_L$ are combinatorially identical to those of $v$, which are stored as persistent search trees.   We copy them to form $\gaplist(w_L)$ and the gap trees of $w_L$, each taking $O(1)$ amortized space and time.\footnote{This explains why persistent data structures are used for the gap list and gap trees of an active heavy child.}  Append $w_L$ to \emph{left-light}$(\mathit{parent}(v))$.  Add $v$ to \emph{right-light}$(\mathit{parent}(v))$ as its leftmost element.  Therefore, $\mathit{parent}(v)$ has at most $2\log_2 N + 2$ children afterwards.
		
		\item Case~2.2: $\Pr(S_R) >\Pr(\slab(\mathit{parent}(v)))/2$.  Then, $v$ remains the active heavy-child of $\mathit{parent}(v)$.  We handle $S_L$ as follows.
		\begin{itemize}
			\item If  \emph{left-light}$(\mathit{parent}(v))$ contains no light-child of rank one, then create a light-child $w_L$ with $\slab(w_L) = S_L$, initialize $\gaplist(w_L)$ and the gap trees of $w_L$ as in Case~2.1 above, and append $w_L$ to \emph{left-light}$(\mathit{parent}(v))$.
	
		  	\item Otherwise, let \emph{left-light}$(\mathit{parent}(v)) = (w_1,\cdots,w_k)$, i.e., $\mathit{rank}(w_k) = 1$.  Find the largest $i \leq k$ such that  the number of lines in $\cal L$ that intersect $\slab(w_i) \cup \cdots \cup \slab(w_k) \cup S_L$ is in the range $[2^r, 2^{r+1})$ for some $\mathit{rank}(w_i) \leq r < \mathit{rank}(w_{i-1})$. Expand $\slab(w_i)$ to the slab bounded by the left boundary of $\slab(w_i)$ and the right boundary of $S_L$ as in Case~1 of the insertion of a vertical line at an internal node.  Rebuild the slab subtree rooted at $w_i$ and its auxiliary structures, again as in Case~1 of the insertion of a vertical line at an internal node.  \emph{Label all heavy-children in the new slab subtree rooted at $w_i$ as inactive.}
		  \end{itemize}
		
		\end{itemize}
\end{itemize}

We show that the above line insertion algorithm has a good amortized running time.  The key idea is that a rebuild at a light-child $w_i$ must be triggered by the insertions of many new lines.  As the heavy-children in the new slab subtree after the rebuild are inactive, these lines cannot be charged by any rebuild at the descendants of $w_i$ in the future.

\begin{lemma}
	\label{lem:insert2}
	Let $\cal T$ be the slab tree constructed for a set ${\cal L}'$ of vertical lines and $\Delta_B$ in the initial construction or the last rebuild, whichever is more recent.  For any ${\cal L} \supset {\cal L}'$, the insertion time of lines in ${\cal L} \setminus {\cal L}'$ into $\cal T$ is $O(|{\cal L} \setminus {\cal L}'|\log^4 n)$ plus some charges on edges of $\Delta_B$ such that every edge gains at most $O(\log^4 n)$ charge since the initial construction or the last rebuild, whichever is more recent.
\end{lemma}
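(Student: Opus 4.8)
The plan is to split the running time of inserting the lines of ${\cal L}\setminus{\cal L}'$, one at a time, into two parts: the \emph{traversal cost}, incurred while a new line is routed down $\cal T$ without triggering any subtree rebuild, and the \emph{rebuild cost}, incurred whenever the algorithm of Section~\ref{sec:insert-line} merges a contiguous block $w_i,\dots,w_j$ of light-children of some node into one light-child and rebuilds the slab subtree rooted at it (Case~1 at an internal node, or the second branch of Case~2.2 at a leaf).

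\emph{Traversal cost.} Inserting one line $\ell$ touches only the $O(\log n)$ nodes on the root-to-leaf path pierced by $\ell$. At each internal node it inserts $\ell$ into $\mathit{lines}(v)$ and queries $\mathit{children}(v)$, which is $O(\log n)$ time; at the final leaf, when no rebuild is triggered, it either makes two children with empty gap lists (leaf Case~1) or performs an $O(1)$-amortized persistent copy of a gap list and its gap trees (leaf Case~2.1 and the first branch of Case~2.2). Hence the traversal cost is $O(\log^{2} n)$ per line, well within the claimed $O(|{\cal L}\setminus{\cal L}'|\log^{4} n)$ term.

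\emph{Cost of one rebuild.} Suppose a rebuild produces a light-child of rank $r$, so the merged slab $S$ meets $|\Lambda|=O(2^{r})$ lines, where $\Lambda=\mathit{lines}(w_i)\cup\cdots\cup\mathit{lines}(w_j)$. By Section~\ref{sec:insert-line}, building the new slab subtree and recomputing all of its gap lists and gap trees costs $O(|\Lambda|\log n + |G_{w_i}|\log^{2} n)$, where $G_{w_i}$ is the set of triangles incident to $V=\mathit{vert}(w_i)\cup\cdots\cup\mathit{vert}(w_j)$ that overlap $S$ but do not span $S$; deleting the old subtrees of $w_{i+1},\dots,w_j$ is charged to their construction in the usual way. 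Since $|G_{w_i}|$ can be as large as $\Theta(n)$, the second term cannot be charged to the $O(2^{r})$ lines involved and will be pushed onto edges instead. I treat the two terms separately.

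\emph{Amortizing the $O(2^{r}\log n)$ term.} Give every light-child $w$ a credit budget $(\log_{2}N-\mathit{rank}(w))\cdot(\#\text{lines in }\slab(w))\cdot c\log n$, which is replenished by line insertions: an inserted line raises the budget of each of the $O(\log n)$ light-children whose slabs it enters by $O(\log^{2} n)$, so the total credit ever created is $O(|{\cal L}\setminus{\cal L}'|\log^{3} n)$. The merge rule---taking the \emph{largest} $i\le j$ with $\#\text{lines}(\slab(w_i)\cup\cdots\cup\slab(w_j))\in[2^{r},2^{r+1})$ for $\mathit{rank}(w_i)\le r<\mathit{rank}(w_{i-1})$, together with the strict decrease $\mathit{rank}(w_i)>\cdots>\mathit{rank}(w_j)$ within \emph{left-light}---forces $r$ to strictly exceed the rank of every node in the merged block; hence each line of the block contributes at least $c\log n$ to the budget change, so the credit released by the merge is $\Omega(\log n\cdot\#\text{lines in block})=\Omega(2^{r}\log n)$ and pays for the rebuild. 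The essential extra ingredient is that every heavy-child of the new subtree is labelled \emph{inactive}, so no rebuild can occur at a proper descendant of the new light-child until fresh active heavy-children are created there, which requires further line insertions into that subtree; hence credit spent on this rebuild is not re-spent by any later rebuild below $w_i$. A given line can therefore be charged only by its $O(\log n)$ ancestors, so it contributes $O(\log^{3} n)$ in all, and the $O(2^{r}\log n)$ rebuild terms sum to $O(|{\cal L}\setminus{\cal L}'|\log^{4} n)$, which folds into the first term of the statement.

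\emph{Charging the $O(|G_{w_i}|\log^{2} n)$ term to edges.} Distribute the $O(\log^{2} n)$ cost attributed to each triangle of $G_{w_i}$ among its at most three edges. Fix an edge $e$ of $\Delta_B$ and a triangle $t$ containing $e$; $t$ is charged at a rebuild only if some vertex $p$ of $t$ lies in the new slab $S$. At a fixed level of $\cal T$, the node whose slab contains $p$ changes only when that node is absorbed by a merge, whereupon its rank strictly increases; since ranks are bounded by $\log_{2}N=O(\log n)$, there are $O(\log n)$ such rebuild events per level, hence $O(\log^{2} n)$ over the $O(\log n)$ levels, and $t$ (hence $e$) is charged $O(\log^{2} n)$ times altogether. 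Multiplying by the $O(\log^{2} n)$ charge per event, and by the $O(1)$ triangles sharing $e$, shows that every edge of $\Delta_B$ accumulates $O(\log^{4} n)$ charge since the initial construction or the last rebuild. Adding the traversal cost, the amortized $O(2^{r}\log n)$ terms, and these edge charges gives the lemma.

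The step I expect to be the main obstacle is the amortization of the $O(2^{r}\log n)$ rebuild term: one must argue rigorously that the active/inactive classification really prevents a deep rebuild from being triggered repeatedly by the same few lines, and must verify that the ``largest $i$'' tie-breaking rule indeed keeps $r$ above every rank in the merged block so that the released credit covers the cost. Keeping the bookkeeping straight---which node survives a merge, and how its rank, $\mathit{vert}(\cdot)$ and $\mathit{lines}(\cdot)$ are updated through the expansion---is the other delicate point.
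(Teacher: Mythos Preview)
Your overall strategy matches the paper's: split into traversal cost, charge the line-building portion of each rebuild to lines, and charge the $O(|G_{w_i}|\log^{2} n)$ portion to edges, using the fact that heavy-children in a rebuilt subtree are inactive to prevent re-charging at descendants. However, both of your charging arguments rest on a claim that is false.

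You assert that the merge rule ``forces $r$ to strictly exceed the rank of every node in the merged block,'' and in the edge argument that when a node is absorbed by a merge ``its rank strictly increases.'' But the merge condition is $\mathit{rank}(w_i)\le r<\mathit{rank}(w_{i-1})$, so $r=\mathit{rank}(w_i)$ is permitted. Concretely: take \emph{left-light}$(v)=(w_1,w_2)$ with ranks $2,1$ and $|\mathit{lines}(w_1)|=4$; inserting a line into $w_2$ makes $|\mathit{lines}(w_2)|=4$, triggering a merge. The choice $i=2$ is invalid (it would need some $r<\mathit{rank}(w_1)=2$, but $4\notin[2,4)$), so the largest valid $i$ is $1$, the merged slab meets $7$ lines, and $r=2=\mathit{rank}(w_1)$. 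The lines of $w_1$ do not move to a higher-rank node, so your potential releases nothing for them and the released credit need not cover the $\Theta(2^r\log n)$ cost.

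The paper treats $r=\mathit{rank}(w_i)$ as a separate case. For the line term it observes that maximality of $i$ forces $|\mathit{lines}(w_{i+1})\cup\cdots\cup\mathit{lines}(w_j)|\ge 2^{r}$ (otherwise $i+1$ would have been a valid choice), and charges only those lines---each of which \emph{does} move to a strictly higher rank---so there are still $\Theta(2^r)$ lines available to absorb the cost. For the edge term it observes that after such a merge the expanded $\slab(w_i)$ already contains at least $2^{r+1}-1$ lines, so the \emph{next} rebuild involving $w_i$ must raise its rank; hence an edge is charged at most twice before its endpoint advances a rank, which still gives the $O(\log^{4} n)$ bound per edge. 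Your credit scheme can be repaired along the same lines, but as written the key inequality (released credit $\ge$ rebuild cost) does not hold.
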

\begin{proof}
	Consider the insertion of a line $\ell \in {\cal L} \setminus {\cal L}'$.  There are two components in the insertion time.  First, the time to go from the root of $\cal T$ to some node whose slab is pierced by $\ell$ and where the recursive insertion terminates.   Second, the time to perform the structural changes at the nodes on this path of $\cal T$ traversed. 

Let $v_1, \cdots, v_m$ be nodes of $\cal T$ visited, where $v_1$ is the root of $\cal T$ and $v_m$ is the last node visited.  It takes $O(\log\log n)$ time to visit $v_{k+1}$ via \emph{children}$(v_k)$.   As the depth of $\cal T$ is $O(\log n)$, summing over $k$ gives $O(\log n \log\log n)$.  

The updating algorithm spends $O(1)$ time at almost every node from $v_1$ to $v_m$, except that there may be one node $v_k$ (which is $v_{m-1}$ or $v_m$) such that some nodes in \emph{left-light}$(v_k)$ or \emph{right-light}$(v_k)$, say \emph{left-light}$(v_k)$, are merged.  Let $w_i, \cdots, w_j$ be the node in \emph{left-light}$(v_k)$ that are merged.  The merging  takes $O((|\Lambda_k| + |E_k|)\log^2 n)$ time, where $\Lambda_k = \mathit{lines}(w_i) \cup \cdots \cup  \, \mathit{lines}(w_j)$ and $E_k$ is the set of edges in $\Delta_B$ with at least one endpoint in the interior of $\slab(w_i) \cup \cdots \cup \slab(w_j)$.  Let $r = \lfloor \log_2 |\Lambda_k|\rfloor$.   So $|\Lambda_k| \in [2^r,2^{r+1})$.   

We first analyze the term $O(|\Lambda_k|\log^2 n)$.  There are two cases.
\begin{itemize}
	\item Case~1:~$r > \mathit{rank}(w_i)$.  We pay for the term $O(|\Lambda_k|\log^2 n)$ by charging $O(\log^2 n)$ to each line in $\Lambda_k$.
	
	\item Case~2:~$r = \mathit{rank}(w_i)$. It must be the case that $|\mathit{lines}(w_{i+1}) \cup \cdots \cup \mathit{lines}(w_j)| \geq 2^r$.  We pay for the term $O(|\Lambda_k|\log^2 n)$ by charging $O(\log^2 n)$ to each line in $\mathit{lines}(w_{i+1}) \cup \cdots\cup \mathit{lines}(w_j)$.  
\end{itemize}
In either Case~1 or~2, each line that gains an $O(\log^2 n)$ charge is moved to a node in \emph{left-light}$(v_k)$ of a higher rank.  For the merging to happen, $v_k$ must have an active heavy-child.  Moreover, ${\cal L} \setminus {\cal L}'$ contains almost all lines in $\Lambda_k$ except possibly the left bounding line of $\slab(w_i)$ and the right bounding line of $\slab(w_j)$.  The merging is achieved by rebuilding the slab subtree $T$ rooted at $w_i$ and its auxiliary structures to cover all lines in $\Lambda_k$.  All heavy-children in $T$ are labelled inactive afterwards.   As a result, lines in $\Lambda_k$ in the descendants of $w_i$ in $T$ will not be moved in the future until $T$ is destroyed by another rebuilding.  New vertical lines may be inserted into $T$ after the rebuild, and these lines may move from a light-child in $T$ to a sibling light-child of higher rank in the future.  But no those in $\Lambda_k$.  A line is moved at most $O(\log n)$ times at each level of $\cal T$ because the ranks of light-children range from 1 to at most $\log_2 N = O(\log n)$.  Hence, the total charge at each line in ${\cal L} \setminus {\cal L}'$ is $O(\log^4 n)$.

Consider the other term $O(|E_k|\log^2 n)$.  If we are in case~1 above, an endpoint of every edge in $E_k$ moves to a node in \emph{left-light}$(v_k)$ of a higher rank.  If we are in case~2 above, there must be $2^{r+1}-1$ vertical lines that interesct $\slab(w_i)$ after merging because $|\mathit{lines}(w_i)| \geq 2^r$ before merging and $|\mathit{lines}(w_{i+1}) \cup \cdots \cup \mathit{lines}(w_j)| \geq 2^r$.  It means that if we rebuild the slab subtree rooted at $w_i$ again in the future, $\mathit{rank}(w_i)$ must increase.  Hence, we can charge $O(|E_k|\log^2 n)$ to edges in $E_k$, and each edge in $E_k$ gains an $O(\log^2 n)$ charge at most twice before an endpoint of it is moved to a node of a higher rank.  The edge endpoint that is moved to a node of a higher rank lies on a line in $\Lambda_k$, which means that this movement coincides with moving a line in $\Lambda_k$ to a node of a higher rank.  As explained in the previous paragraph, lines in $\Lambda_k$ will not be moved in the future in the descendants of the new slab subtree rooted at $w_i$.  Therefore, edges in $E_k$ will not be charged again for subtree rebuildings at descendants of $w_i$ in the future.  We can then argue as in the previous paragraph that each edge in $E_k$ gains no more than $O(\log^4 n)$ charge since the initial construction or the last rebuild.
\end{proof}

\subsection{Handling triangulation-updates}
\label{sec:update2}

Let $U$ be a triangulation-update of size $k \leq n/2$.  
%Recall that $T_U$ denotes the new triangulation of a connected polygonal region $R_U$ formed by a union of existing triangles.
%
Let $n'$ be the number of triangles in $\Delta_B$ in the initial construction or the last rebuild, whichever is more recent.  Let $c$ be a constant less than 1/2.  If the threshold $cn'$ has been exceeded by the total size of triangulation-updates (including $U$) since the initial construction or the last rebuild, we rebuild $\cal T$ and its auxiliary structures.
%(We assume that the triangulation-update $U$ has been applied to the triangulation.)  
It takes $O(n\log^2 n)$ time and space.  If $U$ does not trigger a rebuild, we proceed as follows instead.

\paragraph*{Step~1} Check the $O(k)$ vertical lines through the vertices of $T_U$.   For each line that does not appear in $\cal L$, we insert it into $\cal L$ and then into $\cal T$ as discussed in Section~\ref{sec:insert-line}.  
%Afterwards, $\cal T$ remains fixed when we update the gap lists and gap trees of the nodes of $\cal T$.

\paragraph*{Step~2} The weights of $O(k)$ vertices may change and $O(k)$ vertices may be inserted or deleted.   It is straightforward to update the weights of existing vertices, set the weights of new vertices, and delete vertices in $O(k)$ time.  For every vertex $p$ of the old triangulation, let $\mathit{wt'}(p)$ be its weight in the old triangulation.  For every vertex $p$ of the new triangulation, let $\wt(p)$ be its weight in the new triangulation.  
%To prepare for the updating of gap weights, we perform the following action.  
We perform the following action.
\begin{quote}
Action-I:~for every vertex $p$ of the old triangulation that lies in $R_U$,
\begin{itemize}
	\item for every gap $g \in \text{\emph{blocked-gaps}}(p)$, update $\wt(g) := \wt(g) - \mathit{wt'}(p)$;
	\item if $p$ does not lie in the boundary of $R_U$, then for each slab tree node $v$ such that $p \in \mathit{vert}(v)$, delete $p$ from $\mathit{vert}(v)$.
\end{itemize}
\end{quote}
Action-I runs in $O(k\log^2 n)$ time.
%For each vertex $p$ involved in action~(i), 

%We can find \emph{blocked-gaps}$(p)$ and the nodes whose slabs contain $p$ in $O(\log^2 n)$ and $O(\log n\log\log n)$ time, respectively.  So  action~(i) takes $O(k\log^2 n)$ time.  
%Action~(i) removes the contributions of the weights of vertices that are deleted or in the boundary of $R_U$.  For those vertices that are in the boundary of $R_U$, we will add their contributions with respect to the new triangulation later.

\paragraph*{Step~3} For every vertex $p$ of $T_U$ that lies strictly inside $R_U$, and every ancestor $v$ of the leaf node of $\cal T$ whose slab contains $p$, insert $p$ into $\mathit{vert}(v)$.  This step takes $O(k\log^2 n)$ time.

\paragraph*{Step~4} To update the gap lists and the gap trees, traverse $\cal T$ as in Section~\ref{sec:update}.  
%Each slab tree node may have $O(\log n)$ children now, so we just have to recurse at each of them.  
For each node $v$ of $\cal T$ visited, form a directed acyclic graph $H_v$ of regions to update $v$ as in Section~\ref{sec:update}.  This step takes $O(\sum_v |H_v|\log n)$ amortized time.

\paragraph*{Step~5} The weight of a free gap does not change as long as its defining triangle is preserved.  The weights of some blocked gaps may not be updated completely yet, and we fix them by performing Action-II below.
%Recall that a blocked gap $g$ is newly created if it belongs to $\Sigma$ when $\Sigma$ is used to replace some subsequence of gaps in $L_v$ in step~6(b) of {\sc Modify}$(v)$.   
Assume that a zero weight is assigned initially to every blocked gap that is created by the triangulation-update and contains vertices in $T_U$ only.
%$\wt(g)$ is initialized to zero.  Now, we fix the blocked gap weights by performing the following action.
If $g$ is a blocked gap that contains vertices outside $R_U$ and vertices inside $R_U$, the contributions of $\wt(p) = \wt'(p)$ for all vertices $p \in \mathit{vert}(g)$ outside $R_U$ are inherited from the old blocked gap that contains these vertices, whereas the contributions of $\wt'(p)$ for all vertices $p$ in the old triangulation that lies inside $g$ have been taken out in Action-I above.  The contributions of $\wt(p)$ for all new vertices $p \in \mathit{vert}(g)$ inside $T_U$ are to be added to $\wt(g)$ in Action-II below.

\begin{quote}
	Action-II: for each vertex $p$ of $T_U$ and every gap $g \in \text{\emph{blocked-gaps}}(p)$, update $\wt(g) := \wt(g) + \wt(p)$.
\end{quote}
%Since \emph{blocked-gaps}$(p)$ can be found in $O(\log^ 2n)$ time, 
Action-II runs in $O(k\log^2 n)$ time.  
%Action~(ii) adds the contributions of the weights of new vertices.  Action~(ii) also adds the contributions of the boundary vertices of $R_U$ with respect to the new triangulation.

%Of course, every change in the weight of a blocked gap $g$ is accompanied by a corresponding update of the gap tree that contains $g$, which takes $O(\log n)$ time.  There are $O(k\log n)$ blocked gaps whose weights are changed $O(1)$ times in actions~(i) and~(ii).  So adjusting the gap trees take $O(k\log^2 n)$ time.

\begin{theorem}
	\label{thm:3}
	%Suppose that there is a fixed but unknown query point distribution in $\real^2$, and an oracle that can return in $O(1)$ time the probability of a query point falling into a polygonal region of constant complexity.  There exists a dynamic point location structure for maintaining a triangulation  with the following guarantees.  
	Let $n$ denote the number of triangles in $\Delta_B$.
	\begin{itemize}
	\item Any query can be answered in $O(\mathrm{OPT} + \log \log n)$ expected time, where {\rm OPT} is the minimum expected query time of the best point location decision tree for $\Delta$.
	\item The data structure uses $O(n\log^2 n)$ space, and it can be constructed in $O(n\log^2 n)$ time.
	\item A triangulation-update of size $k \leq n/2$ takes $O(k\log^4 n)$ amortized time.
	\end{itemize}
\end{theorem}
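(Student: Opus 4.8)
The plan is to establish the three bullets separately, reusing the machinery of Sections~\ref{sec:dyntridatastructure}--\ref{sec:general}. For the query bound I would re-run the proof of Lemma~\ref{lem:dyntrianalysis} on the revised slab tree. Its two structural ingredients survive: (i) along any root-to-leaf path $\Pr(\slab(\cdot))$ is at most halved at every step except possibly the last, since an internal node has at most one heavy-child (the unique child whose slab carries more than half the parent's probability) and every heavy-child is a leaf, which also gives $\text{depth}(w)\le 1-\log_2\Pr(\slab(w))=O(\log n)$; and (ii) descending from a node $v$ to a child $w$ --- through the blocked-gap tree $T_g$ or through \emph{children}$(v)$ --- costs $O\bigl(\min\{\log\frac{\Pr(\slab(v))}{\Pr(\slab(w))},\log\log n\}\bigr)$, and these terms telescope along the path to $O(-\log\Pr(\slab(w)))$, while the gap-tree searches inside the visited nodes telescope exactly as before. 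Defining ${\cal F}_t$ as in Lemma~\ref{lem:dyntrianalysis}, each triangle $t$ of $\Delta_B$ now induces $O(\log n)$ free gaps at each of the $O(\log n)$ levels, so $|{\cal F}_t|=O(\log^2 n)$; but $\log\log^2 n=O(\log\log n)$, so the concavity bound still yields a per-triangle entropy overhead of $O(\Pr(t)\log\log n)$. The weight of a free gap $f\subseteq t$ is $\wt(f)=\max\{\Pr(t),1/N\}\ge\Pr(f)$, so the gap-tree cost $O(\log\frac1{\wt(f)})$ for locating $f$ is $O(\log\frac1{\Pr(f)})$ and the $1/N$ clamp is harmless. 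Summing over $t$ and invoking Lemma~\ref{lem:loglognoverhead}, the expected query time is $O(H(\Delta_B)+\log\log n)=O(\mathrm{OPT}+\log\log n)$; as in Lemmas~\ref{lem:dyntrianalysis} and~\ref{lem:loglognoverhead}, the outside test $D_P$ and the backup structure ${\cal T}^*$ add nothing asymptotically.

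The $O(n\log^2 n)$ space bound is exactly the one argued in Section~\ref{sec:child}: a triangle induces $O(\log n)$ free gaps at each of the $O(\log n)$ levels, and each vertex adds $O(1)$ boundary complexity to the blocked gaps at each of the $O(\log n)$ ancestor nodes whose slab contains it; the biased trees \emph{children}$(\cdot)$, $T_g$, $\mathit{vert}(\cdot)$, $\mathit{lines}(\cdot)$, together with ${\cal T}^*$ and $D_P$, do not change this. For construction, initialize $\cal L$ to the $O(n)$ vertical lines through the vertices of $\Delta_B$ and build $\cal T$ by the algorithm of Section~\ref{sec:construct}; by Lemma~\ref{lem:slabtree} this costs $O(|{\cal L}|\log n)=O(n\log n)$ time, and building ${\cal T}^*$ and $D_P$ costs $O(n\log^2 n)$, which dominates.

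For the update bound, let $U$ have size $k\le n/2$. If $U$ pushes the total update size since the last (re)build past $cn'$, we rebuild $\cal T$, ${\cal T}^*$, $D_P$ and the auxiliary structures in $O(n\log^2 n)$ time; a rebuild is triggered only after $\Omega(n')=\Omega(n)$ units of update size accumulate, so this amortizes to $O(\log^2 n)$ per unit, i.e.\ $O(k\log^2 n)$ charged to $U$. Otherwise $U$ is handled by the five steps of Section~\ref{sec:update2}. Step~1 inserts the $O(k)$ new vertical lines through the vertices of $T_U$ into $\cal T$; by Lemma~\ref{lem:insert2} this costs $O(k\log^4 n)$ directly plus an $O(\log^4 n)$ charge on each edge of $\Delta_B$, and since at most $O(n)$ distinct edges exist between two consecutive (re)builds these edge charges total $O(n\log^4 n)$, again amortizing to $O(\log^4 n)$ per unit, i.e.\ $O(k\log^4 n)$. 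Steps~2,~3 and~5 (maintaining $\mathit{vert}(\cdot)$ and the gap weights via Action-I and Action-II) take $O(k\log^2 n)$ as shown there. Step~4 updates the gap lists and gap trees by the recursive traversal of Section~\ref{sec:update} --- the procedure {\sc Modify} and the gap-tree maintenance of Sections~\ref{sec:update-gaplist}--\ref{sec:update-gaptree}, now recursing into $O(\log n)$ children, whose correctness and $O(|H_v|\log n)$ amortized per-node cost are Lemmas~\ref{lem:modify-time}--\ref{lem:modify-amort}. Here a triangle of $T_U$, like a segment in an $O(\log n)$-ary segment tree of depth $O(\log n)$, contributes legal regions to only $O(\log^2 n)$ nodes, while the $O(1)$-complexity illegal regions, charged one-per-node to the $O(k)$ vertices of $T_U$, total $O(k\log n)$; hence $\sum_v|H_v|=O(k\log^2 n)$ and Step~4 costs $O(k\log^3 n)$ amortized. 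Reformulating $U$ as $O(k)$ edge updates and applying them to ${\cal T}^*$ costs $O(k\log^2 n)$. Summing, $U$ costs $O(k\log^4 n)$ amortized.

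The delicate part is the update-time amortization, specifically tying Lemma~\ref{lem:insert2} to the periodic rebuild: one must verify that the $O(\log^4 n)$ per-edge charges from line insertions cannot escape the $O(n\log^4 n)$ total between rebuilds --- this is where $k\le n/2$ and the threshold $cn'$ are needed --- and that re-running the {\sc Modify} machinery of Section~\ref{sec:update} on a tree of $O(\log n)$-ary nodes remains correct and still yields $\sum_v|H_v|=O(k\log^2 n)$. The query-time argument is comparatively routine once one observes that $O(\log^2 n)$ free gaps per triangle still contribute only $O(\log\log n)$ per triangle to the entropy, so that the additive overhead does not grow.
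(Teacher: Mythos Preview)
Your overall strategy matches the paper's, but there are two genuine gaps in the query-time argument.

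First, the structural claim ``every heavy-child is a leaf'' is false. Only \emph{active} heavy-children are guaranteed to be leaves; an \emph{inactive} heavy-child (created during the initial build or a subtree rebuild) is a leaf at creation, but Case~1 of the leaf-handling in Section~\ref{sec:insert-line} explicitly turns such a leaf into an internal node when a line is inserted through it. Consequently ``halved at every step except possibly the last'' is too strong: on a root-to-leaf path you can encounter several internal inactive heavy-children, at each of which the probability need not halve. What \emph{is} true --- and what the paper proves --- is that the child of such a node is either a light-child (which halves) or an active heavy-child (which is a leaf), so you get halving over every two consecutive steps: $\Pr(\slab(u))\le \Pr(\slab(\mathit{parent}(\mathit{parent}(u))))/2$ for every internal $u$ with a grandparent. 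This still yields $\text{depth}(w)=O(-\log\Pr(\slab(w)))$, so your conclusion survives, but the justification needs this correction.

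Second, saying the gap-tree searches ``telescope exactly as before'' skips the one place where the new weight definition actually bites. Telescoping requires $\wt(g)\ge \text{weight}(T')$ when $g$ is a \emph{blocked} gap at $v$ and $T'$ is the gap tree it points to at a child $w$. In Section~\ref{sec:dyntridatastructure} this held because weights were probabilities and the gaps in $T'$ tile a subset of $g$. Now $\wt(g)=\sum_{p\in\mathit{vert}(g)}\wt(p)$, and the gaps in $T'$ are weighted by a mix of vertex-weight sums (blocked) and $\max\{\Pr(t),1/N\}$ (free). The paper verifies the inequality triangle by triangle: a triangle $t$ overlapping $g$ does not span $v$, so it has $b\ge 1$ vertices in $g$ and contributes $b\cdot\max\{\Pr(t),1/N\}$ to $\wt(g)$; at $w$, $t\cap\slab(w)$ is either a free gap (contributing $\max\{\Pr(t),1/N\}$) or lies in a blocked gap (contributing at most $b\cdot\max\{\Pr(t),1/N\}$). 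This is exactly why the weights were redefined as they were, and your sentence about free-gap weights alone does not cover it.

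The space, construction, and update arguments are fine and track the paper; in fact your explicit amortization of the per-edge $O(\log^4 n)$ charges from Lemma~\ref{lem:insert2} over the $\Omega(n)$ update units between rebuilds spells out a step the paper leaves implicit.
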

\begin{proof}
The space usage follows from the discussion in Section~\ref{sec:child}. 

Given a triangulation-update $U$ of size $k$, Lemma~\ref{lem:insert2} tells us that if no rebuilding of $\cal T$ is triggered, then inserting the vertical lines through the vertices of $T_U$ takes $O(k\log^4 n)$ amortized time, and then the rest of the update takes $O(\sum_v |H_v|\log n + k\log^2 n)$ time as discussed in Section~\ref{sec:update2}.  The sum $\sum_v |H_v|\log n$ is over all slab tree nodes visited during the triangulation-update.  The proof of Lemma~\ref{lem:1} applies to the case when each slab tree node has at most three children, and it shows that $\sum_v |H_v| = O(k\log n)$ in that case.  The same analysis carries over to the case of having $O(\log n)$ children per slab tree node, resulting in $\sum_v |H_v| = O(k\log^2 n)$.  Therefore, a triangulation-update runs in $O(k\log^4 n)$ amortized time, provided that a rebuilding of $\cal T$ is not triggered.  If the rebuilding of $\cal T$ is triggered, the entire slab tree together with all auxiliary structures are rebuilt in $O(n\log^2 n)$ time.  This cost can be amortized over the $\Omega(n')$ changes in $\Delta_B$ since the last rebuild of $\cal T$, resulting in an $O(k\log^2 n)$ amortized update time.

We want to repeat the analysis of the expected query time in the proof of Lemma~\ref{lem:dyntrianalysis}.  However, we no longer have the property that for each internal node $v$ of $\cal T$, $\Pr(\slab(v)) \leq \Pr(\slab(\text{\emph{parent}}(v)))/2$ due to one possible exception---an inactive heavy-child $v$ of $\mathit{parent}(v)$ changes from a leaf to an internal node due to the insertion of a vertical line to $\cal T$.  In this case, after $v$ becomes an internal node, any light-child $w$ of $v$ satisfies $\Pr(\slab(w)) \leq \Pr(\slab(v))/2 \leq \Pr(\slab(\text{\emph{parent}}(v)))/2$.  Moreover, if there is a heavy-child $w'$ of $v$, then $w'$ must be active and hence a leaf.  Therefore, we have the weaker property that for each internal node $u$ of $\cal T$ such that $u' = \mathit{parent}(\mathit{parent}(u))$ exists, then $\Pr(\slab(u)) \leq \Pr(\slab(u'))/2$.  This weaker property implies that for any node $v$ of $\cal T$, $\Pr(\slab(v))  \leq 2^{-\lfloor (\text{depth}(v)-1)/2 \rfloor}$, which implies that $\text{depth}(v) = O\left(-\log \Pr(\slab(v))\right)$.  This allows us to apply the proof of Lemma~\ref{lem:dyntrianalysis} with some modifications:
\begin{itemize} 
	\item In the proof of Lemma~\ref{lem:dyntrianalysis}, the fact that any triangle in $\Delta_B$ may appear as $\beta$ fragments in $\cal T$ leads to an additive term of $O(\log \beta)$ in the expected query time.  In Section~\ref{sec:dyntridatastructure}, $\beta = O(\log n)$.  The fragmentation is higher now with $\beta = O(\log^2 n)$, but it still leads to the same additive term of $O(\log\log n)$ in the expected query time.
	
	\item  Given a query point, if we follow a pointer from a blocked gap $g$ of a node $v$ of $\cal T$ to a gap tree $T'$ at a child $w$ of $v$, we collect two terms in the query time, namely $\log \frac{\text{weight}(T)}{\wt(g)} + \log \frac{\text{weight}(T')}{\wt(g')}$, where $T$ is the gap tree at $v$ that contains $g$, and $g'$ is the gap in $T'$ that contains the query point.  
	
	\hspace*{12pt}In Section~\ref{sec:dyntridatastructure}, $\wt(g)\geq \text{weight}(T')$ because the gaps in $T'$ form a subset of $g$, and the weight of a gap is equal to the probability of the query point falling inside it.  Therefore, $\wt(g)$ and $\text{weight}(T')$ cancel, giving the term $\log \frac{\text{weight}(T)}{\wt(g')}$.   Such pairwise cancellation results in a bound of $O\bigl(\log\frac{1}{\wt(g'')}\bigr)$ on the query time, where $g''$ is the gap  of the last node visited that contains the query point.
	
	\hspace*{12pt}In our case here, some explanation is needed
	%it may not be true that $\wt(g) \geq \text{weight}(T')$ 
	because the gap weight definition has been changed.  Consider a triangle $t$ in $\Delta_B$ that overlaps with $g$.    The triangle $t$ does not span $v$ as $g$ is a blocked gap.  So the contribution of $t$ to $\wt(g)$ is $b \cdot \max\{\Pr(t), 1/N\}$, where $b$ is the number of vertices of $t$ in $g$.  At a child $w$ of $v$, $t \cap \slab(w)$ is either a free gap or part of a blocked gap.  In the former case, the weight of $t \cap \slab(w)$ is $\max\{\Pr(t), 1/N\}$.  In the latter case, the weight contribution of $t$ at $w$ is at most $b \cdot \max\{\Pr(t), 1/N\}$.  It follows that $\text{weight}(T') \leq \wt(g)$.  
	%As a result, $\log \frac{\text{weight}(T)}{\wt(g)} + \log \frac{\text{weight}(T')}{\wt(g')} \leq \log \frac{\text{weight}(T)}{\wt(g')} + O(1)$.  The sum of these extra $O(1)$ terms can be charged to the depth of the deepest node of $\cal T$ visited, which has already been accounted in the proof of Lemma~\ref{lem:dyntrianalysis}.
	Hence, we still have the pairwise cancellation.
	
	\cancel{
	\hspace*{12pt}Let $\tilde{g}$ be the gap at which the querying terminates.  The total time spent on searching gap trees telescopes to $O\bigl(\log \frac{1}{\wt(\tilde{g})}\bigr)$.  If $\tilde{g}$ is a blocked gap, $\cal T$ does not answer the query successfully.  So the above term is $\Theta(\log n)$ which is bounded by the $O(\log n)$ query time of the worst-case dynamic planar point location structure that will be queried next.  If $\tilde{g}$ is a free gap, then
	$O\bigl(\log \frac{1}{\wt(\tilde{g})}\bigr) = O\bigl(\log \frac{\log n}{\Pr(t)}\bigr) = O\bigl(\log \frac{1}{\Pr(t)} + \log\log n \bigr) $, where $t$ is the triangle in $\Delta_B$ that contains $\tilde{g}$.  The term $O\bigl(\log \frac{1}{\Pr(t)} \bigr)$ has been accounted for in the proof of Lemma~\ref{lem:dyntrianalysis}.  The extra $O(\log\log n)$ term is absorbed by the expected query time bound of $O(\text{OPT} +\log\log n)$.
	}

	\item In Section~\ref{sec:dyntridatastructure}, after locating a blocked gap $g$ in $\gaplist(u)$ for some internal node $u$ of $\cal T$, we check the $O(1)$ pointers kept at $g$ to access a gap tree of the child $v$ of $u$ such that $\slab(v)$ contains the query point.  This takes $O(1)$ time.  Now, $g$ is associated with a biased search tree $T_g$ of pointers to $O(\log n)$ gap trees, one per child of $u$.  Recall that the weight of the node in $T_g$ for the child $v$ of $u$ is $\max\bigl\{\frac{\Pr(\slab(u))}{2\log_2 N + 2} , \Pr(\slab(v))\bigr\}$.  Therefore, descending from $g$ to a gap tree of a child $v$ now takes $O\bigl(\log \frac{\Pr(\slab(u))}{\Pr(\slab(v))} + 1\bigr)$ time.   Summing over the entire tree path in $\cal T$ traversed, we obtain $O\left(-\log \Pr(\slab(w)) + \text{depth}(w)\right)$, where $w$ is the last node of $\cal T$ visited.  We have argued previously that $\text{depth}(w) = O(-\log \Pr(\slab(w)))$.  Let $\tau$ be the triangle in $\Delta_B$ that contains the query point.  We have $-\log\Pr(\slab(w)) \leq -\log \Pr(\slab(w) \cap \tau)$, and it has been shown in the proof of Lemma~\ref{lem:dyntrianalysis} that the expected value of $-\log \Pr(\slab(w) \cap \tau)$ is $O(\text{OPT} + \log\log n)$.
	
\end{itemize}
\end{proof}

\section{Conclusion}

%The algorithms in Sections~\ref{sec:dyntridatastructure}--\ref{sec:general} do not rely on the convexity of the bounded faces.  Combining them with the adaptive point location method for connected subdivisions in the static case~\cite{CL18}, it seems possible to insert or delete an edge in a connected subdivision in $O(\log^5 n)$ amortized time.  However, the expected query time produced by this approach is likely larger than $O(\mathrm{OPT})$.  Indeed, the processing time of an online query sequence $\sigma$ for a connected subdivision in the static case is larger than the optimum by an additive term of $|\sigma|\log(\log^* n)$~\cite{CL18}.  That is, there is an additive $O(\log(\log^* n))$ overhead per query on average in the static case.  It would also be interesting to study what kinds of modifications of the query distribution can be supported efficiently.

We presented a dynamic data structure for distribution-sensitive point location in convex subdivisions.  The expected query time is asympototically optimal in the comparison-based model.   The update time is $O(k\log^5 n)$ amortized, where $k$ is the update size.  It can be used in a randomized incremental construction of a Voronoi diagram of $n$ points such that nearest neighbor can be found in optimal expected query time in any intermediate Voronoi diagram, where the expectation is taken over the underlying query distribution.  The total expected construction time is $O(n\log^5 n)$, where the expectation is taken over a uniform distribution of permutations of the $n$ input points.   

There are several future research directions.  First, can the update time be improved further?  Second, can one design a dynamic data structure for connnected subdivisions that achieves the optimal expected query time in the distribution-sensitive setting?  Another research problem is to remove the assumption of a fixed, underlying query distribution.  One possiblity is to accommodate local changes in the query distribution.  Another possibility is to maintain historical queries for defining the probabilities of a query point falling into the sudivision faces.

\bibliographystyle{plainurl}

\bibliography{socgref}

\begin{thebibliography}{10}

\bibitem{paper:Adamy1998}
U.~Adamy and R.~Seidel.
\newblock On the exact worst case query complexity of planar point location.
\newblock {\em Journal of Algorithms}, 27(1):189--217, 2000.

\bibitem{paper:Afshani2015}
P.~Afshani, J.~Barbay, and T.~Chan.
\newblock Instance-optimal geometric algorithms.
\newblock {\em Journal of the ACM}, 64(1):3:1--3:38, 2017.

\bibitem{paper:Arge2006}
L.~Arge, G.S. Brodal, and L~Georgiadis.
\newblock Improved dynamic planar point location.
\newblock In {\em Proceedings of the 47th Annual IEEE Symposium on Foundations
  of Computer Science}, pages 305--314, 2006.

\bibitem{paper:Arya2007b}
S.~Arya, T.~Malamatos, D.~Mount, and K.~Wong.
\newblock Optimal expected-case planar point location.
\newblock {\em SIAM Journal on Computing}, 37(2):584--610, 2007.

\bibitem{paper:Baumgarten1994}
H.~Baumgarten, H.~Jung, and K.~Mehlhorn.
\newblock Dynamic point location in general subdivisions.
\newblock {\em Journal of Algorithms}, 17(3):342--380, 1994.

\bibitem{paper:bent85}
S.W. Bent, D.D. Sleator, and R.E. Tarjan.
\newblock Biased search trees.
\newblock {\em SIAM Journal on Computing}, 14(3):545--568, 1985.

\bibitem{paper:Bose2013}
Prosenjit Bose, Luc Devroye, Karim Douieb, Vida Dujmovic, James King, and Pat
  Morin.
\newblock Odds-on trees, 2010.
\newblock \href {http://arxiv.org/abs/1002.1092} {\path{arXiv:1002.1092}}.

\bibitem{paper:Chan2015}
T.~Chan and Y.~Nekrich.
\newblock Towards an optimal method for dynamic planar point location.
\newblock {\em SIAM Journal on Computing}, 47(6):2337--2361, 2018.

\bibitem{paper:Cheng1992}
S.-W. Cheng and R.~Janardan.
\newblock New results on dynamic planar point location.
\newblock {\em SIAM Journal on Computing}, 21(5):972--999, 1992.

\bibitem{paper:cheng2017adaptive}
S.-W. Cheng and M.-K. Lau.
\newblock Adaptive planar point location.
\newblock In {\em Proceedings of the 33rd International Symposium of
  Computational Geometry}, pages 30:1--30:15, 2017.

\bibitem{cheng2015adaptive}
S.-W. Cheng and M.-K. Lau.
\newblock Adaptive point location in planar convex subdivisions.
\newblock {\em International Journal of Computational Geometry and
  Applications}, 27(1--2):3--12, 2017.

\bibitem{CL18}
S.-W. Cheng and M.-K. Lau.
\newblock Adaptive planar point location, 2018.
\newblock \href {http://arxiv.org/abs/1810.00715} {\path{arXiv:1810.00715}}.

\bibitem{CPT96}
Y.-J. Chiang, F.P. Preparata, and R.~Tamassia.
\newblock A unified approach to dynamic point location, ray shooting, and
  shortest paths in planar maps.
\newblock {\em SIAM Journal on Computing}, 25(1):207--233, 1996.

\bibitem{paper:Chiang1992}
Y.-J. Chiang and R.~Tamassia.
\newblock Dynamization of the trapezoid method for planar point location in
  monotone subdivisions.
\newblock {\em Internatational Journal of Computational Geometry and
  Applications}, 2(3):311--333, 1992.

\bibitem{paper:Collette2012}
S.~Collette, V.~Dujmovi\'{c}, J.~Iacono, S.~Langerman, and P.~Morin.
\newblock Entropy, triangulation, and point location in planar subdivisions.
\newblock {\em ACM Transactions on Algorithms}, 8(3):29:1--29:18, 2012.

\bibitem{Dobkin1990}
D.P. Dobkin and D.G. Kirkpatrick.
\newblock Determining the separation of preprocessed polyhedra---a unified
  approach.
\newblock In {\em Proceedings of the 17th International Colloquium on Automata,
  Languages and Programming}, pages 400--413, 1990.

\bibitem{driscoll89}
J.R. Driscoll, N.~Sarnak, D.D. Sleator, and R.E. Tarjan.
\newblock Making data structures persistent.
\newblock {\em Journal of Computer and System Sciences}, 38(1):86--124, 1989.

\bibitem{paper:Edelsbrunner1986}
H.~Edelsbrunner, L.~J Guibas, and J.~Stolfi.
\newblock Optimal point location in a monotone subdivision.
\newblock {\em SIAM Journal on Computing}, 15(2):317--340, 1986.

\bibitem{paper:Goodrich1997}
M.T. Goodrich and R.~Tamassia.
\newblock Dynamic ray shooting and shortest paths in planar subdivisions via
  balanced geodesic triangulations.
\newblock {\em Journal of Algorithms}, 23(1):51--73, 1997.

\bibitem{paper:Goodrich1998}
M.T. Goodrich and R.~Tamassia.
\newblock Dynamic trees and dynamic point location.
\newblock {\em SIAM Journal on Computing}, 28(2):612--636, 1998.

\bibitem{paper:Hershberger1993}
J.~Hershberger and S.~Suri.
\newblock A pedestrian approach to ray shooting: Shoot a ray, take a walk.
\newblock {\em Journal of Algorithms}, 18(3):403--431, 1995.

\bibitem{paper:Iacono2004}
J.~Iacono.
\newblock Expected asymptotically optimal planar point location.
\newblock {\em Computational Geometry: Theory and Applications}, 29(1):19--22,
  2004.

\bibitem{paper:Iacono2011}
J.~Iacono and W.~Mulzer.
\newblock A static optimality transformation with applications to planar point
  location.
\newblock {\em International Journal of Computational Geometry and
  Applications}, 22(4):327--340, 2012.

\bibitem{paper:Kirkpatrick1981}
D.~G. Kirkpatrick.
\newblock Optimal search in planar subdivisions.
\newblock {\em SIAM Journal on Computing}, 12(1):28--35, 1983.

\bibitem{O18}
E.~Oh.
\newblock Point location in incremental planar subdivisions.
\newblock In {\em Proceedings of the 29th International Symposium on Algorithms
  and Computation}, pages 51:1--51:12, 2018.

\bibitem{OA18}
E.~Oh and H.-K. Ahn.
\newblock Point location in dynamic planar subdivision.
\newblock In {\em Proceedings of the 34th International Symposium on
  Computational Geometry}, pages 63:1--53:14, 2018.

\bibitem{paper:Preparata1989}
F.P. Preparata and R.~Tamassia.
\newblock Fully dynamic point location in a monotone subdivision.
\newblock {\em SIAM Journal on Computing}, 18(4):811--830, 1989.

\bibitem{paper:Sarnak1986}
N.~Sarnak and R.~E. Tarjan.
\newblock Planar point location using persistent search trees.
\newblock {\em Communications of ACM}, 29(7):669--679, 1986.

\bibitem{paper:Shannon2001}
C.~E. Shannon.
\newblock A mathematical theory of communication.
\newblock {\em Mobile Computing and Communications Review}, 5(1):3--55, 2001.

\end{thebibliography}

\appendix

\section{Correctness of {\sc Modify}}
\label{app:modify}

We use $L^i_v$ to denote the current $L_v$ at the end of the $i$-th iteration of steps~3--6 of {\sc Modify}$(v)$.  $L^0_v$ is the initial $L_v$.  We use $L^*_v$ to refer to the list of free and blocked gaps of $v$ with respect to the triangulation after the triangulation-update.  Our goal is to show that $L^m_v = L^*_v$.

We imagine an incremental change in the underlying triangulation corresponding to the transformation from $L^{i-1}_v$ to $L^i_v$.  Specifically, at the end of the $(i-1)$-th iteration of steps~3--6 of {\sc Modify}$(v)$ in processing the components $C_1,\ldots,C_{i-1}$ in $\widetilde{H}_v$, we imagine that the portions of the triangulation covered by $C_1,\ldots,C_{i-1}$ have been updated to $T_U \cap C_1,\ldots,T_U \cap C_{i-1}$, respectively, whereas the portions covered by $C_i,\ldots,C_m$ are not updated yet.   We refer to the triangles of this intermediate triangulation that overlap with $\slab(v)$ as \emph{triangles with respect to $L^{i-1}_v$}.  Depending on the context, we may also be referring to a subset of these triangles.

For any simple curve $\xi$ in $\slab(v)$ that connects the two bounding lines of $\slab(v)$, let $L^i_v[\xi]$ and $L^*_v[\xi]$ denote the sublists of free and blocked gaps in $L^i_v$ and $L^*_v$, respectively, that lie entirely above $\xi$.   So any gap whose interior is intersected by $\xi$ is excluded.  We will prove by induction that the following invariants hold during the $i$-th iteration of steps 3--6 of {\sc Modify}$(v)$.   The gaps $\hat{g}_i$ and $\check{g}_i$ are defined as in {\sc Modify}$(v)$.

\begin{itemize}
	\item Inv1$(i)$: $L^{i-1}_v[\up(\hat{g}_i)] = L^*_v[\up(\hat{g}_i)]$.  
	
	\item Inv2$(i)$:  Define a gap $g$ in $L^*_v$ as follows: if $\up(\hat{g}_i)$ intersects the interior of a blocked gap in $L^*_v$, let $g$ be this blocked gap; otherwise, let $g$ be the highest gap in $L^*_v$ that lies below $\up(\hat{g}_i)$.  Note that $\up(\hat{g}_i)$ does not intersect the interior of any free gap in $L^*_v$.  At the end of step~4 of the $i$-th iteration of steps~3--6 of {\sc Modify}$(v)$, the following properties hold:
	\begin{itemize}
		\item If $g$ is a blocked gap and $\up(g)$ is above or partly above $\up(C_i)$, then $\mathit{state} = \mathsf{blocked}$, $g$ contains $\up(C_i)$, and $\gamma = \up(g)$.
		\item Otherwise, $\mathit{state} = \mathsf{null}$, $g$ is below $\up(C_i)$, and $\gamma = \up(C_i)$.
	\end{itemize}
	
	\item Inv3$(i)$: 
	\begin{itemize}
		\item If $\mathit{state} = \mathsf{null}$ at the beginning of step~6(b), then the last region in $C_i$ processed in step~5 spans $v$ and $L^i_v[\low(C_i)] = L^*_v[\low(C_i)]$.
		
		\item Otherwise, there is a gap $\tilde{g}$ in $L^*_v$ that contains $\low(C_i)$ and intersects $\reg(C_i)$.
		\begin{itemize}
			\item If $\mathit{state} = \mathsf{blocked}$ at the beginning of step~6(b) and $\mathit{state} = \mathsf{null}$ at the end of step~6(b), then both $\check{g}_i$ and $\tilde{g}$ are blocked gaps, $\low(\tilde{g}) = \low(\check{g}_i)$, and $L^i_v[\low(\check{g}_i)] = L^*_v[\low(\check{g}_i)]$.
		
			\item If $\mathit{state} = \mathsf{blocked}$ at the end of step~6(b), then $\gamma = \up(\tilde{g})$, both $\check{g}_i$ and $\tilde{g}$ are blocked gaps, both $\check{g}_i$ and $\tilde{g}$ contain $\up(C_{i+1})$, and $L^i_v[\low(C_i)] = L^*_v[\low(C_i)]$.
		\end{itemize}
	\end{itemize}
	%
	%\item Invariant~4: The following properties are satisfied right after step~6.
	%\begin{itemize}
	%		\item[(a)] If $\mathit{state} = \mathsf{null}$, then $L^i_v[\prec \low(\check{g})] = L^*_v [\prec \low(\check{g})]$.
	%		\item[(b)] If $\mathit{state} = \mathsf{blocked}$, then $\low(C_i)$ intersects a blocked gap $g$ in $L^*_v$, $L^i_v[\prec g] = L^*_v[\prec g]$, and $\gamma = \up(g)$.
	%\end{itemize}
\end{itemize}
%The next result shows that invariant~1 holds in the  base case.

Before proving the three invariants, we first establish a result showing that the upper and lower boundaries of the $C_r$'s, $\hat{g}_i$'s, $\check{g}_i$'s, and the gaps in $L^*_v$ do not cross each other.   That is, these upper and lower boundaries can be ordered from top to bottom within $\slab(v)$.  Therefore, if one boundary $\xi$ is above or partly above another boundary $\xi'$,  then $\xi$ precedes $\xi'$ in the top to bottom order as $\xi$ and $\xi'$ do not cross.

\begin{lemma}
	\label{lem:non-cross}
	No two upper or lower boundaries of any $C_r$, any $\hat{g}_i$, any $\check{g}_i$, or any gap in $L^*_v$ can cross each other.
\end{lemma}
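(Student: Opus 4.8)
The plan is to prove Lemma~\ref{lem:non-cross} by establishing a single geometric fact from which all the non-crossing claims follow: every relevant boundary curve is a polygonal chain that is \emph{monotone in the vertical direction of comparison}, in the sense that it is the upper or lower boundary of a connected region that spans $\slab(v)$, and two such boundaries of regions that are themselves vertically ordered (one lying weakly above the other as subsets of $\slab(v)$) cannot cross. First I would recall why each of the four kinds of curves is such a boundary: the upper/lower boundary of a component $C_r$ of $\widetilde{H}_v$ is by definition a chain of $O(1)$ edges of $T_U$ spanning $\slab(v)$; each $\hat g_i$ and $\check g_i$ is a gap in the current $L_v$, i.e., a free or blocked gap, and Section~\ref{sec:struct} shows that free gaps are single triangle slices and blocked gaps have upper and lower boundaries of at most two edges each, all spanning $\slab(v)$ (shadow gaps, which are unions of free gaps of ancestors, likewise); and each gap in $L^*_v$ is a free or blocked gap with respect to the post-update triangulation, with the same spanning property.

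The core lemma I would isolate and prove is: if $R$ and $R'$ are two connected regions contained in $\slab(v)$, each spanning $\slab(v)$, and $R$ lies weakly above $R'$ (every vertical line meeting both slabs meets $R$ at heights $\ge$ those at which it meets $R'$), then $\up(R)$, $\low(R)$, $\up(R')$, $\low(R')$ admit a common top-to-bottom linear order and in particular no two of them cross. The proof is an elementary continuity/intermediate-value argument along each vertical line: on the vertical line $x=x_0$ intersecting $\slab(v)$, each of these boundaries is a single point (or, for the two-edge boundaries, still a well-defined point since the gap boundary meets each vertical line in exactly one point—this is exactly the content of the ``at most two edges'' discussion and its footnote), and the heights of these points respect the ordering for every $x_0$; a crossing would force two heights to swap order at some $x_0$, contradicting that the functions $x_0\mapsto(\text{height of boundary at }x_0)$ are continuous and pairwise non-equal except where the underlying regions share a boundary edge (in which case the curves coincide rather than cross). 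I would then verify that the regions underlying the four curve families are pairwise vertically ordered: the $C_r$'s are given in top-to-bottom order by hypothesis; the gaps of any fixed list ($L_v$ or $L^*_v$) are vertically ordered because they partition $\slab(v)$; and a $C_r$ versus a gap of $L^*_v$ are comparable because they both sit inside the fixed tiling $\Delta_B$ (equivalently, inside $T_U\cap\slab(v)$ for the post-update picture), which is itself a vertically ordered tiling by the acyclicity of the above/below relation cited from~\cite[Section~4]{paper:Edelsbrunner1986}.

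The step I expect to be the main obstacle is the comparability between a component $C_r$ (a chunk of the \emph{new} triangulation $T_U$) and a gap $\hat g_i$ or $\check g_i$ (a gap of the \emph{intermediate} list $L_v$, which is a mixture of old triangles outside $\bigcup_{s<r}C_s$ and new triangles inside), together with gaps of $L^*_v$. These live in different combinatorial snapshots, so ``weakly above/below'' is not immediate. My approach here is to observe that all three snapshots—old, intermediate, and new—agree with one another outside the union of the $C_s$ processed so far and inside are governed by $T_U$, and that $T_U$ is a triangulation of the polygon $R_U$ whose boundary (except on $\partial\Delta$) is unchanged; hence any vertical line $x=x_0$ sees the same sequence of region-boundary heights up to refinement, so the heights of $\up(C_r)$, $\low(C_r)$ and of the boundaries of $\hat g_i,\check g_i$, and of the $L^*_v$-gaps interleave consistently with a single global ordering. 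This is essentially the point already used implicitly in the comments ``$\up(\hat g_i)$ and $\up(C_i)$ do not cross'' inside {\sc Modify}$(v)$, and in Lemma~\ref{lem:ignore2}; I would make it precise by arguing on one vertical line at a time and invoking the acyclic above/below ordering of the fixed tiling. Once comparability is in hand, the non-crossing conclusion for all pairs is immediate from the core lemma, completing the proof.
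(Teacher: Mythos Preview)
Your approach via a ``core lemma'' about vertically ordered spanning regions has a genuine gap. The core lemma assumes one region $R$ lies weakly above the other $R'$, but the pairs you must compare frequently \emph{overlap or are nested}: by the very definition in step~3 of {\sc Modify}$(v)$, $\hat g_i$ is a gap whose interior or boundary meets $\up(C_i)$, so $\hat g_i$ and $C_i$ typically overlap and neither lies weakly above the other; similarly a blocked gap of $L^*_v$ may properly contain some $C_r$. In such cases your core lemma is inapplicable, and your fallback (``argue on one vertical line at a time and invoke acyclicity of the fixed tiling'') does not recover the conclusion, because overlapping regions from different snapshots are not both faces of any single tiling whose acyclic order you could invoke.

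The paper's argument is shorter and rests on a single observation you are circling around but never state as the key: every boundary curve in question is composed of edges of the \emph{old} triangulation or of edges of $T_U$, and since $T_U$ retriangulates a polygon $R_U$ whose boundary consists of old edges, no edge of $T_U$ crosses any old edge. Concretely, boundaries of the $C_r$'s are $T_U$-edges; boundaries of $\hat g_i$ and $\check g_i$ are old gap boundaries, with the single exception that $\up(\hat g_j)$ may equal $\low(C_{j-1})$ when $\check g_{j-1}$ was split in step~6(b); and by Lemma~\ref{lem:ignore2} every boundary of a gap in $L^*_v$ either lies in some $C_i$ or coincides with an old gap boundary. This edge-level reasoning handles all pairs at once, with no need to totally order the underlying regions.
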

\begin{proof}
	It is obvious that no two upper or lower boundaries of the $C_r$'s can cross each other.  The same holds among the upper and lower boundaries of gaps in $L^*_v$, as well as among the upper and lower boundaries of the $\hat{g}_i$'s and $\check{g}_i$'s.  Let's consider other combinations.
	
	Most  boundaries of the $\hat{g}_i$'s and $\check{g}_i$'s are gap boundaries before the triangulation-update, and therefore, they cannot cross the upper and lower boundaries of any $C_r$ by the definition of $T_U$.   Nevertheless, the upper boundary of $\hat{g}_j$ may be created in the $(j-1)$-th iteration of steps~3--6 of {\sc Modify}$(v)$ when $\check{g}_{j-1}$ is split at $\low(C_{j-1})$.  That is, $\up(\hat{g}_j)$ may be equal to $\low(C_{j-1})$.  Still, in this case, $\up(\hat{g}_j) = \low(C_{j-1})$ cannot cross the upper and lower boundary of any other $C_r$'s. 
	
	By Lemma~\ref{lem:ignore2}, every boundary of a gap in $L^*_v$ is contained in some $C_i$ or equal to a gap boundary before the triangulation-update.  It follows that a lower or upper boundary of a gap in $L^*_v$ cannot cross the lower or upper boundary of any $C_r$, any $\hat{g}_i$, or any $\check{g}_i$.
\end{proof}

Next, we deal with the base cases of the first two invariants, namely, Inv1(1) and Inv2(1).

\begin{lemma}
	\label{lem:inv11}
	\emph{Inv1}$(1)$ is true.
\end{lemma}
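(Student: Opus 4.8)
The goal is to show $L^0_v[\up(\hat g_1)]=L^*_v[\up(\hat g_1)]$, i.e.\ that the sublist of free and blocked gaps of $v$ lying entirely above $\up(\hat g_1)$ is unchanged by the triangulation-update. The plan is to establish the stronger geometric statement that the old and new triangulations induce the same free and blocked gaps of $v$ in the portion of $\slab(v)$ strictly above $\up(\hat g_1)$, and then read off the equality of the two sublists using that their gap boundaries cannot cross.

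First I would pin down the position of $\up(\hat g_1)$. Since $C_1$ is the topmost component of $\widetilde H_v$, and since every illegal region and every spanning legal region of $H_v$ lies in $\widetilde H_v$, every region of $H_v$ that spans $\slab(v)$ lies weakly below $\up(C_1)$; in particular $H_v\setminus\widetilde H_v$ consists only of legal regions that do not span $\slab(v)$. By the precedence rule in step~3 of {\sc Modify}$(v)$ together with Lemma~\ref{lem:non-cross}, $\up(C_1)$ either meets $\up(\hat g_1)$ or runs inside $\hat g_1$, so $\up(\hat g_1)$ lies weakly above $\up(C_1)$; hence anything strictly above $\up(\hat g_1)$ is strictly above every spanning region of $H_v$. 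Next I would argue there is no change to the gap structure strictly above $\up(\hat g_1)$. A new free gap $g$ there equals $t\cap\slab(v)$ for a triangle $t$ of the new triangulation that spans $v$; were $t$ new (a triangle of $T_U$), then $t\cap\slab(v)$ would span $\slab(v)$ and, barring the invisible-channel case discussed below, lie in $\widetilde H_v$, forcing it weakly below $\up(C_1)$, contrary to $g$ being strictly above $\up(\hat g_1)$; so $t$ is an old triangle and $g$ is an old free gap. For a new blocked gap $g$ strictly above $\up(\hat g_1)$, any $t'\in T_U$ inside $g$ does not span $v$, hence is a non-spanning legal region of $H_v$, hence lies in a component of $H_v\setminus\widetilde H_v$; Lemma~\ref{lem:ignore} then places that component inside a blocked gap both before and after the update, and I would use Lemma~\ref{lem:non-cross} together with the precedence of criterion~(i) — had $\up(C_1)$ met the interior or boundary of an old blocked gap, $\hat g_1$ would be that gap — to conclude that the old blocked gap containing $t'$ cannot reach strictly above $\up(\hat g_1)$, a contradiction. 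Thus every blocked gap strictly above $\up(\hat g_1)$ has unchanged contents and is an old blocked gap; conversely, by Lemma~\ref{lem:ignore2} every boundary of a new gap lying strictly above $\up(\hat g_1)$ is an old gap boundary, since it cannot lie in $\widetilde H_v$. Feeding these facts into Lemma~\ref{lem:non-cross} lines the two sublists up item by item, giving $L^0_v[\up(\hat g_1)]=L^*_v[\up(\hat g_1)]$.

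The step I expect to be the main obstacle is ruling out changes above the frontier $\up(\hat g_1)$ that are \emph{invisible} to $H_v$: a triangle of $T_U$ that spans a strict ancestor of $\mathit{parent}(v)$ without contributing a boundary vertex in the interior of $\slab(v)$ never enters $H_v$, yet in principle could absorb free or blocked gaps of $v$ into an enlarged shadow gap. Dealing with this cleanly requires exploiting that a triangulation-update introduces no new vertex on the internal boundary of $R_U$ (so such a ``channel'' agrees with the old triangulation along $\partial R_U$) together with the recursive construction of $H_v$ from $H_{\mathit{parent}(v)}$, in order to show that such a channel, if present strictly above $\up(\hat g_1)$, would already force $\up(C_1)$ — hence $\up(\hat g_1)$ — to lie above it, contradicting the choice of $\hat g_1$.
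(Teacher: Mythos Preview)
Your overall plan---show that nothing changes in the gap structure strictly above $\up(\hat g_1)$---is the right one, and your use of Lemmas~\ref{lem:ignore}, \ref{lem:ignore2}, and~\ref{lem:non-cross} is appropriate. But there is a genuine gap in your blocked-gap argument, and the paper organizes the proof rather differently.

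\medskip
\textbf{The flaw in your blocked-gap step.} You consider a triangle $t'\in T_U$ lying in a (new) blocked gap $g$ strictly above $\up(\hat g_1)$, place $t'$ inside a component of $H_v\setminus\widetilde H_v$, invoke Lemma~\ref{lem:ignore} to get an \emph{old} blocked gap $g_{\mathrm{old}}$ containing $t'$, and then claim ``the old blocked gap containing $t'$ cannot reach strictly above $\up(\hat g_1)$, a contradiction''. But that last claim is unjustified: the precedence of criterion~(i) only tells you that any old blocked gap \emph{meeting $\up(C_1)$} would have been chosen as $\hat g_1$. It says nothing about old blocked gaps lying entirely above $\up(C_1)$---and those certainly can exist. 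So there is no contradiction, and your argument does not rule out new blocked gaps above $\up(\hat g_1)$ nor show they coincide with old ones.

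\medskip
\textbf{How the paper argues instead.} Rather than analyzing new gaps directly, the paper works forward from the old picture. It asserts that the triangulation-update does not affect (i) the free gaps in $L_v[\up(\hat g_1)]$ and (ii) the portions of shadow gaps above $\up(\hat g_1)$. Since every blocked gap is \emph{delimited} above and below by a free or shadow gap, preservation of free and shadow gaps forces preservation of blocked gaps---except for one boundary case: a blocked gap $g\in L_v[\up(\hat g_1)]$ with $\low(g)=\up(\hat g_1)$ has no delimiter from below inside the sublist. The paper dispatches this case by observing that then $\hat g_1$ cannot be blocked (adjacency), and cannot be free (else $\up(C_1)$ would not intersect its interior), so $\hat g_1$ is a shadow gap; the sliver between $\up(\hat g_1)$ and $\up(C_1)$ stays shadow after the update, preventing $g$ from merging downward. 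You do not treat this edge case.

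\medskip
\textbf{On the ``invisible channel''.} Your worry is well-formulated, but note that for a \emph{new} free gap the issue does not arise: a triangle of $T_U$ that spans $v$ but not $\mathit{parent}(v)$ is by definition a legal region, hence in $H_v$, and since it spans $\slab(v)$ it sits in $\widetilde H_v$. The only place the channel could bite is in showing that an \emph{old} free gap above $\up(\hat g_1)$ survives; the paper's proof simply asserts this and does not spell out the channel case. If you want to close it, the observation you need is that any maximal invisible strip $I$ strictly above $\up(\hat g_1)$ must have both $\up(I)$ and $\low(I)$ on $\partial R_U$: otherwise the adjacent $T_U$-triangle across that boundary would be a \emph{spanning} legal region (it shares an edge spanning $\slab(v)$), hence in $\widetilde H_v$, forcing a component of $\widetilde H_v$ strictly above $\up(\hat g_1)\ge\up(C_1)$---impossible since $C_1$ is topmost. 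This localizes $I$ between two old edges and lets you compare the old and new pictures there directly.
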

\begin{proof}
	By Lemma~\ref{lem:ignore}, the subset of $H_v \setminus \widetilde{H}_v$ above $\hat{g}_1$ can only overlap with blocked gaps in $L_v[\up(\hat{g}_1)]$ and $L^*_v$.   The triangulation-update does not affect the free gaps in $L_v[\up(\hat{g}_1)]$ and the portions of shadow gaps in $L_v$ above $\up(\hat{g}_1)$.
	
	Since blocked gaps are delimited by free or shadow gaps, we can almost conclude that the blocked gaps in $L_v[\up(\hat{g}_1)]$ are preserved in $L^*_v$, except for the possibility that there is a blocked gap $g \in L_v[\up(\hat{g}_1)]$ with $\low(g) = \up(\hat{g}_1)$.  The argument of being delimited by free or shadow gaps alone does not prove that $g$ is preserved in $L^*_v$ because $g$ is not delimited by anything from below in $L_v[\up(\hat{g}_1)]$.
	
	In this case, $\hat{g}_1$ is not a blocked gap because it is adjacent to $g$ in $L_v$.   We also know that $\up(C_1)$ is strictly below $\up(\hat{g}_1)$; otherwise, $\up(C_1) $ overlaps with $\low(g)$ and we should have taken $g$ to be $\hat{g}_1$ instead.  It implies that $\up(C_1)$ and $\reg(C_1)$ intersect the interior of $\hat{g}_1$, which forbids $\hat{g}_1$ from being a free gap.  So $\hat{g}_1$ is a shadow gap.  The region sandwiched between $\up(\hat{g}_1)$ and $\up(C_1)$ will remain a part of a shadow gap of $v$ after the triangulation-update.  This forbids the merging of $g$ with anything immediately below it after the triangulation-update.  Therefore, $g$ is preserved as a blocked gap in $L^*_v$.
	
	We conclude that $L_v[\up(\hat{g}_1)] = L^*_v[\up(\hat{g}_1)]$.
\end{proof}

Next, we prove Inv2$(i)$ in two steps.  First, we deal with the case of $\mathit{state} = \mathsf{null}$ at the beginning of step~4.  This will help us to establish Inv2(1).

\begin{lemma}
	\label{lem:inv2i0}
	If \emph{Inv1}$(i)$ is true and $\mathit{state} = \mathsf{null}$ at the beginning of step~4 in the $i$-th iteration of steps~3--6 of {\sc Modify}$(v)$, then \emph{Inv2}$(i)$ is true.
\end{lemma}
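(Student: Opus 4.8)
The plan is to prove the lemma by a direct inspection of step~4 of {\sc Modify}$(v)$ under the hypothesis $\mathit{state} = \mathsf{null}$, matched against the two clauses of \emph{Inv2}$(i)$. When $\mathit{state} = \mathsf{null}$ on entry, step~4 sets $(\mathit{state},\gamma) := (\mathsf{blocked},\up(\hat{g}_i))$ precisely when $\hat{g}_i$ is a blocked gap of $L^{i-1}_v$ whose upper boundary is above or partly above $\up(C_i)$, and otherwise sets $\gamma := \up(C_i)$ and keeps $\mathit{state} = \mathsf{null}$. So it suffices to establish the transfer statement: the gap $g$ of $L^*_v$ named in \emph{Inv2}$(i)$ is a blocked gap with $\up(g)$ above or partly above $\up(C_i)$ if and only if $\hat{g}_i$ is a blocked gap of $L^{i-1}_v$ with $\up(\hat{g}_i)$ above or partly above $\up(C_i)$; and that in the ``if'' case $\up(g) = \up(\hat{g}_i)$ with $g$ containing $\up(C_i)$, while in the complementary case $\up(\hat{g}_i)$ is not interior to any free gap of $L^*_v$ and $g$ lies below $\up(C_i)$. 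Once this transfer statement is in hand, both clauses of \emph{Inv2}$(i)$ are read off from the assignments of step~4.

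The lever is \emph{Inv1}$(i)$, which gives $L^{i-1}_v[\up(\hat{g}_i)] = L^*_v[\up(\hat{g}_i)]$, so the free and blocked gaps strictly above $\up(\hat{g}_i)$ — hence also the shadow-gap breaks among them — survive the triangulation-update unchanged; I would combine it with Lemma~\ref{lem:non-cross} (the relevant boundaries do not cross, so ``above/partly above'' is a clean dichotomy) and Lemma~\ref{lem:ignore2} (every boundary of an $L^*_v$-gap is contained in some $C_r$ or is an old gap boundary). First I would dispose of criteria (ii) and (iii): then $\hat{g}_i$ is a free or shadow gap of $L^{i-1}_v$ pierced by both $\up(C_i)$ and $\reg(C_i)$, step~4 takes the else branch, and since by \emph{Inv1}$(i)$ everything strictly above $\up(\hat{g}_i)$ is frozen while the new triangulation fills $\slab(v)$ just below $\up(\hat{g}_i)$ down to $\up(C_i)$, one gets that $\up(\hat{g}_i)$ is not interior to a free gap of $L^*_v$ and that $g$ lies below $\up(C_i)$ — the ``otherwise'' clause. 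For criterion (i), $\hat{g}_i$ is a blocked gap of $L^{i-1}_v$, so the $L^{i-1}_v$-gap immediately above it (free or a shadow break) is preserved by \emph{Inv1}$(i)$; hence $\up(\hat{g}_i)$ is still the lower boundary of a free or shadow gap of $L^*_v$, is interior to no $L^*_v$-gap, and $g$ is the gap of $L^*_v$ immediately below $\up(\hat{g}_i)$. The region of $\slab(v)$ just below $\up(\hat{g}_i)$ consists of old triangles of $\hat{g}_i$ together with regions of $\reg(C_i)$, none of which spans $v$, so $g$ is a blocked gap with $\up(g) = \up(\hat{g}_i)$; the two versions of the step-4 condition then coincide, and when it holds $g$ contains $\up(C_i)$ because $\up(C_i)$ meets the interior or lower boundary of $\hat{g}_i$ while staying weakly below $\up(\hat{g}_i)$, so $\mathit{state} := \mathsf{blocked}$, $\gamma := \up(\hat{g}_i) = \up(g)$ is exactly the first clause; when it fails, $\up(\hat{g}_i)$ is strictly below $\up(C_i)$, $g$ lies below $\up(C_i)$, and $\gamma := \up(C_i)$ gives the ``otherwise'' clause.

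The main obstacle is the criterion-(i) claim that the $L^*_v$-gap just below $\up(\hat{g}_i)$ is again a blocked gap with unchanged upper boundary: one must show that applying $C_i$ neither lifts this gap's top above $\up(\hat{g}_i)$ (this is exactly what the preserved free/shadow gap directly above $\hat{g}_i$, via \emph{Inv1}$(i)$, rules out) nor converts the region into part of a free gap, i.e. that no triangle of $\hat{g}_i$ or region of $\reg(C_i)$ bordering it spans $v$, using that blocked-gap boundaries have at most two edges and that any region of $C_i$ that spans $v$ is already counted as a free gap elsewhere. The remaining care is bookkeeping: the corner case where $\up(C_i)$ partly coincides with $\up(\hat{g}_i)$, so the leftover sliver of $\hat{g}_i$ above $\up(C_i)$ need not be connected, and pinning down ``above or partly above'' to mean ``has a portion strictly above'' so that the equal-boundary case lands cleanly in the ``otherwise'' clause.
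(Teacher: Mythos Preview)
Your overall strategy is sound and your ``transfer statement'' is exactly the right target, but your case decomposition is dual to the paper's: you case on the criterion that selects $\hat{g}_i$ in $L^{i-1}_v$, whereas the paper cases directly on whether the gap $g\in L^*_v$ is blocked with $\up(g)$ above or partly above $\up(C_i)$. Both routes work, but the paper's choice sidesteps one assertion that you leave unproved.

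The gap is in your criterion-(i) branch. You assert that ``the $L^{i-1}_v$-gap immediately above [$\hat{g}_i$] (free or a shadow break) is preserved by Inv1$(i)$''. Two issues. First, Inv1$(i)$ preserves only free and blocked gaps strictly above $\up(\hat{g}_i)$; it does not by itself preserve a shadow gap whose \emph{lower} boundary is $\up(\hat{g}_i)$, because such a shadow gap is not listed in $L^{i-1}_v[\up(\hat{g}_i)]$ and its lower edge could, a priori, be swallowed by a new blocked gap of $L^*_v$ straddling $\up(\hat{g}_i)$. Second, and more basic, you need that the gap immediately above $\hat{g}_i$ in $L^{i-1}_v$ is not itself blocked --- but $L^{i-1}_v$ is an intermediate list after partial processing, and you give no reason why two blocked gaps could not have become adjacent there. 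The paper avoids both issues: in its Case~1 it assumes $g$ is blocked with $\up(g)$ above or partly above $\up(C_i)$ and proves $\up(g)=\up(\hat{g}_i)$ by a two-sided contradiction argument using Inv1$(i)$ and Lemma~\ref{lem:ignore2}, never needing to reason about the gap above $\hat{g}_i$. Conversely, in its Case~2 it uses the observation (which you omit) that $\mathit{state}=\mathsf{null}$ at the start of step~4 forces the gaps at and below $\hat{g}_i$ to be unchanged from the \emph{original} $L_v$, where no two blocked gaps are adjacent --- this is how the paper rules out the troublesome configuration.

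Your criteria-(ii)/(iii) branch is also too compressed: ``g lies below $\up(C_i)$'' needs a real argument ruling out a free gap of $L^*_v$ sitting strictly between $\up(\hat{g}_i)$ and $\up(C_i)$, which the paper supplies in its Case~2 by contradiction. Your decomposition can certainly be completed, but you will end up reproducing essentially the paper's Case~1 and Case~2 arguments inside your criterion branches.
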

\begin{proof}
	Define the gap $g$ in $L^*_v$ as in Inv2$(i)$, namely, if $\up(\hat{g}_i)$ intersects the interior of a blocked gap in $L^*_v$, let $g$ be this blocked gap; otherwise, let $g$ be the highest gap in $L^*_v$ that lies below $\up(\hat{g}_i)$.  
	
	Recall that $L^*_v$ consists of free and blocked gaps only.   Since $\mathit{state} = \mathsf{null}$ at the beginning of step~4, either $i = 1$, or $\up(\hat{g}_i)$ was not created in step~6(b) of the $(i-1)$-th iteration by splitting $\check{g}_{i-1}$ at $\low(C_{i-1})$.  Therefore, the gaps below $\low(\check{g}_{i-1})$ were not changed in the $(i-1)$-th iteration.

	\begin{itemize}
		
		\item Case~1: $g$ is a blocked gap and $\up(g)$ is above or partly above $\up(C_i)$.  
		
		\hspace*{12pt}First, we claim that no part of $\up(\hat{g}_i)$ is above $\up(g)$.  If not, $\up(g)$ must lie at, below, or partly below $\up(C_i)$ by Lemmas~\ref{lem:ignore2} and~\ref{lem:non-cross}.   But this contradicts our assumption that $\up(g)$ is above or partly above $\up(C_i)$.
				
		\hspace*{12pt}Second, we claim that no part of $\up(g)$ is above $\up(\hat{g}_i)$.  Suppose not.  Since $g$ is a blocked gap, $\up(\hat{g}_i)$ intersects the interior of $g$ by the definition of $g$.  So $g \not\in L^*_v[\up(\hat{g}_i)]$.  All triangles that overlap with $g$ after the triangulation-update do not span $v$, including the subset $X$ that lie below $\up(g)$ and above $\up(\hat{g}_i)$.   The subset $X$ also exist with respect to $L^{i-1}_v$ because $\up(C_i)$ lies at, below, or partly below $\up(\hat{g}_i)$.   So the subset $X$ must overlap with a blocked gap $g'$ in $L^{i-1}_v[\up(\hat{g}_i)]$, implying that $g' \in L^*_v[\up(\hat{g}_i)]$ as $L^{i-1}_v[\up(\hat{g}_i)] = L^*_v[\up(\hat{g}_i)]$ by Inv1$(i)$.  However, $g' \not= g$ because $g \not\in  L^*_v[\up(\hat{g}_i)]$, contradicting the fact that gaps in $L^*_v$ are interior-disjoint.
	
		\hspace*{12pt}So $\up(g) = \up(\hat{g}_i)$ which is above or partly above $\up(C_i)$.  We claim that $\hat{g}_i$ is a blocked gap.  If not, $\hat{g}_i$ is a free or shadow gap, so all triangles with respect to $L^{i-1}_v$ that overlap with $\hat{g}_i$ span $v$, including those that lie below $\up(\hat{g}_i) = \up(g)$ and above $\up(C_i)$.  These triangles continue to exist after the triangulation-update, contradicting the assumption that $g$ is a blocked gap.  
		
		\hspace*{12pt}As a result, $\mathit{state} = \mathsf{null}$ before the beginning of Step~4, $\hat{g}_i$ is a blocked gap, and $\up(\hat{g}_i) = \up(g)$ is above or partly above $\up(C_i)$.  By these conditions, Step~4 sets $\mathit{state} := \mathsf{blocked}$ and $\gamma := \up(\hat{g}_i) = \up(g)$.  If $g$ does not contain $\up(C_i)$, some part of $\low(g)$ is above $\up(C_i)$.  As $g$ is a blocked gap in $L^*_v$, it is delimited by free and shadow gaps with respect to $L^*_v$.  Therefore, some triangles that lie below $\low(g)$ and above $\up(C_i)$ after the triangulation-update span $v$.  These triangles also exist with respect to $L^{i-1}_v$.  But then $\hat{g}_i$ must end above these triangles, contradicting the fact that $\up(C_i)$ intersects $\hat{g}_i$.

		\item Case~2: $g$ is not a blocked gap or no part of $\up(g)$ is above $\up(C_i)$.
		
		\hspace*{12pt}We claim that $\hat{g}_i$ is not a blocked gap or no part of $\up(\hat{g}_i)$ is above $\up(C_i)$.  Suppose not.  Then, $\hat{g}_i$ is a blocked gap and $\up(\hat{g}_i)$ is above or partly above $\up(C_i)$.  Note that $\low(\hat{g}_i)$ lies at, below, or partly below $\up(C_i)$.  If not, since $\up(C_i)$ intersects $\hat{g}_i$, Lemma~\ref{lem:non-cross} implies that $\up(C_i)$ overlaps partly with $\low(\hat{g}_i)$ and lies partly below $\low(\hat{g}_i)$.  So the gap in $L^{i-1}_v$ that lies immediately below $\hat{g}_i$ must be a blocked gap.  Recall that the gaps below $\low(\check{g}_{i-1})$ were not changed in the $(i-1)$-th iteration, which means that the blocked gap $\hat{g}_i$ and the blocked gap adjacent to and below $\hat{g}_i$ already exist before the triangulation-update, a contradiction to the fact that no two blocked gaps are adjacent before the triangulation-update.  As $\hat{g}_i$ is a blocked gap, the space below $\up(\hat{g}_i)$ and above $\up(C_i)$ is filled with triangles with respect to $L^{i-1}_v$ that do not span $v$.   These triangles continue to exist after the triangulation-update, meaning that there is a blocked gap $g'$ in $L^*_v$ that contains the portions of these triangles inside $\slab(v)$.  Observe that we must have chosen $g'$ to be $g$.  But then $g$ is a blocked gap and some part of $\up(g) = \up(g')$ is above $\up(C_i)$, contradicting the assumption of Case~2.
		
		\hspace*{12pt}Since $\mathit{state} = \mathsf{null}$ before the beginning of Step~4, our claim above implies that Step~4 does not alter \emph{state} and sets $\gamma := \up(C_i)$.
		
		\hspace*{12pt}We claim that $g$ is below $\up(C_i)$.  If $g$ is a blocked gap, then by the assumption of Case~2, $g$ is below $\up(C_i)$ as claimed.  Suppose that $g$ is a free gap.  Assume to the contrary that $\up(g)$ is above or partly above $\up(C_i)$.  By Lemma~\ref{lem:non-cross}, $\up(g)$ and $\up(C_i)$ do not cross.  Being a free gap, $g$ must then be above $\up(C_i)$ and, by the definition of~$g$, $\up(\hat{g}_i)$ must lie above $g$.  That is, $\up(\hat{g}_i)$ is strictly above $\up(C_i)$.  By the definition of $\hat{g}_i$, this happens only if $\hat{g}_i$ is a blocked gap.  The triangle containing $g$ spans $v$, and this triangle also exists with respect to $L^{i-1}_v$.  But then $\hat{g}_i$ is separated from $\up(C_i)$ by this triangle, contradicting the definition of $\hat{g}_i$.
		%So there is a free gap in $L^{i-1}_v$ whose lower boundary is equal to $\up(C_i)$.  It follows that $\up(\hat{g}_i) = \up(C_i)$ by the definition of $\hat{g}_i$.  But then $g$ lies entirely above $\up(\hat{g}_i)$.  This is a contradiction because $g$ being a free gap must lie below $\up(\hat{g}_i)$ by our choice of $g$.  Then, as $g$ is a blocked gap, all triangles below $\up(g)$ and above $\up(C_i)$ after the triangulation-update do not span $v$.  These triangles also exist with respect to $L^{i-1}_v$.   So there is a blocked gap $g'$ in $L^{i-1}_v$ that contains all these triangles.  It follows that $\up(g')$ is above or partly above $\up(C_i)$.  Observe that we must have chosen $g'$ to be $\hat{g}_i$ by the definition of $\hat{g}_i$.  However, this is a contradiction to our claim earlier that $\hat{g}_i$ is not a blocked gap or no part of $\up(\hat{g}_i)$ is above $\up(C_i)$.
\end{itemize}
\end{proof}

\begin{corollary}
	\label{cor:inv21}
	\emph{Inv2}$(1)$ is true.
\end{corollary}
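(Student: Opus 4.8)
The corollary is the base case of the induction that eventually establishes Inv2$(i)$ for all $i$, and the cleanest route is to obtain it by composing the two preceding lemmas rather than re-running the case analysis inside the proof of Lemma~\ref{lem:inv2i0} with $i = 1$ hard-wired. So the plan is simply to check that, for $i = 1$, both hypotheses of Lemma~\ref{lem:inv2i0} are met.

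First I would quote Lemma~\ref{lem:inv11}, which already states that Inv1$(1)$ is true; this supplies the first hypothesis of Lemma~\ref{lem:inv2i0} at $i = 1$. Next I would verify the second hypothesis, namely that $\mathit{state} = \mathsf{null}$ at the beginning of step~4 during the first iteration of steps~3--6 of {\sc Modify}$(v)$. This is immediate from the pseudocode of {\sc Modify}$(v)$: step~1 initializes $\mathit{state} := \mathsf{null}$, step~2 sets $i := 1$, and step~3 of the first iteration only assigns $\hat{g}_1$ and leaves $\mathit{state}$ untouched. Hence no iteration precedes the first one, and nothing executed before step~4 of the first iteration alters $\mathit{state}$, so $\mathit{state} = \mathsf{null}$ at that point.

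With both hypotheses verified, Lemma~\ref{lem:inv2i0} yields Inv2$(1)$, which is exactly the statement of the corollary. There is essentially no obstacle here; the real content sits in the two lemmas being invoked. The only point requiring any care is the bookkeeping observation that ``the $i$-th iteration'' at $i = 1$ genuinely begins with the freshly initialized value of $\mathit{state}$, which the previous paragraph records.
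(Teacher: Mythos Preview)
Your proposal is correct and follows essentially the same approach as the paper's own proof, which simply invokes Lemma~\ref{lem:inv11} for Inv1$(1)$ and then applies Lemma~\ref{lem:inv2i0}. Your version is slightly more explicit in verifying the second hypothesis of Lemma~\ref{lem:inv2i0} (that $\mathit{state} = \mathsf{null}$ at the start of step~4 in the first iteration), which the paper leaves implicit.
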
	
\begin{proof}
	Inv1$(1)$ holds by Lemma~\ref{lem:inv11}.  Then, the result follows from Lemma~\ref{lem:inv2i0}.  
\end{proof}

The next result shows that Inv2$(i)$ holds for $i \geq 2$ inductively given Inv3$(i-1)$ and Inv1$(i)$.

\begin{lemma}
	\label{lem:inv2i}
	$\forall\, i \geq 2$, \emph{Inv3}$(i-1)$ $\wedge$ \emph{Inv1}$(i)$ $\Rightarrow$ \emph{Inv2}$(i)$.
\end{lemma}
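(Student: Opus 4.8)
The plan is to split on the value of $\mathit{state}$ at the start of step~4 of the $i$-th iteration. Since nothing executed between the end of step~6(b) of iteration $i-1$ and the start of step~4 of iteration $i$ touches $\mathit{state}$ or $\gamma$ --- step~6(c) only empties $\Sigma$, step~6(d) only advances the counter, and step~3 only sets $\hat{g}_i$ --- that value equals the value of $\mathit{state}$ at the end of step~6(b) of iteration $i-1$. If it is $\mathsf{null}$, then the hypotheses of Lemma~\ref{lem:inv2i0} hold (we are also given Inv1$(i)$), so Inv2$(i)$ follows at once and we are done. So the real work is the case $\mathit{state} = \mathsf{blocked}$ at the start of step~4 of iteration $i$; this is precisely the case in which the last sub-bullet of Inv3$(i-1)$ applies, which hands us a blocked gap $\tilde{g}$ of $L^*_v$ with $\gamma = \up(\tilde{g})$ such that $\tilde{g}$ contains both $\low(C_{i-1})$ and $\up(C_i)$, together with $L^{i-1}_v[\low(C_{i-1})] = L^*_v[\low(C_{i-1})]$ and the fact that $\check{g}_{i-1}$ is a blocked gap containing $\up(C_i)$.

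In this case I would first record that step~4 of iteration $i$ is a no-op, because its guard ``$\mathit{state} = \mathsf{null}$'' fails; hence at the end of step~4 we still have $\mathit{state} = \mathsf{blocked}$ and $\gamma = \up(\tilde{g})$. Next I would pin down $\hat{g}_i$. Replaying the branch of step~6(b) of iteration $i-1$ that left $\mathit{state} = \mathsf{blocked}$, it either split $\check{g}_{i-1}$ at $\low(C_{i-1})$ and replaced the prefix of $L_v$ up to and including the upper half, leaving the lower half --- a blocked gap with upper boundary $\low(C_{i-1})$ that still straddles $\up(C_i)$ --- or (when $\low(C_{i-1})$ met the boundary of $\check{g}_{i-1}$ rather than its interior) left $\check{g}_{i-1}$ itself, in which case Lemma~\ref{lem:non-cross} and the fact that $\check{g}_{i-1}$ contains $\up(C_i)$ strictly below $\low(C_{i-1})$ force $\up(\check{g}_{i-1}) = \low(C_{i-1})$. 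Either way, $L^{i-1}_v$ contains a blocked gap straddling $\up(C_i)$ whose upper boundary is $\low(C_{i-1})$; by the precedence of criterion~(i) in step~3, this gap is $\hat{g}_i$, so $\up(\hat{g}_i) = \low(C_{i-1})$.

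It then remains to identify the gap $g$ of Inv2$(i)$ with $\tilde{g}$. Since $\up(\hat{g}_i) = \low(C_{i-1})$ lies in the closure of $\tilde{g}$ while $\tilde{g}$ also extends strictly below it (down to $\up(C_i)$), Lemma~\ref{lem:non-cross} and Lemma~\ref{lem:ignore2} (which put the boundaries $\up(C_i)$, $\low(C_{i-1})$, $\up(\tilde{g})$ into a consistent top-to-bottom order), together with Inv1$(i)$ (so that $L^{i-1}_v$ and $L^*_v$ agree strictly above $\up(\hat{g}_i)$), force $\tilde{g}$ to be exactly the gap of $L^*_v$ that either has $\up(\hat{g}_i)$ in its interior or is the topmost one below $\up(\hat{g}_i)$; that is, $g = \tilde{g}$. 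Finally, since $\mathit{state} = \mathsf{blocked} \ne \mathsf{null}$ at the end of step~4, Inv2$(i)$ cannot hold through its second sub-bullet, so it must hold through the first, which I would verify directly: $g = \tilde{g}$ is blocked; $\up(g) = \up(\tilde{g})$ is above or partly above $\up(C_i)$ because $\tilde{g}$ contains $\up(C_i)$ but reaches strictly higher, to $\low(C_{i-1})$; $g$ contains $\up(C_i)$; and $\gamma = \up(\tilde{g}) = \up(g)$.

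The main obstacle is the middle paragraph: identifying $\hat{g}_i$ and, with it, justifying $g = \tilde{g}$ rigorously. This forces one to replay precisely what step~6(b) of iteration $i-1$ did to $L_v$ --- the conditional split of $\check{g}_{i-1}$ at $\low(C_{i-1})$ and the replacement of the prefix up to that boundary --- and to dispose of the degenerate sub-cases in which $\low(C_{i-1})$ lies on a boundary of $\check{g}_{i-1}$ rather than in its interior, relying on Lemma~\ref{lem:non-cross} and Lemma~\ref{lem:ignore2} to linearly order the relevant chains and on Inv1$(i)$ to transfer the agreement of the two gap lists across $\up(\hat{g}_i)$. Everything else is bookkeeping.
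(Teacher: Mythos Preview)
Your proposal is correct and follows essentially the same route as the paper's own proof: reduce the $\mathit{state}=\mathsf{null}$ case to Lemma~\ref{lem:inv2i0}, and in the $\mathit{state}=\mathsf{blocked}$ case use the last sub-bullet of Inv3$(i-1)$ to exhibit the blocked gap $\tilde g$, argue that step~4 is a no-op, show $\up(\hat g_i)=\low(C_{i-1})$ by replaying step~6(b), and conclude $g=\tilde g$. The only noteworthy difference is that the paper identifies $g$ with $\tilde g$ by a one-line observation (since $\tilde g$ contains $\up(\hat g_i)=\low(C_{i-1})$, either $\up(\tilde g)=\up(\hat g_i)$ or $\up(\hat g_i)$ meets the interior of $\tilde g$, and in both cases the definition of $g$ in Inv2$(i)$ picks out $\tilde g$), whereas you additionally invoke Inv1$(i)$ and Lemma~\ref{lem:ignore2} there; those extra invocations are harmless but not needed for this step.
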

\begin{proof}
	Lemma~\ref{lem:inv2i0} already establishes that Inv2$(i)$ is true in the case of $\mathit{state} = \mathsf{null}$ at the beginning of step~4 in the $i$-th iteration of steps~3--6 of {\sc Modify}$(v)$.  Assume that $\mathit{state} = \mathsf{blocked}$ at the beginning of step~4.  Note that step~4 does not alter \emph{state} in this case.  So $\mathit{state} = \mathsf{blocked}$ at the end of step~4.  It also means that $\mathit{state} = \mathsf{blocked}$ at the end of step~6(b) in the $(i-1)$-th iteration.  By Inv3$(i-1)$, 
	\begin{itemize}
		\item $\check{g}_{i-1}$ is a blocked gap and $\check{g}_{i-1}$ contains both $\low(C_{i-1})$ and $\up(C_i)$,
		\item there is a blocked gap $\tilde{g}$ in $L^*_v$ that contains both $\low(C_{i-1})$ and $\up(C_i)$, and
		\item $\gamma = \up(\tilde{g})$.
	\end{itemize}

	%We claim that $\tilde{g}$ contains $\low(C_{i-1})$.  If not, $\low(C_{i-1})$ overlaps partly with $\up(\tilde{g})$ and lies partly above $\up(\tilde{g})$ by Lemma~\ref{lem:non-cross}.  But then there is a blocked gap in $L^*_v$ that lies immediately above $\tilde{g}$, contradicting the fact that no two blocked gaps are adjacent.

	Since $\check{g}_{i-1}$ contains $\up(C_i)$, by the definition of $\check{g}_{i-1}$, either $\up(\check{g}_{i-1}) = \low(C_{i-1})$ or $\low(C_{i-1})$ intersects the interior of $\check{g}_{i-1}$.  Thus, the portion of $\check{g}_{i-1}$ below $\low(C_{i-1})$ appears as a blocked gap in $L^{i-1}_v$ by step~6(b) of the $(i-1)$-th iteration, which implies that $\hat{g}_i$ is the portion of  $\check{g}_{i-1}$ below $\low(C_{i-1})$.   So $\up(\hat{g}_i) = \low(C_{i-1})$.  Since $\tilde{g}$ is a blocked gap and $\tilde{g}$ contains both $\low(C_{i-1}) = \up(\hat{g}_i)$ and $\up(C_i)$, either $\up(\tilde{g}) = \up(\hat{g}_i)$ or $\up(\hat{g}_i)$ intersects the interior of $\tilde{g}$.  In either case, we must have chosen $\tilde{g}$ to be the gap $g$ defined in Inv2$(i)$.  So Inv2$(i)$ is satisfied.
\end{proof}

The next result shows that Inv3$(i)$ is true given Inv1$(i)$ and Inv2$(i)$.

\begin{lemma}
	\label{lem:inv3i}
	$\forall\, i \geq 1$, \emph{Inv1}$(i)$ $\wedge$ \emph{Inv2}$(i)$ $\Rightarrow$ \emph{Inv3}$(i)$.
\end{lemma}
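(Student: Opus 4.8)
The plan is as follows. The overall induction that yields $L^m_v=L^*_v$ threads Inv1, Inv2, and Inv3 across the iterations; Lemma~\ref{lem:inv3i} is the step that produces Inv3$(i)$ from Inv1$(i)$ and Inv2$(i)$ within the $i$-th iteration, so no further induction on $i$ is needed here. I would argue by a case analysis on the value of $\mathit{state}$ at the beginning of step~6(b) of the $i$-th iteration, which coincides with its value at the end of step~5: this value is $\mathsf{blocked}$ exactly when the last region of $C_i$ in topological order is legal and does not span $v$, and $\mathsf{null}$ otherwise. One preliminary remark disposes of part of the $\mathsf{null}$ case immediately: every illegal region is a union of triangles of $T_U$ that span $\mathit{parent}(v)$, and every such triangle also spans $v$, so every illegal region spans $v$; hence whenever $\mathit{state}=\mathsf{null}$ at the start of step~6(b) the last region of $C_i$ is either legal-and-spanning or illegal, and in both cases it spans $v$, as Inv3 asserts.

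Before the case split I would record two facts that hold throughout the topological sort in step~5. First, every free gap appended to $\Sigma$ is $t\cap\slab(v)$ for a triangle $t\in T_U$ that spans $v$ but not $\mathit{parent}(v)$, hence a genuine free gap of $v$ with respect to the triangulation produced by the update. Second, between two consecutive spanning regions of $C_i$ (or between the initial value of $\gamma$ fixed in step~4 and the first spanning region) step~5 appends a blocked gap whose upper boundary is the current $\gamma$ and whose lower boundary is the top of the next spanning region; the strip so delimited is filled, after the update, only by regions that do not span $v$ (here one uses that the spanning regions of $C_i$ are consecutive in the topological order, that $\reg(C_i)$ contains everything lying between $\up(C_i)$ and $\low(C_i)$ within $\slab(v)$, and Lemma~\ref{lem:ignore}), so this strip lies inside a blocked gap of $L^*_v$. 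I would then upgrade ``lies inside a blocked gap of $L^*_v$'' to ``equals a blocked gap of $L^*_v$'' using Inv2$(i)$ for the initial $\gamma$ --- it equals $\up(g)$ when the gap $g$ of $L^*_v$ defined in Inv2$(i)$ is a blocked gap reaching at or above $\up(C_i)$, and $\up(C_i)$ otherwise --- together with Lemmas~\ref{lem:ignore2} and~\ref{lem:non-cross}: the true blocked gap cannot extend above the initial $\gamma$ because its upper neighbour is a free or shadow gap whose boundary is frozen by Inv1$(i)$ or Inv2$(i)$, and it cannot extend below the top of the next spanning region because that boundary bounds a free or shadow gap of $L^*_v$. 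This identifies the gaps accumulated in $\Sigma$ with a contiguous sublist of $L^*_v$ occupying the strip from $\up(\hat g_i)$ down to the current $\gamma$.

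With this in hand the three sub-cases of step~6(b) become bookkeeping checks. If $\mathit{state}=\mathsf{null}$ at the start of 6(b): $\low(C_i)$ is the bottom of a spanning region, hence a boundary between two gaps of $L^*_v$; 6(b) splits $\check g_i$ at $\low(C_i)$ if needed and replaces the gaps from $\hat g_i$ down to the gap just above $\low(C_i)$ by $\Sigma$, so combining with Inv1$(i)$ gives $L^i_v[\low(C_i)]=L^*_v[\low(C_i)]$. If $\mathit{state}=\mathsf{blocked}$ at the start of 6(b) and 6(b) enters its first branch (because $i=m$ or $\check g_i$ is disjoint from $\up(C_{i+1})$): the portion of $\check g_i$ below $\low(C_i)$ is made only of non-spanning triangles that the update leaves untouched (they lie in $H_v\setminus\widetilde H_v$ or outside $\reg(C_i)$), so together with the blocked gap carried by $\gamma$ it merges into a single blocked gap $\tilde g$ of $L^*_v$ with $\low(\tilde g)=\low(\check g_i)$, and this $\tilde g$ is exactly the blocked gap that 6(b) builds between $\gamma$ and $\low(\check g_i)$ and appends to $\Sigma$; replacing the gaps from $\hat g_i$ through $\check g_i$ by $\Sigma$ then gives $L^i_v[\low(\check g_i)]=L^*_v[\low(\check g_i)]$, and $\check g_i$ and $\tilde g$ are blocked as claimed. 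If $\mathit{state}=\mathsf{blocked}$ at the end of 6(b): then $i<m$, $\check g_i$ meets $\up(C_{i+1})$, the blocked gap opened at $\gamma$ is still open, and $\tilde g$ --- the gap of $L^*_v$ through $\low(C_i)$ --- is the blocked gap that will eventually absorb both the bottom of $\reg(C_i)$ and all of $\check g_i$; thus $\gamma=\up(\tilde g)$ is carried to iteration $i+1$, the branch condition together with Lemma~\ref{lem:non-cross} forces both $\check g_i$ and $\tilde g$ to contain $\up(C_{i+1})$, and $L^i_v[\low(C_i)]=L^*_v[\low(C_i)]$ holds because every gap of $L^i_v$ lying entirely above $\low(C_i)$ is either above $\up(\hat g_i)$ (covered by Inv1$(i)$) or one of $\Sigma$'s gaps, which stop at $\gamma=\up(\tilde g)$.

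The step I expect to be the real obstacle is the strengthening of ``$\Sigma$'s blocked gaps sit inside blocked gaps of $L^*_v$'' to equality, i.e., certifying that each synthesised blocked gap has exactly the right upper and lower boundary: this is where Lemmas~\ref{lem:ignore}, \ref{lem:ignore2}, and~\ref{lem:non-cross} and the precise branches of Inv2$(i)$ all have to be combined, and where the exceptional situation in which $\up(\hat g_i)$ was created in the previous iteration as $\low(C_{i-1})$ (so that $\hat g_i$ is the lower part of a split of $\check g_{i-1}$) needs to be handled separately. A secondary delicacy is the last sub-case, where $L^i_v$ must be certified correct above $\low(C_i)$ even though iteration $i+1$ has not yet closed the pending blocked gap, which relies on the fact that that pending gap and $\check g_i$ will be consumed together during the processing of $C_{i+1}$.
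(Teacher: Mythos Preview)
Your approach is essentially the paper's: both arguments first establish a loop invariant for step~5---the paper packages it as ``Properties of step~5'' (namely, after processing region $\tau$, $\Sigma$ equals the appropriate sublist $L^*_v[\cdot]\setminus L^*_v[\up(\hat g_i)]$), which is exactly your ``two facts plus upgrade''---and then run the same three-way case split on $\mathit{state}$ through step~6(b). One small caution: your claim that $\reg(C_i)$ contains everything between $\up(C_i)$ and $\low(C_i)$ in $\slab(v)$ can fail when $R_U$ has holes, but this is harmless since a spanning hole triangle would disconnect $C_i$, and non-spanning hole triangles merely sit inside the blocked gap you are building; also, the paper explicitly argues that $\check g_i$ is a blocked gap in the $\mathsf{blocked}$ case (via an interior vertex of $\low(C_i)$), a step you should make explicit.
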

\begin{proof}
	By Inv2$(i)$ and the specification of step~5 of {\sc Modify}$(v)$, for each region $\tau$ in $C_i$ encountered in an iteration of step~5, the following properties are satisfied at the end of that iteration.
	\begin{quote}
		{\sc Properties of step 5:}
		\begin{romani}
			\item if $\tau$ spans $v$, then $\Sigma = L^*_v[\low(\tau)] \setminus L^*_v[\up(\hat{g}_i)]$;
			\item if $\tau$ is a legal region that does not span $v$, then $\tau$ is contained in a blocked gap $\tilde{g}$ in $L^*_v$, $\Sigma = L^*_v[\up(\tilde{g})] \setminus L^*_v[\up(\hat{g}_i)]$, and $\gamma = \up(\tilde{g})$.
		\end{romani}
	\end{quote}
	Our proof distinguishes three cases according to the value of $\mathit{state}$ as in the description of Inv3$(i)$.
	\begin{itemize}
		\item $\mathit{state} = \mathsf{null}$ at the beginning of step~6(b).  
		
		\hspace*{12pt}Let $\tau$ be the last region in $C_i$ encountered in step~5.  As $\mathit{state} = \mathsf{null}$ after step~5, $\tau$ must span $v$, which implies $\low(\tau) = \low(C_i)$.  By property~(i) of step~5 as mentioned above, we have $\Sigma = L^*_v[\low(\tau)] \setminus L^*_v[\up(\hat{g}_i)] = L^*_v[\low(C_i)] \setminus L^*_v[\up(\hat{g}_i)]$.   By Inv1$(i)$, $L^{i-1}_v[\up(\hat{g}_i)] = L^*_v[\up(\hat{g}_i)]$.  Therefore, after step~6 replaces by $\Sigma$ the gaps in $L^{i-1}_v$ from $\hat{g}_i$ to $\low(C_i)$, we obtain $L^i_v[\low(C_i)] = L^*_v[\low(C_i)]$.
		
		\item $\mathit{state} = \mathsf{blocked}$ at the beginning of step~6(b).  
		
		\hspace*{12pt}Let $\tau$ be the last region in $C_i$ encountered in step~5.  With a slight abuse of notation, although $\tau$ does not span $v$, we also use $\up(\tau)$ and $\low(\tau)$ to mean the upper and lower boundaries of $\tau$, respectively.  In this case, $\tau$ is a legal region that does not span $v$.  Note that $\low(\tau) \subset \low(C_i)$.  
		
		\hspace*{12pt}We claim that $\check{g}_i$ is a blocked gap.  Since $\low(\tau) \subset \low(C_i)$, there are at least two edges in $\low(C_i)$.  Let $x$ be an interior vertex of $\low(C_i)$.   The edges of $\low(C_i)$ subtend two angles at $x$.  
		Either the angle above $\low(C_i)$ at $x$ or the angle below $\low(C_i)$ at $x$ is at least $\pi$.  There must then be a triangle $t$ with respect to $L^{i-1}_v$ that is incident to $x$ and does not span $v$.   Therefore, $\low(C_i)$ intersects the blocked gap in $L^{i-1}_v$ that overlaps with $t$, implying that $\check{g}_i$ is a blocked gap.
		
		\hspace*{12pt}Since $\tau$ is a legal region that does not span $v$, by property (ii) of step~5 mentioned above, there is a blocked gap $\tilde{g}$ in $L^*_v$ that contains $\tau$.  The existence of $\tau$ means that all regions in $C_i$ that are incident to and above $\low(C_i)$ do not span $v$.  It follows that $\tilde{g}$ contains $\low(C_i)$ and intersects $\reg(C_i)$, including the possibility of $\low(\tilde{g}) = \low(C_i)$, such that $\Sigma = L^*_v[\up(\tilde{g})] \setminus L^*_v[\up(\hat{g}_i)]$ and $\gamma = \up(\tilde{g})$ at the end of step~5.  There are two subcases depending on the value of $\mathit{state}$ at the end of step~6(b).
		\begin{itemize}
			\item $\mathit{state} = \mathsf{null}$ at the end of step~6(b).   
			
			\hspace*{12pt}It means that either $C_i$ is the last component in $\widetilde{H}_v$ or $\check{g}_i$ is disjoint from $\up(C_{i+1})$.  In either case, the triangulation-update does not affect the triangles with respect to $L^{i-1}_v$ that lie below $\low(C_i)$ and overlap with the blocked gap $\check{g}_i$.  Therefore, the blocked gap $\tilde{g}$ in $L^*_v$ ends at $\low(\check{g}_i)$, i.e., $\low(\tilde{g}) = \low(\check{g}_i)$.
			
			\hspace*{12pt}Step~6(b) appends to $\Sigma$ a blocked gap $g$ with $\up(g) = \gamma = \up(\tilde{g})$ and $\low(g) = \low(\check{g}_i)$.  By property (ii) of step~5 mentioned above, this addition makes $\Sigma = L^*_v[\low(\check{g}_i)] \setminus L^*_v[\up(\hat{g}_i)]$.  By Inv1(i), $L^{i-1}_v[\up(\hat{g}_i)] = L^*_v[\up(\hat{g}_i)]$.  Then, after step~6(b) replaces by $\Sigma$ the gaps in $L^{i-1}_v$ from $\hat{g}_i$ to $\check{g}_i$, we obtain $L^i_v[\low(\check{g}_i)] = L^*_v[\low(\check{g}_i)]$.
			
			\item $\mathit{state} = \mathsf{blocked}$ at the end of step~6(b).  
			
			\hspace*{12pt}In this case, $C_{i+1}$ exists and $\up(C_{i+1})$ intersects $\check{g}_i$.
			
			\hspace*{12pt}Step~6(b) splits $\check{g}_i$ in $L^{i-1}_v$ into two gaps at $\low(C_i)$, say $g$ above $\low(C_i)$ and $g'$ below $\low(C_i)$.  As $\check{g}_i$ is a blocked gap, the triangles with respect to $L^{i-1}_v$ that form $g'$ do not span $v$.   Moreover, among the triangles with respect to $L^{i-1}_v$ that form $g'$, the triangulation-update does not affect the subset $X$ that  lie between $\low(C_i)$ and $\up(C_{i+1})$.  Recall that $\tilde{g}$ is a blocked gap in $L^*_v$ that contains $\low(C_i)$ and intersects $\reg(C_i)$.   Therefore, $\tilde{g}$ must include the subset $X$, implying that $\low(C_i)$ intersects the interior of $\tilde{g}$.
			% and $\tilde{g}$ is excluded from $L^*_v[\low(C_i]$.  Note that the lowest gap in $L^{i-1}_v[\low(C_i)]$ is $g$ after the splitting of $\check{g}_i$ into $g$ and $g'$.
			
			\hspace*{12pt}Then, step~6(b) replaces by $\Sigma$ the gaps in $L^{i-1}_v$ from $\hat{g}_i$ to $\low(C_i)$.   By Inv1$(i)$, $L^{i-1}_v[\up(\hat{g}_i)] = L^*_v[\up(\hat{g}_i)]$.  Then, by the fact that $\low(C_i)$ intersects the interior of $\tilde{g}$ and $\Sigma = L^*_v[\up(\tilde{g})] \setminus L^*_v[\up(\hat{g}_i)]$, the replacement by $\Sigma$ gives $L^i_v[\low(C_i)] = L^*_v[\up(\tilde{g})] = L^*_v[\low(C_i)]$.
		\end{itemize}
	\end{itemize}
\end{proof}

Lemma~\ref{lem:inv11}, Corollary~\ref{cor:inv21}, and Lemma~\ref{lem:inv3i} establish Inv1(1), Inv2(1), and Inv3(1).  The next result shows that Inv3$(i)$ implies Inv1$(i+1)$.  Then, Lemma~\ref{lem:inv2i} gives Inv2$(i+1)$ and Lemma~\ref{lem:inv3i} gives Inv3$(i+1)$, which completes the inductive argument.

\begin{lemma}
	$\forall\, i \geq 1$, \emph{Inv3}$(i)$ $\Rightarrow$ \emph{Inv1}$(i+1)$.
\end{lemma}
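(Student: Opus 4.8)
\noindent The plan is to prove Inv1$(i+1)$, i.e. $L^i_v[\up(\hat{g}_{i+1})] = L^*_v[\up(\hat{g}_{i+1})]$, by a case analysis on $\mathit{state}$ that mirrors the statement of Inv3$(i)$. We may assume $i < m$, since otherwise no $(i+1)$-st iteration takes place and there is nothing to prove. In each branch of Inv3$(i)$ we are handed a curve $\phi_i$ along which $L^i_v$ and $L^*_v$ already coincide: take $\phi_i = \low(C_i)$ when $\mathit{state} = \mathsf{null}$ at the start of step~6(b), and also when $\mathit{state} = \mathsf{blocked}$ at the end of step~6(b); take $\phi_i = \low(\check{g}_i)$ in the remaining branch ($\mathsf{blocked}$ at the start, $\mathsf{null}$ at the end). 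Inv3$(i)$ gives $L^i_v[\phi_i] = L^*_v[\phi_i]$ in all three branches.

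The first step is to locate $\up(\hat{g}_{i+1})$ relative to $\phi_i$. In the branch with $\mathit{state}=\mathsf{blocked}$ at the end of step~6(b) this immediately closes the argument: Inv3$(i)$ says $\check{g}_i$ is a blocked gap containing $\up(C_{i+1})$, and the \emph{else}-clause of step~6(b) has just split $\check{g}_i$ at $\low(C_i)$ (or $\up(\check{g}_i)=\low(C_i)$ already), so the piece below $\low(C_i)$ is a blocked gap of $L^i_v$ with upper boundary $\low(C_i)$ that still contains $\up(C_{i+1})$; by the precedence in step~3 this piece is exactly $\hat{g}_{i+1}$, hence $\up(\hat{g}_{i+1}) = \low(C_i) = \phi_i$ and $L^i_v[\up(\hat{g}_{i+1})]=L^i_v[\phi_i]=L^*_v[\phi_i]=L^*_v[\up(\hat{g}_{i+1})]$. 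In the other two branches I would argue (i) that no free or blocked gap of $L^i_v$ straddles $\phi_i$ --- in the $\mathsf{null}$ branch because the last region of $C_i$ processed spans $v$, so $\phi_i=\low(C_i)$ is the lower edge of a triangle spanning $v$ in the triangulation with respect to $L^i_v$, separating $\slab(v)$ just above $\phi_i$ from just below it; in the $\mathsf{blocked}$-to-$\mathsf{null}$ branch because $\phi_i=\low(\check{g}_i)$ is the lower boundary of a blocked gap and a blocked gap is delimited from below by a free or shadow gap --- and (ii) that $\up(C_{i+1})$ lies weakly below $\phi_i$ --- in the $\mathsf{null}$ branch because $C_{i+1}$ is below $C_i$ in the top-to-bottom ordering of $\widetilde{H}_v$, and in the $\mathsf{blocked}$-to-$\mathsf{null}$ branch because the \emph{then}-clause of step~6(b) is entered with $i<m$, forcing $\check{g}_i$ to be disjoint from $\up(C_{i+1})$ while it contains $\low(C_i)$. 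Since $\up(C_{i+1})$ meets $\hat{g}_{i+1}$ and no gap straddles $\phi_i$, $\hat{g}_{i+1}$ — and hence $\up(\hat{g}_{i+1})$ — lies weakly below $\phi_i$.

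It then remains to match the two lists on the strip $S$ of $\slab(v)$ lying below $\phi_i$ and above $\up(\hat{g}_{i+1})$; concatenating this with $L^i_v[\phi_i]=L^*_v[\phi_i]$ yields Inv1$(i+1)$, provided additionally that no gap of $L^i_v$ or $L^*_v$ straddles $\phi_i$ or $\up(\hat{g}_{i+1})$ — the claim for $\phi_i$ was just handled, and for $\up(\hat{g}_{i+1})$ it holds because iteration $i+1$ replaces $\hat{g}_{i+1}$ and the gaps below it but leaves the gap of $L^i_v$ immediately above $\hat{g}_{i+1}$ (whose lower boundary is $\up(\hat{g}_{i+1})$) intact, and later iterations touch only gaps even lower. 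For $S$ itself I would show both $L^i_v$ and $L^*_v$ restrict on $S$ to the same sublist as the original $\gaplist(v)$: {\sc Modify}$(v)$ processes $C_1,\dots,C_m$ from top to bottom, each iteration $j$ replacing only the contiguous range of its current gap list from $\hat{g}_j$ (pinned by $\up(C_j)$) down to $\low(C_j)$ or $\low(\check{g}_j)$ (pinned by Inv3$(j)$); one checks that the ranges for $j\le i$ all lie weakly above $\phi_i$, so $S$ is untouched through iteration $i$, and the ranges for $j\ge i+1$ all lie weakly below $\hat{g}_{i+1}$, so $S$ is untouched from iteration $i+1$ onward, in particular in $L^*_v=L^m_v$.

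The main obstacle is precisely this last ``monotone frontier'' claim: that the successive replacement ranges are nested in height, so that by the end of iteration $i$ everything strictly above $\up(\hat{g}_{i+1})$ has reached its final form. It is intuitively forced by the top-to-bottom processing order and the non-crossing of all the relevant boundaries (Lemma~\ref{lem:non-cross}), but a rigorous treatment must track simultaneously where each iteration's replacement begins and ends, carry the bookkeeping for shadow-gap breaks (which are not stored explicitly in $L_v$), and dispose of the degenerate cases where $\up(C_{i+1})$ meets $\hat{g}_{i+1}$ only along its lower boundary or a relevant boundary coincides with $\low(C_i)$. I would isolate the monotonicity as a short auxiliary claim proved by the same induction on $j$ that drives {\sc Modify}$(v)$, after which the case analysis above goes through cleanly.
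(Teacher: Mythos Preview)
Your case analysis matches the paper's three-case structure, and your handling of the third case ($\mathit{state}=\mathsf{blocked}$ at the end of step~6(b)) is essentially the paper's argument. The gap is in the final step of the other two cases, where you argue that the strip $S$ between $\phi_i$ and $\up(\hat{g}_{i+1})$ looks the same in $L^i_v$ and in $L^*_v$.

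That step is circular. You write that iterations $j \ge i+1$ leave $S$ untouched, ``in particular in $L^*_v = L^m_v$.'' But the equality $L^m_v = L^*_v$ is precisely the correctness of {\sc Modify}$(v)$ (Lemma~\ref{lem:modify}), and it is established only \emph{after} the full chain Inv1$(j)$, Inv2$(j)$, Inv3$(j)$ has been proved for all $j \le m$ --- including the very Inv1$(i+1)$ you are in the middle of proving. Your proposed ``monotone frontier'' auxiliary claim invokes Inv3$(j)$ for $j \ge i+1$, which is likewise unavailable; and even if you proved the monotonicity of the replacement ranges independently of the invariants, that would only tell you that $L^m_v$ agrees with the original $\gaplist(v)$ on $S$, not that $L^*_v$ does. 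You would still need $L^*_v = L^m_v$ to close the loop.

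The paper sidesteps this by reasoning about $L^*_v$ geometrically, not through the algorithm. In cases~1 and~2 the strip $S$ lies below $\low(C_i)$ (resp.\ $\low(\check{g}_i)$) and above $\up(\hat{g}_{i+1})$, hence weakly above $\up(C_{i+1})$; so no component of $\widetilde{H}_v$ meets $S$, and by Lemma~\ref{lem:ignore} the components of $H_v \setminus \widetilde{H}_v$ cannot alter gap boundaries. Thus the triangles forming the gaps of $L^i_v$ in $S$ --- which are just the original gaps of $\gaplist(v)$ there, since the algorithm has not yet touched $S$ --- survive the triangulation-update intact, and those gaps appear unchanged in $L^*_v$. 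This one-sentence observation (``unaffected by the triangulation-update'') replaces your monotone-frontier apparatus and breaks the circularity.
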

\begin{proof}
	There are three cases depending on the value of $\mathit{state}$ in the $i$-th iteration of steps~3--6 of {\sc Modify}$(v)$.  
	\begin{itemize}
		\item Case~1: $\mathit{state} = \mathsf{null}$ at the beginning of step~6(b) in the $i$-th iteration.  By Inv3$(i)$, $L^i_v[\low(C_i)] = L^*_v[\low(C_i)]$ and the last region in $C_i$ processed in step~5 spans $v$.  Therefore, with respect to $L^i_v[\low(C_i)] = L^*_v[\low(C_i)]$, the gap that is immediately above $\low(C_i)$ is a free or shadow gap.  We conclude that the gaps in $L^i_v$ and $L^*_v$ are either above or below $\low(C_i)$.    By the definition of $\hat{g}_{i+1}$, no part of $\up(C_{i+1})$ is above $\up(\hat{g}_{i+1})$.   Therefore, the gaps in $L^i_v$ that lie below $\low(C_i)$ and above $\up(\hat{g}_{i+1})$, if any, are unaffected by the triangulation-update.  This implies that $L^*_v$ inherits the gaps in $L^i_v[\up(\hat{g}_{i+1})] \setminus L^i_v[\low(C_i)]$.  Hence, $L^i_v[\up(\hat{g}_{i+1})] = L^*_v[\up(\hat{g}_{i+1})]$.
		
		\item Case~2: 	$\mathit{state} = \mathsf{blocked}$ at the beginning of step~6(b), and $\mathit{state} = \mathsf{null}$ at the end of step~6(b) in the $i$-th iteration.  By Inv3$(i)$, $L^i_v[\low(\check{g}_i)] = L^*_v[\low(\check{g}_i)]$.  From the specification of step~6(b), we know that $\check{g}_i$ is disjoint from $C_{i+1}$.  
		%So $\up(\hat{g}_{i+1})$ was not created in step~6(b) of the $i$-th iteration by splitting $\check{g}_i$ at $\low(C_i)$.  
		%So no part of $\up(\hat{g}_{i+1})$ is above $\low(\check{g}_i)$.  
		Since $\up(C_{i+1})$ lies at, below, or partly below $\up(\hat{g}_{i+1})$ by definition, the gaps in $L^i_v[\up(\hat{g}_{i+1})] \setminus L^i_v[\low(\check{g}_i)]$, if any, are preserved in $L^*_v$.  Hence, $L^i_v[\up(\hat{g}_{i+1})] = L^*_v[\up(\hat{g}_{i+1})]$.
		
		\item Case~3: $\mathit{state} = \mathsf{blocked}$ at the end of step~6(b).  
		By Inv3$(i)$, $\check{g}_i$ is a blocked gap that contains $\up(C_{i+1})$.   From the specification of step~6(b),  
		%$\hat{g}_{i+1}$ is equal to the portion of $\check{g}_i$ below $\low(C_i)$.  Therefore, 
		$\check{g}_i$ is split at $\low(C_i)$, and so the portion of $\check{g}_i$ below $\low(C_i)$ appears as a blocked gap in $L^i_v$.  This implies that $\hat{g}_{i+1}$ is the portion of $\check{g}_i$ below $\low(C_i)$, and so $\up(\hat{g}_{i+1}) = \low(C_i)$.  Inv3$(i)$ implies that $L^i_v[\up(\hat{g}_{i+1})] = L^i_v[\low(C_i)] = L^*_v[\low(C_i)] = L^*_v[\up(\hat{g}_{i+1})]$.
		
	\end{itemize}
\end{proof}

We use Inv3$(m)$ to prove the correctness of {\sc Modify}$(v)$.

\begin{lemma}
	\label{lem:modify}
	{\sc Modify}$(v)$ updates $\gaplist(v)$ correctly.  
	%The list {updated-blocked-gaps}$(v)$ stores the blocked gaps of $v$ created by the triangulation-update.
\end{lemma}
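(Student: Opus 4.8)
The plan is to read the lemma off from \emph{Inv3}$(m)$, the instance of the third invariant for the last component $C_m$ of $\widetilde H_v$, which has already been secured by the inductive chain (Lemma~\ref{lem:inv11}, Corollary~\ref{cor:inv21}, Lemma~\ref{lem:inv3i}, Lemma~\ref{lem:inv2i}, and the implication \emph{Inv3}$(i)\Rightarrow$\emph{Inv1}$(i+1)$). Since {\sc Modify}$(v)$ outputs $L^m_v$ as the updated $\gaplist(v)$, it suffices to prove $L^m_v = L^*_v$. The degenerate case $\widetilde H_v = \emptyset$ (so $m=0$ and $L^m_v = L^0_v$) is immediate from Lemmas~\ref{lem:ignore} and~\ref{lem:ignore2}, which show that no gap boundary is created or destroyed, so I would assume $m\ge 1$. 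First I would observe that the sub-case of \emph{Inv3}$(m)$ in which $\mathit{state}=\mathsf{blocked}$ at the end of step~6(b) cannot arise for $i=m$: the guard ``$i=m$ or $\check g_m$ disjoint from $\up(C_{m+1})$'' in step~6(b) holds when $i=m$, so step~6(b) resets $\mathit{state}:=\mathsf{null}$. Hence only two cases of \emph{Inv3}$(m)$ survive, and I would treat them in turn.

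In the case $\mathit{state}=\mathsf{null}$ at the beginning of step~6(b), \emph{Inv3}$(m)$ gives $L^m_v[\low(C_m)] = L^*_v[\low(C_m)]$ together with the fact that the last region of $C_m$ scanned in step~5 spans $v$. The latter shows that in the new triangulation the region just above $\low(C_m)$ lies inside a triangle spanning $v$, hence in a free or shadow gap, so no gap of $L^*_v$ straddles $\low(C_m)$; and step~6(b) replaces in $L^{m-1}_v$ every gap down to and including the one immediately above $\low(C_m)$ (splitting $\check g_m$ at $\low(C_m)$ first if needed), so no gap of $L^m_v$ straddles $\low(C_m)$ either. It then remains to match the gaps lying entirely below $\low(C_m)$: on the $L^*_v$ side the triangulation there, and the ancestral free gaps that define the shadow gaps of $v$ there, are untouched by the triangulation-update because $C_m$ is the lowest component, so this tail of $L^*_v$ equals the corresponding tail of $L^0_v$; on the $L^m_v$ side no step~6(b) replacement of any iteration reaches below $\low(C_m)$, so this tail of $L^m_v$ equals that of $L^0_v$ as well. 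Combining the three pieces gives $L^m_v = L^*_v$. The other surviving case, $\mathit{state}=\mathsf{blocked}$ at the beginning and $\mathsf{null}$ at the end of step~6(b), is identical with $\low(\check g_m)$ in place of $\low(C_m)$: \emph{Inv3}$(m)$ supplies $L^m_v[\low(\check g_m)]=L^*_v[\low(\check g_m)]$ and $\low(\tilde g)=\low(\check g_m)$ for the blocked gap $\tilde g\in L^*_v$ through $\low(C_m)$, step~6(b) replaces exactly the gaps from $\hat g_m$ through $\check g_m$, and since $\low(\check g_m)$ lies at or below $\low(C_m)$ the same untouched-tail argument yields $L^m_v=L^*_v$.

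The step I expect to be the main obstacle is the ``untouched tail'' bookkeeping---verifying, on one side, that the part of the triangulation below the cut line (together with the ancestral free gaps inducing shadow gaps of $v$ there) is genuinely unaffected by the update, and on the other side, that across \emph{all} iterations of {\sc Modify}$(v)$ no step~6(b) replacement ever edits a gap lying entirely below that cut. The latter is the delicate point, since a blocked gap $\check g_i$ with $i<m$ could in principle stretch far below $\low(C_i)$; here I would lean on Lemma~\ref{lem:non-cross} and the top-to-bottom ordering of the $C_r$'s to argue that whenever $\check g_i$ would reach into the vertical band occupied by a later component, it fails to be disjoint from $\up(C_{i+1})$, forcing step~6(b) into its else-branch, which only rewrites gaps at or above $\low(C_i)$ and hence leaves the tail below $\low(C_m)$ intact.
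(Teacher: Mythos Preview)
Your proposal is correct and follows essentially the same approach as the paper: use \emph{Inv3}$(m)$, observe that the ``$\mathit{state}=\mathsf{blocked}$ at the end of step~6(b)'' sub-case cannot arise for $i=m$, and in each of the two remaining cases match the portion above the cut line via the invariant and the tail below via the fact that it is untouched by the update. You are more explicit than the paper about two points the paper leaves implicit---the degenerate case $m=0$ and the verification that no earlier iteration's step~6(b) replacement reaches below the final cut line---and your reasoning on both is sound.
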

\begin{proof}
	Consider the processing of the last component $C_m$ in $\widetilde{H}_v$.  There are two cases:
	
	\begin{itemize}
		\item $\mathit{state} = \mathsf{null}$ at the beginning of step~6(b), or 
		\item $\mathit{state} = \mathsf{blocked}$ at the beginning of step~6(b), and $\mathit{state} = \mathsf{null}$ at the end of step~6(b).  
	\end{itemize}
	
	In the first case, Inv3$(m)$ implies that $L^m_v[\low(C_m)] = L^*_v[\low(C_m)]$ and, with respect to $L^m_v$, there is a free or shadow gap that is immediately above $\low(C_m)$,  Therefore, gaps in $L^*_v$ are either above or below $\low(C_m)$.  The gaps below $\low(C_m)$ in $L^m_v$ are formed by  triangles that survive the triangulation-update.  These gaps are preserved from $L^m_v$ to $L^*_v$.
	
	In the second case, by Inv3$(m)$, $\check{g}_m$ is a blocked gap and there is a blocked gap $\tilde{g}$ in $L^*_v$ such that $\low(\tilde{g}) = \low(\check{g}_m)$ and $L^m_v[\low(\check{g}_m)] = L^*_v[\low(\check{g}_m)]$.  The gaps below $\low(\check{g}_m)$ in $L^m_v$ are preserved from $L^m_v$ to $L^*_v$ because they are formed by triangles that are unaffected by the triangulation-update.
	
	In summary, we conclude that $L^m_v = L^*_v$.
	%
	%As every blocked gap of $v$ created by the triangulation-update is added to \emph{updated-blocked-gaps}$(v)$, the correctness of the content of \emph{updated-blocked-gaps}$(v)$ follows from the correctness of {\sc Modify}$(v)$.
\end{proof}

\end{document}